\documentclass[a4paper,onecolumn,accepted=2022-08-24]{quantumarticle}
\pdfoutput=1
\usepackage[utf8]{inputenc}
\usepackage[affil-it]{authblk}
\usepackage{amsmath}
\usepackage{amssymb}
\usepackage{pifont}
\usepackage{graphicx}
\usepackage[colorlinks=true, allcolors=blue]{hyperref}
\usepackage{amsthm}
\usepackage[nottoc]{tocbibind}
\usepackage{braket}

\usepackage{tikz} 
\usetikzlibrary{arrows, shapes.gates.logic.US, calc}
\usetikzlibrary{quantikz}

\bibliographystyle{alpha}

\usepackage{verbatim}

\usepackage{hyperref}

\theoremstyle{definition}
\newtheorem{definition}{Definition}[section]
\newtheorem{theorem}{Theorem}[section]
\newtheorem{lemma}{Lemma}[section]

\newcommand{\Es}[1]{\textsc{E}_{#1}}
\newcommand{\supp}{\textsc{Supp}}
\newcommand{\Gen}{\textsc{Gen}}

\newcommand{\Dec}{\textsc{Dec}}
\newcommand{\Rrc}{\textsc{Rrc}}
\newcommand{\sX}{\mathcal{X}}
\newcommand{\sY}{\mathcal{Y}}

\newcommand{\Inv}{\ensuremath{\textsc{Inv}}}
\newcommand{\LWRInv}{\ensuremath{\textsc{LWRInv}}}
\newcommand{\Chk}{\ensuremath{\textsc{Chk}}}

\newcommand{\mZ}{\mathbb{Z}}

\newcommand{\dset}{G}
\newcommand{\TVD}{\operatorname{TVD}}

\newcommand{\biglfloor}{\left\lfloor}
\newcommand{\bigrfloor}{\right\rfloor}
\newcommand{\PrM}{\mathrm{Pr}_\mathrm{M}}

\newcommand{\EA}{\mathbb{E}_\mathbf{A}}

\newcommand{\Transform}{\ensuremath{\textsc{Transform}}}

\newcommand{\pinc}{p_\mathrm{inc}}
\newcommand{\pcor}{p_\mathrm{cor}}
\newcommand{\bpinc}{\bar{p}_\mathrm{inc}}
\newcommand{\bpcor}{\bar{p}_\mathrm{cor}}
\newcommand{\maj}{\mathrm{Maj}}

\newlength{\protowidth}
\newcommand{\pprotocol}[5]{
{\begin{figure*}[#4]
\begin{center}
\setlength{\protowidth}{\textwidth}
        {
        \hrulefill \vspace{5pt}
        \small
        {\quad
        \begin{minipage}{\protowidth}
        \begin{center}
        {\bf #1}
        \end{center}
        #5

        \hrulefill

        \end{minipage}
        \quad}
        }

        \caption{\label{#3} #2}
\end{center}
\vspace{-4ex}
\end{figure*}
} }

\newcommand{\protocol}[4]{
\pprotocol{#1}{#2}{#3}{tbh!}{#4} }

\title{Depth-efficient proofs of quantumness}
\author{Zhenning Liu}
\email{zhenliu@ethz.ch}
\affiliation{Department of Physics, ETH Z{\"u}rich, Switzerland}

\author{Alexandru Gheorghiu}
\email{agheorghiu@ethz.ch}
\affiliation{Institute for Theoretical Studies, ETH Z{\"u}rich, Switzerland}

\begin{document}

\sloppy

\maketitle

\begin{abstract}
A proof of quantumness is a type of challenge-response protocol in which a classical verifier can efficiently certify the \emph{quantum advantage} of an untrusted prover. That is, a quantum prover can correctly answer the verifier's challenges and be accepted, while any polynomial-time classical prover will be rejected with high probability, based on plausible computational assumptions.
To answer the verifier's challenges, existing proofs of quantumness typically require the quantum prover to perform a combination of polynomial-size quantum circuits and measurements.

In this paper, we give two proof of quantumness constructions in which the prover need only perform \emph{constant-depth quantum circuits} (and measurements) together with log-depth classical computation. 
Our first construction is a generic compiler that allows us to translate existing proofs of quantumness into constant quantum depth versions. Our second construction is based around the \emph{learning with rounding} problem, and yields circuits with shorter depth and requiring fewer qubits than the generic construction. In addition, the second construction also has some robustness against noise.
\end{abstract}

\newpage
\tableofcontents
\newpage

\section{Introduction}
Quantum computation is currently in the era of noisy intermediate-scale (NISQ) devices~\cite{preskill2018quantum}. This means that existing devices have a relatively small number of qubits (on the order of 100), perform operations that are subject to noise and are not able to operate fault-tolerantly. As a result, they are limited to running quantum circuits of small depth in order to obtain high fidelity outputs.
Despite these limitations, there have been a number of demonstrations of \emph{quantum computational advantage}~\cite{arute2019quantum, zhong2020quantum, wu2021strong, zhu2021quantum}, i.e. performing a task on a quantum device that cannot be \emph{efficiently} reproduced by classical computers, based on plausible complexity-theoretic assumptions~\cite{aaronson2011computational, harrow2017quantum, Bouland2019}. Indeed, with the best known classical algorithms it takes several days of supercomputing power to match the results of the quantum devices, which required only a few minutes to produce \cite{huang2020classical}.

These milestone results illustrate the impressive capabilities of existing quantum devices and highlight the potential of quantum computation in the near future. Yet, one major challenge still remains: how do we know whether the results from the quantum devices are indeed correct?    
For the existing demonstrations of quantum advantage, verification is achieved using various statistical tests on the output samples from the quantum devices~\cite{arute2019quantum, boixo2018characterizing, zhong2020quantum}. However, performing these tests either involves an exponential-time classical computation or there is no formal guarantee that an efficient classical adversary cannot \emph{spoof} the results of the test~\cite{aaronson2016complexity, aaronson2019classical, popova2021cracking}.

One conceptually simple way to demonstrate quantum advantage, that's also efficiently verifiable, is to ask the quantum computer to factor large composite integers using Shor's algorithm~\cite{shor1994algorithms}. Assuming factoring is classically intractable, this task yields a quantum advantage and is tractable to verify (simply multiply the output factors and check if they produce the number to be factored). However, Shor's algorithm requires fault-tolerant quantum computation to perform and so is not suitable for near-term devices~\cite{gidney2021factor}.

An alternative way of performing efficient tests of quantum advantage was initiated by the work of Brakerski et al.~in~\cite{brakerski2018cryptographic}. There, the authors proposed an interactive protocol between a polynomial-time classical \emph{verifier} and a self-claimed polynomial-time quantum \emph{prover}. The verifier issues a number of challenges to the prover and checks the prover's responses, accepting only when the prover answers the challenges correctly. The defining property of such a protocol is that no polynomial-time classical prover can make the verifier accept with high probability, but there exists a quantum polynomial-time strategy that makes the verifier always accept. This is referred to as a \emph{proof of quantumness} protocol. The protocol of Brakerski et al.~is based around a family of collision-resistant hash functions known as \emph{trapdoor claw-free functions} (TCFs)\footnote{Concurrently, Mahadev showed how TCFs can be used to perform classical verification of polynomial-time quantum computations~\cite{mahadev2018classical}.}. In essence, for the quantum prover to correctly answer the verifier's challenges, one of the things it is required to do is \emph{evaluate these functions in superposition}. With the trapdoor, the verifier is able to check whether the prover performed this evaluation correctly. It can also be shown that for any classical prover to succeed in the protocol, it would effectively have to find collisions for the TCFs. Brakerski et al.~showed that TCFs can be constructed assuming the intractability of the \emph{learning with errors} (LWE) problem~\cite{regev2009lattices}. In effect, this shows that efficient classical provers cannot succeed in the proof of quantumness, unless LWE is classically tractable. 
Subsequent works have also shown that TCFs can be based on other problems assumed to be classically intractable, such as factoring, the discrete logarithm problem or ring learning with errors~\cite{kahanamokumeyer2021classicallyverifiable}. Additionally, TCF-based proofs of quantumness can also be made non-interactive in the random oracle model~\cite{brakerski2020simpler}.
In all of these cases, however, to succeed in the protocol the ideal quantum prover must evaluate the TCFs coherently and this requires, at best, logarithmic quantum depth~\cite{gheorghiu2020estimating}.

It is thus the case that, on the one hand, we have statistical tests of quantum advantage that are suitable for NISQ computations but which either require exponential runtime or do not provide formal guarantees of verifiability. On the other hand, we have proofs of quantumness based on plausible computational assumptions, but that are not suitable for NISQ devices, as they require running deep quantum circuits. Is it possible to bridge the gap between the two approaches?
One step towards that goal would be to construct proofs of quantumness where the prover is only required to perform \emph{constant-depth} quantum circuits (together with short-depth classical circuits). This would also answer an important theoretical question: can one achieve quantum advantage with constant-depth quantum circuits while also being able to classically verify the results in polynomial time? 
This is the main result of our work: we give two proof of quantumness constructions in which the prover's evaluation can be performed in constant quantum depth and logarithmic classical depth. For the purposes of certifying quantum advantage, this leads to highly depth-efficient proofs of quantumness. Both constructions also yield depth-efficient protocols for \emph{certifiable randomness generation}, based on the scheme from~\cite{brakerski2018cryptographic}. The first construction is a generic compiler that can take existing proof of quantumness protocols, based on TCFs, and convert them into constant-depth versions. The second construction uses a specific TCF based on the \emph{learning with rounding} (LWR) problem~\cite{lwroriginal} and achieves circuits of smaller width and with some amount of noise robustness compared to the generic construction.

\subsection{Proofs of quantumness} \label{subsect:intropoq}
To explain our approach, we first need to give a more detailed overview of TCF-based proof of quantumness protocols. As the name suggests, the starting point is trapdoor claw-free functions. A TCF, denoted as $f$, is a type of 2-to-1 one-way function---a function that can be evaluated efficiently (in polynomial time) but which is intractable to invert.
The fact that the function is 2-to-1 means that there are exactly two preimages for each image of the function.
The function also has an associated trapdoor which, when known, allows for efficiently inverting $f(x)$, for any $x$. Finally, ``claw-free'' means that, without knowledge of the trapdoor, it should be intractable to find a pair of preimages, $x_0$, $x_1$, such that $f(x_0) = f(x_1)$. Such a pair is known as a \emph{claw}. 

For many of the protocols developed so far, an additional property is required known as the \emph{adaptive hardcore bit property}, first introduced in~\cite{brakerski2018cryptographic}. Intuitively, this says that for any $x_0$ it should be computationally intractable to find even a single bit of $x_1$, whenever $f(x_0) = f(x_1)$.
As was shown in~\cite{kahanamokumeyer2021classicallyverifiable}, this property is not required in order to construct proof of quantumness protocols, provided one adds an additional round of interaction in the protocol, as will become clear later. We will refer to TCFs having the adaptive hardcore bit property as \emph{strong TCFs}.
More formally, there exists $\lambda_0>0$, such that for any $\lambda>\lambda_0$, known as the \emph{security parameter}, a strong TCF, $f$, is a $2$-to-$1$ function which satisfies the following properties:
\begin{enumerate}
\item \textbf{Efficient generation.} There is a $poly(\lambda)$-time algorithm that can generate a description of $f$ as well as a trapdoor, $t \in \{0, 1\}^{poly(\lambda)}$.
\item \textbf{Efficient evaluation.} There is a $poly(\lambda)$-time algorithm for computing $f(x)$, for any $x \in \{0, 1\}^{\lambda}$.
\item \textbf{Hard to invert.} Any $poly(\lambda)$-time algorithm has \emph{negligible}\footnote{We say that a function $\mu(\lambda)$ is negligible if for any polynomial $p(\lambda)$, it is the case that $\lim_{\lambda \to \infty} p(\lambda) \mu(\lambda) = 0$.} probability to invert $y = f(x)$, for $x$ chosen uniformly at random from $\{0, 1\}^{\lambda}$.
\item \textbf{Trapdoor.} There is a $poly(\lambda)$-time algorithm that, given the trapdoor $t$, can invert $y = f(x)$, for any $x \in \{0, 1\}^{\lambda}$.
\item \textbf{Claw-free.} Any $poly(\lambda)$-time algorithm has negligible probability to find $(y, x_0, x_1)$, such that $y = f(x_0) = f(x_1)$, $x_0 \neq x_1$.
\item \textbf{Adaptive hardcore bit.} Any $poly(\lambda)$-time algorithm succeeds with probability negligibly close to $1/2$ in producing a tuple $(y, x_b, d)$, with $b \in \{0, 1\}$, such that
\begin{equation*}
y = f(x_0) = f(x_1), \quad\quad\quad d \cdot (x_0 \oplus x_1) = 0.
\end{equation*} 
\end{enumerate}
It should be noted that the properties, as stated here, are not independent of each other. For instance, property 6 implies properties 3 and 5 (and 5 also implies 3). We chose to present the properties this way for the sake of clarity.
Without the requirement of an adaptive hardcore bit, we recover the definition of an ordinary or regular TCF. Note that all $poly(\lambda)$-time algorithms mentioned above can be assumed to be classical algorithms.

We now outline the proof of quantumness protocol introduced in~\cite{brakerski2018cryptographic}. The classical verifier fixes a security parameter $\lambda > 0$ and generates a strong TCF, $f$, together with a trapdoor $t$. It then sends $f$ to the prover. The prover is instructed to create the state
\begin{equation} \label{eqn:coherentf}
\frac{1}{2^{\lambda/2}}\sum_{(b,x) \in \{0, 1\} \times \{0, 1\}^{\lambda - 1}} \ket{b, x}\ket{f(b, x)}
\end{equation}
and measure the second register, obtaining the result $y$. 
Note here that the input to the function was partitioned into the bit $b$ and the string $x$, of length $\lambda - 1$.
The string $y$ is sent to the verifier, while the prover keeps the state in the first register,
\begin{equation*} 
\frac{1}{\sqrt{2}}\left( \ket{0, x_0} + \ket{1, x_1} \right)
\end{equation*}
with $f(0, x_0) = f(1, x_1) = y$.
The string $y$ essentially \emph{commits} the prover to its leftover quantum state.

The verifier will now instruct the prover to measure this state in either the computational basis, referred to as the \emph{preimage test} or the Hadamard basis, referred to as the \emph{equation test}, and report the result. 
For the preimage test, the verifier simply checks whether the reported $(b, x_b)$ of the prover satisfies $f(b, x_b) = y$. For the equation test, the prover will report $(b', d) \in \{0, 1\} \times \{0, 1\}^{\lambda - 1}$ and the verifier checks whether
\begin{equation} \label{eq:eqtest}
d \cdot (x_0 \oplus x_1) = b'.
\end{equation}
In this case, the verifier has to use the trapdoor to recover both $x_0$ and $x_1$ from $y$ in order to compute Equation~\ref{eq:eqtest}.

It is clear that a quantum device can always succeed in this protocol by following the steps outlined above. However, the properties of the strong TCF make it so that no polynomial-time classical algorithm can succeed with high probability. At a high level, the reason for this is the following. Suppose a classical polynomial-time algorithm, $\mathcal{A}$, always succeeds in both the preimage test and the equation test. First, run $\mathcal{A}$ in order to produce the string $y$. Then, perform the preimage test with $\mathcal{A}$, resulting in $(b, x_b)$, such that $f(b, x_b) = y$. Since $\mathcal{A}$ is a classical algorithm, it can be \emph{rewound} to the point immediately after reporting $y$ and now instructed to perform the equation test. This will result in the tuple $(b', d)$ such that $d \cdot (x_0 \oplus x_1) = b'$. Importantly, $f(0, x_0) = f(1, x_1) = y$. We therefore have an efficient classical algorithm that yields both a hardcore bit for a claw as well as one of the preimages in the claw. As this contradicts the adaptive hardcore bit property, no such algorithm can exist.

As explained in~\cite{kahanamokumeyer2021classicallyverifiable, brakerski2020simpler, zhu2021demonstration}, the above argument can be made robust so that the success probabilities of any polynomial-time classical strategy in the two tests satisfy the relation
\begin{equation} \label{eq:winrel}
p_{pre} + 2 p_{eq} - 2 \leq \text{negl}(\lambda) 
\end{equation}
where $p_{pre}$ denotes the success probability in the preimage test, $p_{eq}$ is the success probability in the equation test and $\text{negl}(\lambda)$ is a negligible function in the security parameter $\lambda$.

The protocol described above crucially relies on the adaptive hardcore bit property to achieve \emph{soundness} against classical polynomial-time algorithms. Thus far, this property has only been shown for TCFs constructed from LWE~\cite{brakerski2018cryptographic}. It should also be noted that the above protocol is also a scheme for certifiable randomness generation: the bit $b$ obtained in the preimage test can be used as statistical randomness.

Is it possible to construct proof of quantumness protocols based on other computational assumptions than the classical intractability of LWE? Yes, in fact it is not difficult to see that simple proofs of quantumness can be based on the classical intractability of \emph{factoring} or the \emph{discrete logarithm problem} (DLP): ask the prover to solve multiple instances of these problems using Shor's algorithm~\cite{shor1994algorithms}. Since their solutions can be classically verified efficiently and since the problems are assumed to be classically intractable, this immediately yields a proof of quantumness. The issue with doing this is that the prover has to run large instances of Shor's algorithm, which would require a fault-tolerant quantum computer~\cite{gidney2021factor}.
Instead, as was shown in~\cite{kahanamokumeyer2021classicallyverifiable}, one can construct proofs of quantumness based on factoring or DLP, in which the prover can implement smaller circuits than those required for Shor's algorithm. Such protocols would then be more amenable to experimental implementation on near-term devices.

Let us briefly outline the approach in~\cite{kahanamokumeyer2021classicallyverifiable}. The idea is to consider TCFs that need not satisfy the adaptive hardcore bit property. Such TCFs can be constructed from more varied computational assumptions than LWE, including factoring, DLP or the ring-LWE problem~\cite{lyubashevsky2010ideal}. All of these are generally considered to be \emph{standard computational assumptions}. Having such a TCF, the protocol then proceeds in the same way as the one outlined above: the verifier requests that the prover prepare the state in~\ref{eqn:coherentf}, measure the function register obtaining the string $y$ and then send it to the verifier. The prover will be left with the state from~\ref{eqn:coherentf}. As before, the verifier will then instruct the prover to perform either a preimage test or an equation test. The preimage test is unchanged: the prover is asked to measure the state from~\ref{eqn:coherentf} in the computational basis and report back the result.

For the equation test, however, the verifier will first sample a random string $v \in \{0, 1\}^{\lambda}$ and send it to the prover. The prover must then prepare the state
\begin{equation*} 
\frac{1}{\sqrt{2}}\left( \ket{v \cdot x_0} \ket{x_0} + \ket{v \cdot x_1}\ket{x_1} \right)
\end{equation*}
The $x$ register is measured in the Hadamard basis, resulting in the string $d \in \{0, 1\}^{\lambda - 1}$ which is sent to the verifier. Upon receiving $d$, the verifier chooses a random $\phi \in \{ \pi/4, -\pi/4 \}$ and asks the prover to measure its remaining qubit in the rotated basis
\begin{equation*}
\left\{ 
\begin{array}{ll}
\cos \left( \frac{\phi}{2} \right) \ket{0} + & \sin \left( \frac{\phi}{2} \right) \ket{1} \\
\cos \left( \frac{\phi}{2} \right) \ket{1} - & \sin \left( \frac{\phi}{2} \right) \ket{0} \\
\end{array}
\right\}
\end{equation*}
Denoting as $b \in \{0, 1\}$ the prover's response, the verifier uses $d$ and the trapdoor to determine which $b$ is the likely outcome of the measurement and accepts if that matches the prover's response.

The last step in the protocol is reminiscent of the honest quantum strategy in the CHSH game for violating Bell's inequality~\cite{clauser1969proposed}. In fact, much like in the CHSH game, the success probability of any classical prover in this protocol is upper bounded by $0.75 + negl(\lambda)$, whereas a quantum prover can succeed with probability $\cos^2(\pi/8) \approx 0.85$. For this reason, the authors of~\cite{kahanamokumeyer2021classicallyverifiable} refer to the protocol as a \emph{computational Bell test}.

The soundness against classical polynomial-time algorithms follows from a similar rewinding argument to the one outlined for the previous protocol, which used a strong TCF. The main difference is that in this case the verifier introduces an additional challenge for the prover, in the form of the string $v$ and the bit $m$, from the modified equation test. This equation test is still checking for a hardcore bit of a claw, but unlike the previous protocol, the hardcore bit is no longer adaptive. Intuitively, this is because the verifier chooses which hardcore bit to request; a choice encapsulated by $v$ and $m$. For more details, we refer the reader to~\cite{kahanamokumeyer2021classicallyverifiable}.

\subsection{Our results}
In the proofs of quantumness outlined above, the honest quantum prover needs to coherently evaluate a TCF in order to pass the verifier's tests. A first step towards making the protocol depth-efficient would be to make it so that the prover can evaluate the TCF in constant quantum depth.
In fact, all that is required is for the prover to prepare the state from~\ref{eqn:coherentf} in constant depth, since the remaining operations can also be performed in constant depth. 
To that end, we first give a generic construction allowing the prover to prepare the state in~\ref{eqn:coherentf}, in constant depth, for all existing TCFs.
We then consider a second construction with a TCF based on the \emph{learning with rounding} (LWR) problem~\cite{lwroriginal} (a problem that is, for all intents, equivalent to LWE in terms of computational intractability) in which the prover will prepare a state that is essentially equivalent to that in~\ref{eqn:coherentf}. The advantage of this second construction is that the resulting circuits have smaller depth, smaller width (requiring fewer qubits) and have a certain degree of noise robustness, compared to the generic construction.
The first construction is presented in detail in Section~\ref{sect:poq}, while the second is in Section~\ref{sect:phase_enc}.

\subsubsection{First construction - A generic compiler}
We start with the observation from~\cite{gheorghiu2020estimating} that the strong TCFs based on LWE can be evaluated in classical logarithmic depth. In fact this also holds for the TCFs based on factoring, DLP and ring-LWE from~\cite{kahanamokumeyer2021classicallyverifiable}. 
As in~\cite{gheorghiu2020estimating}, one can then construct \emph{randomized encodings} for these TCFs, which can be evaluated by constant depth classical circuits.
A randomized encoding of some function, $f$, is another function, denoted $\hat{f}$, which is \emph{information-theoretically equivalent} to $f$. In other words, $f(x)$ can be uniquely and efficiently decoded from $\hat{f}(x, r)$, for any $x$ and for a uniformly random $r$. In addition, there is an efficient procedure for outputting $\hat{f}(x, r)$, given only $f(x)$. That is to say that $\hat{f}(x, r)$ contains no more information about $f(x)$ than $f(x)$ itself. The formal definition of randomized encodings is given in Subsection~\ref{subsect:re}.
It was shown in~\cite{AIK06} that all functions computable by log-depth circuits admit randomized encodings that can be evaluated in constant depth.
However, this doesn't immediately imply that a quantum prover can coherently evaluate these encodings in constant depth. The reason is that these circuits will typically use gates of \emph{unbounded fan-out}. These are gates that can create arbitrarily-many copies of their output. But the gate set one typically considers for quantum computation has only gates of bounded fan-out (single-qubit rotations and the two-qubit $CNOT$, for instance). How then can the prover evaluate the randomized encoding in constant depth with gates of bounded fan-out?

The key observation is that we do not require the prover to be able to evaluate $\hat{f}$ coherently on an arbitrary input, merely on \emph{a uniform superposition over classical inputs}. One of our main results is then the following:
\begin{theorem} [informal] \label{thm:constdepth}
There is a strategy consisting of alternating constant depth quantum circuits and logarithmic-depth classical circuits for preparing the state:
\begin{equation}
\sum_{x} \ket{x} \ket{\hat{f}(x)},
\end{equation}
up to an isometry, for any $\hat{f}$ that can be evaluated by a constant-depth classical circuit, potentially including unbounded fan-out gates.
\end{theorem}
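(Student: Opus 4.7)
The plan is to reduce the theorem to a single sub-task: coherently simulating a constant-depth classical circuit with unbounded fan-out on a uniform superposition of inputs, using only bounded fan-in/fan-out quantum gates and measurements. The bounded fan-in Boolean operations (AND, OR, NOT) appearing at each layer of $\hat{f}$ can be implemented coherently in constant quantum depth using Toffoli-style gadgets on fresh $\ket{0}$ ancillas executed in parallel, so the only real obstacle is the unbounded quantum fan-out operation $\ket{x}\ket{0}^{\otimes k} \mapsto \ket{x}\ket{x}^{\otimes k}$, which naively requires depth $\Theta(\log k)$ via a CNOT tree.

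I would implement unbounded quantum fan-out in constant quantum depth with the help of logarithmic-depth classical processing as follows. For each fan-out gate in a given layer, first prepare a $(k+1)$-qubit cat state $\frac{1}{\sqrt{2}}(\ket{0}^{\otimes k+1} + \ket{1}^{\otimes k+1})$ as a resource. Then apply the standard teleportation-style gadget: CNOT the input qubit into one leg of the cat, measure the input in the $X$ basis, and apply Pauli-$Z$ corrections to the remaining $k$ legs determined by the outcome. Up to these local corrections, the $k$ output wires now hold a coherent copy of the input bit. All fan-out gates in the same layer can be executed in parallel, and the Pauli corrections, being local to each wire, add only $O(1)$ quantum depth.

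The cat states themselves can be prepared in constant quantum depth plus log-depth classical processing using standard measurement-and-feedforward techniques: prepare a chain of Bell pairs in depth $2$, perform Bell measurements on adjacent qubits in parallel, and apply Pauli corrections to the endpoints determined by a parity prefix over the measurement outcomes, which is a classical $NC^1$ computation of depth $O(\log k)$. Stitching these primitives together, each layer of the circuit for $\hat{f}$ becomes one round of: constant-depth resource-state preparation and measurement; a log-depth classical computation of Pauli corrections; and a constant-depth corrective quantum circuit that also applies the next layer's Boolean gates. Since $\hat{f}$ has constant depth, only $O(1)$ such rounds are required, producing the alternating structure required by the theorem. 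Starting from $\sum_x \ket{x}$ prepared in depth $1$ by a layer of Hadamards, the final joint state is $\sum_x \ket{x}\ket{\hat{f}(x)}$ tensored with ancilla registers whose contents are deterministic functions of the measurement record, which is precisely the meaning of ``up to an isometry''.

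The main obstacle I anticipate is showing that the cat-state fan-out gadget composes correctly in superposition and that the resulting ancillas truly decouple. Concretely, one has to verify that the Pauli corrections depend only on classical measurement outcomes (so that no quantum-conditioned feedback is needed within a single round) and that, after these classical corrections are applied, no residual entanglement remains between the measured ancillas and the data registers. Carrying out this commutation argument layer by layer, and checking that Pauli corrections from one layer can be pushed through the Clifford+Toffoli gadgets of the next without blowing up the quantum depth, is where the bulk of the technical care lies; once that is established, chaining together the $O(1)$ layers of $\hat{f}$ immediately yields the theorem.
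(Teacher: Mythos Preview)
Your proposal is essentially correct but takes a different route from the paper. The paper's key observation is that, because $\hat{f}$ has \emph{constant} depth, one never needs to fan out an intermediate wire at all: any intermediate value can simply be recomputed as many times as needed from fresh copies of the \emph{input} bits. So the paper first converts the unbounded-fan-out circuit for $\hat{f}$ into a bounded-fan-out circuit of the same constant depth whose input is $\bar{x}$ (each bit of $x$ repeated sufficiently many times). Then the only quantum resource needed is a single batch of GHZ states on the input register, prepared once via the poor-man's-cat-state-plus-parity-correction trick; after that, the entire circuit runs as one constant-depth Toffoli/$CNOT$/$X$ block with no further measurements. The ``up to an isometry'' in the paper is exactly the map $\ket{x}\mapsto\ket{\bar{x}}$. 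Your layer-by-layer cat-state teleportation of fan-out also works and is a perfectly valid alternative, but it costs one quantum--classical interleaving \emph{per circuit layer} rather than just one overall; since the paper later counts interleavings precisely (three for BCMVV, four for KMCVY), your scheme would give worse constants.

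Two smaller points. First, the specific gadget you describe --- CNOT the data qubit \emph{into} one leg of the cat, then $X$-measure the data qubit and $Z$-correct the rest --- does not implement fan-out as written; the version that works is CNOT from a cat leg \emph{into} the data qubit, $Z$-measure that qubit, and apply an $X^{\otimes k}$ correction conditioned on the outcome (or the Hadamard-conjugate of this). Second, your worry about pushing Pauli corrections through Toffoli gates is unnecessary in your own scheme: you explicitly apply the corrections \emph{before} the next layer's Boolean gates, so nothing has to be commuted through a non-Clifford gate. The paper sidesteps both issues entirely by never fanning out mid-circuit.
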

To prove this result, we use an idea from the theory of \emph{quantum error-correction}. It is known that cat states (also known as GHZ states) cannot be prepared by a fixed constant-depth quantum circuit~\cite{watts2019exponential}. However, if we can interleave short-depth quantum circuits (and measurements) with classical computation, it is possible to prepare cat states in constant quantum-depth. This is akin to performing corrections in quantum error correction, based on the results of syndrome measurements.

In our case, this works as follows.
First, prepare a \emph{poor man's cat state} in constant depth, as described in~\cite{watts2019exponential}. This is a state of the form
\begin{equation*}
X(w) \; \frac{\ket{0}^{\otimes n} + \ket{1}^{\otimes n}}{\sqrt{2}}
\end{equation*}
where $w$ is a string in $\{0, 1\}^n$ and
\begin{equation*}
X(w) = X^{w_1} \otimes X^{w_2} \otimes ... \otimes X^{w_n},
\end{equation*}
with $X$ denoting the Pauli-$X$ qubit flip operation.
As explained in~\cite{watts2019exponential}, the constant-depth preparation of the poor man's cat state involves a measurement of the parities of neighboring qubits. In other words, the measurement yields the string $z \in \{0, 1\}^{n-1}$, with $z_i = w_{i} \oplus w_{i+1}$, for $i \in [n-1]$. Using a log-depth classical circuit, this parity information can be used to determine either $w$ or its binary complement. One then applies the correction operation $X(w)$ to the poor man's cat state, thus yielding the desired cat state
\begin{equation*}
\frac{\ket{0}^{\otimes n} + \ket{1}^{\otimes n}}{\sqrt{2}}.
\end{equation*}
Having multiple copies of cat states, it is possible to replicate the effect of unbounded fan-out classical gates on a uniform input\footnote{We attribute this idea, of replicating unbounded fan-out with constant-depth quantum circuits and classical measurements, to folklore.}. To see why, consider the following example. Suppose we have a classical AND gate, having fan-out $n$. On inputs $a, b \in \{0, 1\}$, it produces the output $c \in \{0, 1\}^n$, with $c_i = a \land b$, for all $i \in [n]$. To perform the same operation with bounded fan-out gates, it suffices to have $n$ copies of $a$ and $b$. That is, given $\bar{a}, \bar{b} \in \{0, 1\}^n$, with $\bar{a}_i = a$, $\bar{b}_i = b$, for all $i \in [n]$, one can compute $c_i = \bar{a}_i \land \bar{b}_i$ using $n$ parallel AND gates. This is illustrated in Figure~\ref{fig:fanoutb}.

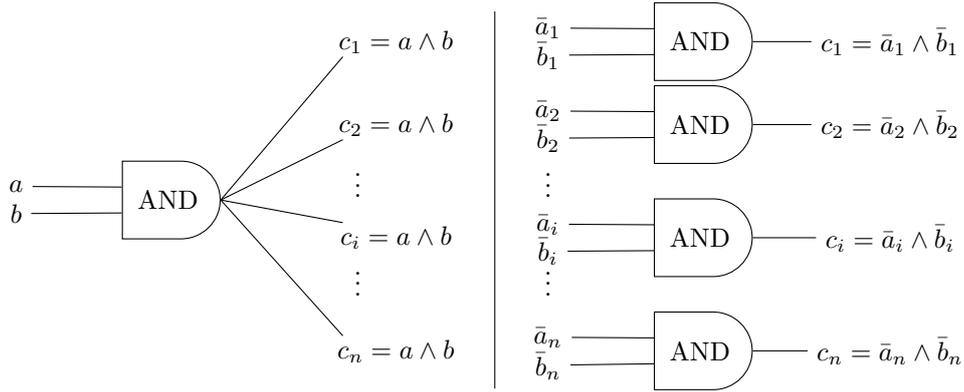
\begin{figure}
    \centering
    
\begin{tikzpicture}
    \node (c1) at (5, 4.1) {$c_1=a\land b$};
    \node (c2) at (5, 3) {$c_2=a\land b$};
    \node (c3d) at (4.5, 2.3) {$\vdots$};
    \node (ci) at (5,1.5) {$c_i=a\land b$};
    \node (c4d) at (4.5, 1) {$\vdots$};
    \node (cm) at (5, 0) {$c_n=a\land b$};
    
    \node[and gate US, draw, rotate=0, logic gate inputs=nn] at (2,2) (and0) {AND};
    
    \node (a) at (0,2.18) {$a$};
    \node (b) at (0,1.82) {$b$};
    
    \draw (a) --  (and0.input 1);
    \draw (b) --  (and0.input 2);
    \draw (and0.output) --  (4.3,3.9);
    \draw (and0.output) --  (4.3,2.8);
    
    \draw (and0.output) --  (4.3,0.2);
     \draw (and0.output) --  (4.3,1.7);
    
    \draw (6.3,-0.5)--(6.3,4.5);

    \node (a1) at (7,4.28) {$\bar{a}_1$};
    \node (b1) at (7,3.92) {$\bar{b}_1$};
    \node[and gate US, draw, rotate=0, logic gate inputs=nn] at (9,4.1) (and1) {AND};
    \draw (a1) -- (and1.input 1);
    \draw (b1) -- (and1.input 2);
    \node (d1) at (11.5, 4.1) {$c_1=\bar{a}_1\land \bar{b}_1$};
    \draw (and1.output) -- (d1);

    \node (a2) at (7,3.18) {$\bar{a}_2$};
    \node (b2) at (7,2.82) {$\bar{b}_2$};
     \node[and gate US, draw, rotate=0, logic gate inputs=nn] at (9,3) (and2) {AND};
    \draw (a2) -- (and2.input 1);
    \draw (b2) -- (and2.input 2);
    \node (d2) at (11.5, 3) {$c_2=\bar{a}_2\land \bar{b}_2$};
    \draw (and2.output) -- (d2);
    
    \node (a3d) at (7,2.3) {$\vdots$};
    \node (ai) at (7,1.68) {$\bar{a}_i$};
    
    \node (bi) at (7,1.32) {$\bar{b}_i$};
    \node (b4d) at (7,1) {$\vdots$};

    \node[and gate US, draw, rotate=0, logic gate inputs=nn] at (9,1.5) (andi) {AND};
    \draw (ai) -- (andi.input 1);
    \draw (bi) -- (andi.input 2);
    \node (di) at (11.5, 1.5) {$c_i=\bar{a}_i\land \bar{b}_i$};
    \draw (andi.output) -- (di);
    
    \node (am) at (7,0.18) {$\bar{a}_n$};
    \node (bm) at (7,-0.19) {$\bar{b}_n$};
     \node[and gate US, draw, rotate=0, logic gate inputs=nn] at (9,0) (andm) {AND};
    \draw (am) -- (andm.input 1);
    \draw (bm) -- (andm.input 2);
    \node (dm) at (11.5, 0) {$c_n=\bar{a}_n\land \bar{b}_n$};
    \draw (andm.output) -- (dm);
\end{tikzpicture}
    \caption{The left-hand side shows an AND gate with fan-out $n$. The right-hand side is its bounded fan-out equivalent. Here $\bar{a}_i=a$ and $\bar{b}_i=b$. Gates of unbounded fan-out can be implemented with bounded fan-out as long as sufficient copies of the inputs are provided.}
    \label{fig:fanoutb}
\end{figure}

In our case, each input qubit to the classical function is of the form $\frac{1}{\sqrt{2}}(\ket{0} + \ket{1})$. Replacing it with $n$ copies is equivalent to using a cat state $\frac{1}{\sqrt{2}}(\ket{\bar{0}} + \ket{\bar{1}})$, where $\ket{\bar{0}} = \ket{0}^{\otimes n}$, $\ket{\bar{1}} = \ket{1}^{\otimes n}$. As mentioned, the prover can prepare cat states in constant depth using the ``measure-and-correct'' trick. It then follows that the prover can also prepare the state
\begin{equation} \label{eqn:ghzeval}
\sum_{x} \ket{\bar{x}} \ket{f(x)}
\end{equation}
where each bit of $x$ is encoded as a cat state having the same number of qubits as the number of input copies required to evaluate $f$ with bounded fan-out gates.

With the ability to prepare the state from~\ref{eqn:coherentf} (or one equivalent to it, such as the one from~\ref{eqn:ghzeval}) in constant quantum depth, the honest prover can then proceed to perform the rest of the steps in the proof of quantumness protocols outlined above. It will measure the image register and report the result to the verifier. The remaining operations can also be performed in constant depth. For the preimage test, the prover simply measures the $x$ register in the computational basis and reports the result. For the equation test, the prover needs to first apply a layer of Hadamard gates to the $x$ register before measuring it in the computational basis. Lastly, for the Bell-type measurement required in the protocol of~\cite{kahanamokumeyer2021classicallyverifiable}, a slightly more involved procedure is used to perform the measurement in constant depth. All of these steps are described in detail in Subsection~\ref{subsect:completeness}. 

While we have outlined a procedure for the prover to perform its operations in constant quantum depth, using a randomized encoding of a TCF, it is not immediately clear if we need to also modify the verifier's operations. Indeed, one question that is raised by this approach is whether a randomized encoding of a TCF preserves all the properties of a TCF. If, for instance, the trapdoor property is not preserved, the verifier would be unable to check the prover's responses in the equation test. Our second result resolves this issue:
\begin{theorem}[informal] \label{thm:tcfre}
A randomized encoding of a (strong) TCF is a (strong) TCF.
\end{theorem}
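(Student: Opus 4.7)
The plan is to verify each of the six defining properties of a (strong) TCF for the randomized encoding $\hat{f}$ in turn, using the corresponding properties of the base function $f$. Throughout I view $\hat{f}$ as a function on the enlarged input space $\{0,1\}^\lambda \times \{0,1\}^{\lambda'}$, where $\lambda'$ bounds the length of the RE randomness, and use the injectivity of $r \mapsto \hat{f}(x,r)$ for each fixed $x$ (a standard feature of the AIK-style encodings used in this paper) to lift the $2$-to-$1$ structure of $f$ onto $\hat{f}$: the two preimages of any image $\hat{y}$ are $(x_0, r_0)$ and $(x_1, r_1)$, with $f(x_0) = f(x_1) = \mathsf{Decode}(\hat{y})$ and $r_0, r_1$ the unique randomness values consistent with $\hat{y}$.

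The easier properties follow almost immediately. Efficient generation and efficient evaluation are part of the definition of a randomized encoding. For hard-to-invert, an inverter for $\hat{f}$ yields one for $f$ via the RE simulator: on a challenge $y$, sample $\hat{y} \gets \mathsf{Sim}(y)$, run the $\hat{f}$-inverter, and return the $x$-component of its response; this is sound because $\mathsf{Sim}(f(x))$ and $\hat{f}(x,\cdot)$ are identically distributed over uniform inputs. For claw-free, a claw $((x_0,r_0),(x_1,r_1))$ of $\hat{f}$ decodes to a claw of $f$, with injectivity in $r$ forcing the $x$-components to differ. For the trapdoor, inversion of $\hat{f}$ on $\hat{y}$ proceeds by decoding to $y$, applying $f$'s trapdoor to recover $x_0, x_1$, and then recovering each $r_i$ via the efficient inverse of $r \mapsto \hat{f}(x_i, r)$.

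The main obstacle is the adaptive hardcore bit. The natural analogue for $\hat{f}$ asks that no PPT algorithm produces $(\hat{y}, (x_b, r_b), d)$ with $d \cdot ((x_0, r_0) \oplus (x_1, r_1)) = 0$ with advantage non-negligibly above $1/2$. Writing $d = (d_x, d_r)$, the constraint splits as $d_x \cdot (x_0 \oplus x_1) = d_r \cdot (r_0 \oplus r_1)$, and the natural reduction -- convert an $\hat{f}$-adversary $\mathcal{A}$ into an $f$-adversary $\mathcal{B}$ that outputs $(y, x_b, d_x)$ -- only works if $\mathcal{B}$ can determine the cross term $d_r \cdot (r_0 \oplus r_1)$ so as to correctly transform $d_x$ into a valid $f$-hardcore bit. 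Since $\mathcal{B}$ sees only $(x_b, r_b)$, it cannot directly recover $r_{1-b}$ (this would require $x_{1-b}$). My plan is to exploit that $\mathcal{B}$ itself chooses the simulator randomness $s$ when forming $\hat{y} \gets \mathsf{Sim}(y,s)$, so the distribution of $(r_0, r_1)$ induced by $\hat{y}$ is effectively under $\mathcal{B}$'s control; with some care, this should let $\mathcal{B}$ compute (or average over) the cross term and convert $d_x$ into a valid hardcore bit for $f$, while the claw-free property of $f$ handles the non-triviality clause $d \neq 0$. Making this rigorous against an arbitrary polynomial-time $\mathcal{A}$ -- in particular, verifying that the re-sampled distribution on $(r_0, r_1)$ does not disrupt $\mathcal{A}$'s success probability -- is where I expect the main technical work to lie.
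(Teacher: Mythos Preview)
Your treatment of generation, evaluation, inversion-hardness, claw-freeness, and the trapdoor matches the paper's: its inverter for $\hat f$ is exactly the composition $\Rrc\circ\Inv_{\mathcal F}\circ\Dec$ you describe, and the $2$-to-$1$ lift is the paper's collision-preservation lemma (your ``injectivity of $r\mapsto\hat f(x,r)$'' is its unique-randomness property, Theorem~\ref{thm:uniquerand}).

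Where you diverge is the adaptive hardcore bit, and there your plan has a structural problem. In that experiment the adversary is handed only the key $k$ and must itself manufacture the whole tuple; no image $y$ is supplied by a challenger. Hence your reduction $\mathcal B$ never receives a $y$ on which to run $\mathsf{Sim}$ and cannot plant a simulated $\hat y$ into $\hat{\mathcal A}$'s view --- $\hat{\mathcal A}$ picks its own $(x_b,r_b)$ and therefore its own $\hat y$. The lever ``$\mathcal B$ controls the simulator randomness, hence the joint law of $(r_0,r_1)$'' simply does not exist in this game; you are transplanting the (correct) simulator trick from the inversion property into an experiment where it has no hook.

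The paper's reduction avoids the simulator entirely. It runs $\hat{\mathcal A}(k)$, obtains $(b,(x_b,r_b),(d_x,d_r),c)$, and has $\mathcal A$ output $\bigl(b,\,x_b,\,d_x,\;c\oplus d_r\!\cdot\!(r_0\oplus r_1)\bigr)$; since $\hat d\cdot(\hat x_0\oplus\hat x_1)=d_x\cdot(x_0\oplus x_1)\oplus d_r\cdot(r_0\oplus r_1)$, membership in $\hat H_k$ translates bit-for-bit to membership in $H_k$ and the advantages coincide. Note that the paper does not spell out how $\mathcal A$ computes the correction bit $d_r\cdot(r_0\oplus r_1)$ from what it actually holds (namely $x_b,r_b,\hat y$); this is the very obstacle you flagged, and it is where randomness reconstruction (Lemma~\ref{lem:rrc}) and the explicit collision structure of the AIK encoding must do the work, not a simulator argument. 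If you revise, drop the $\mathsf{Sim}$-based resampling and instead argue directly that $r_0\oplus r_1$ (or at least the needed parity) is efficiently determined from $(x_b,r_b,\hat y)$.
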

This theorem implies that substituting the TCFs used in proofs of quantumness with randomized encodings will not affect the soundness of those protocols. 
The proof can be found in Subsection~\ref{subsect:soundness}.
A similar result was derived in~\cite{AIK06}, where the authors show that randomized encodings of cryptographic hash functions are also cryptographic hash functions. A (strong) TCF is different, however\footnote{A TCF has exactly two collisions for each image, it has a trapdoor and strong TCFs additionally have the adaptive hardcore bit property. None of these properties are satisfied by generic cryptographic hash functions.}. 
To prove this result, first note that most of the TCF properties follow almost immediately from the definition of a randomized encoding. The more challenging parts concern the existence of a trapdoor and the adaptive hardcore bit property. To show these, we require that the randomized encoding satisfies a property known as \emph{randomness reconstruction}~\cite{AIK06}. This states that whenever there is an efficient procedure to invert the original function, $f$, there should also be an efficient procedure for inverting $\hat{f}$. In particular, this means that given $\hat{f}(x, r)$ it is possible to recover both the input $x$ and the randomness $r$. 
In~\cite{AIK06}, it's mentioned that the randomized encodings used to ``compress'' functions to constant depth do satisfy the randomness reconstruction property, but no proof is given. We provide a proof in Appendix~\ref{app:rrc}.

With the two results of Theorems~\ref{thm:constdepth} and~\ref{thm:tcfre}, we have that any proof of quantumness using a log-depth computable TCF can be compiled to constant quantum depth for the prover. All of the results for this construction are presented in Section~\ref{sect:poq}, and in Subection~\ref{subsect:resources} we give a detailed account of the resources required for the prover to perform this evaluation.

\subsubsection{Second construction - Phase encoding and learning with rounding} \label{subsect:introsecondpoq}
The second solution to the problem comes from an attempt to directly parallelize the coherent evaluation of the TCF based on LWE, hence to implement the protocol in \cite{brakerski2018cryptographic} in constant quantum depth. We start with the observation that the TCF based on LWE contains only mod-$q$ matrix multiplication and mod-$q$ vector addition operations, where $q\in \mathbb{N}$ is the field size. Since the phases of quantum states have the same periodicity property as the ``mod-$q$'' operation, it is natural to consider implementing the mod-$q$ arithmetic with phase $Z$-rotations ($R_z$ and Controlled-$R_z$ gates). In the standard basis, the $R_z$ operation is expressed as
\begin{equation*}
    R_z(\theta) = 
    \left(\begin{matrix}
    e^{-i\frac{\theta}{2}}&0\\
    0&e^{i\frac{\theta}{2}}
    \end{matrix}\right)
\end{equation*}
Note that, for a given cat state, $\ket{\psi} = \frac{1}{\sqrt{2}}( \ket{\bar{0}} + \ket{\bar{1}})$, applying two $R_z$ phase rotations on \emph{distinct qubits} results in the phases being added into the relative phase of the state. Specifically, if we were to rotate qubit $i$ by $\theta_i$ and qubit $j$ by $\theta_j$ we would obtain
\begin{equation*}
    R_z(\theta_i) R_z(\theta_j) \ket{\psi} = \frac{1}{\sqrt{2}}( \ket{\bar{0}} + e^{i(\theta_i + \theta_j)} \ket{\bar{1}})
\end{equation*}
By taking $\theta_i = \frac{2 \pi a}{q}$ and $\theta_j = \frac{2 \pi b}{q}$, with $a, b \in \mathbb{Z}_q$, we can see that the net effect is a state with a relative phase proportional to $(a + b) \mod q$,
\begin{equation}\label{eqn:phaseencodstate}
    \frac{1}{\sqrt{2}}( \ket{\bar{0}} + e^{\frac{2 \pi i (a + b)}{q}} \ket{\bar{1}})
\end{equation}
The key idea is that because these operations commute, \emph{they can be implemented in parallel by acting on distinct qubits}, yielding a constant depth circuit for performing mod-$q$ arithmetic in phase.
We denote the state in Equation~\ref{eqn:phaseencodstate} as $\ket{\phi(a+b)}$\footnote{Strictly speaking the notation will refer to states with a relative phase of $\frac{2 \pi i (a + b)}{q} - \frac{\pi}{2}$, for reasons that will become clear later. Additionally, when using this notation we will always assume the phases are multiples of the $q$'th roots of unity as in the example outlined above.} and refer to it as a \emph{phase encoding of $a+b$}. 
Encoding the values of the LWE-based TCF in phase seems to introduce a problem for the protocol. Recall that in the standard proof of quantumness protocol (outlined in Subsection~\ref{subsect:intropoq}) the prover encodes evaluations of the function $f$ in the computational basis. If these values were instead encoded in phase, how would the prover be able to obtain an evaluation, $y$, of the function?

To overcome this obstacle, we consider a different TCF based on a problem known as learning with rounding (LWR) \cite{lwroriginal, lwr_revisited}. This problem is equivalent to LWE (for most parameter choices) and was already suggested as a candidate for building TCFs in~\cite{brakerski2018cryptographic}. Specifically, denoting now as $f$ an LWR-based TCF, we take
\begin{equation}
\label{eq:lwrfirst}
f(b,x):\{0,1\}\times \mathbb{Z}_q^n \rightarrow \mathbb{Z}_p^m = \lfloor \mathbf{A}x+b \cdot (\mathbf{A}s+e) \rfloor_p
\end{equation}
where $\mathbf{A}\in\mathbb{Z}_q^{m\times n}$, $b\in\{0,1\}$, $x, s \in\mathbb{Z}_q^n$ and $e\in\mathbb{Z}_p^m$ are vectors and $\lfloor \cdot \rfloor_p$ denotes rounding over $p$. By rounding we mean taking the most significant $\log_2 p$ bits of the result\footnote{In fact this is only true when $q=2^n$. In our case $q$ will be prime and so the rounding operation, for some value $\alpha \in \mathbb{Z}_q$, is defined as $\lfloor \frac{p}{q} \cdot \alpha \rfloor$.}. In this case, the result is a vector and the rounding is performed component-wise, so that the output is a vector with entries in $\mathbb{Z}_p$.
Note that all matrix multiplications and additions are performed modulo $q$ with $q\gg p$. Intuitively, for small values of $e$, a typical \emph{claw} of the function should be $(0,x)$ and $(1,x-s)$. This is due to the fact that the rounding operation takes the most significant bits of the output, which are unlikely to be changed when adding a vector $e$ with small entries, component-wise.
We refer the reader to the preliminaries in Section~\ref{sect:prelim} for a more detailed explanation of the function and its parameters.

Returning to the idea of the phase encoding, we can now begin to see the reason for choosing this LWR-based function. Consider for the moment the function before rounding,
\begin{equation*}
    g(b, x) = \mathbf{A}x+b \cdot (\mathbf{A}s+e).
\end{equation*}
Suppose we were to perform a phase encoding of the entries of this function, which we denote as $\ket{\phi(b, x)}$. Now take the $i$'th entry of that encoding, $\ket{\phi_i(b, x)}$ which encodes the $i$'th component of $g(b, x)$, denoted $g_i(b, x)$. It is not difficult to see that if we were to measure $\ket{\phi_i(b, x)}$ in the Hadamard basis (or in this case, measure the operator $XX...X$, as we have a rotated cat state), the outcome is most likely to be the \emph{most significant bit} of $g_i(b, x)$. Similarly, if in the phase encoding we used the $q/2$ roots of unity, instead of the $q$ roots of unity, a Hadamard measurement of the encoding would likely yield the second most significant bit. Repeating this $\log_2 p$ times we have a way of probabilistically recovering the output $f_i(b, x) = \lfloor g_i(b, x) \rfloor_p$. Of course, due to the probabilistic nature of the measurement, the chance that all bits are recovered correctly will be small. To remedy this issue, we use a classical repetition code. In other words, we view each component of $g(b, x)$ as being repeated several times. When the prover eventually performs its measurements to recover $f(b, x)$ it will take a majority vote for each component.
We find that by choosing a suitably large number of repetitions we can make it so that the prover succeeds in evaluating $f(b, x)$ in this way with overwhelming probability.

Our main result is then the following:
\begin{theorem}[informal]
A proof of quantumness protocol, with constant quantum depth and logarithmic depth classical computation, can be constructed based on LWR.
\end{theorem}

To prove this result, we first need to show that the function $f$ indeed satisfies the properties of a strong TCF. The formal proof of this fact can be found in Subsection~\ref{sec:lwr-ntcf}, which is mainly about showing the adaptive hardcore bit property, as all other properties are fairly straightforward. 

We next discuss the protocol itself, which is essentially unchanged from that of~\cite{brakerski2018cryptographic}, except that it uses the LWR-based TCF. Additionally, what changes will be the prover's honest strategy for coherently evaluating this TCF. As mentioned, for this rounding-based function it is possible to coherently evaluate the function in phase, leading to a state that is equivalent (up to an isometry) to
\begin{equation*}
    \sum_{b,x}\ket{b}_\mathsf{B}\ket{x}_\mathsf{X}\ket{\phi(b,x)}_\mathsf{Z}.
\end{equation*}
To ensure that all mod-$q$ operations, required to prepare this state, can be performed in parallel, the cat states that serve as the basis for the phase encoding must have $\Omega(n\log q)$ qubits. Here, $n$ represents the $n$ rows of the matrix $\mathbf{A}$ and since each component is modulo $q$, this also contributes a multiplicative $\log q$ factor. As mentioned, we also need to repeat each component in order to guarantee that measurements of the phase-encoded $\mathsf{Z}$ register yield a valid image with high probability. We find that the number of repetitions must be $\Omega(n^4 \log^2 n)$ to have a small probability of incorrectly decoding from measurement.

Lastly, we show that the state in the preimage registers, $\mathsf{BX}$, has high overlap with a superposition of preimages, as in the standard version of the protocol.
The proof of this fact is based on the observation that while the states $\ket{\phi(b,x)}$ and $\ket{\phi(b',x')}$ are not exactly orthogonal whenever $((b, x), (b', x'))$ does not constitute a claw, they are sufficiently close to orthogonal for \emph{most choices} of the matrix $\mathbf{A}$. More specifically, we can show that if $\mathbf{A}$ is uniformly sampled\footnote{Strictly speaking, $\mathbf{A}$ will not be uniform as one needs to sample a matrix $\mathbf{A}$ for which a trapdoor is known, in order to construct an STCF. However, as explained in~\cite{brakerski2018cryptographic}, the matrix is sampled from a distribution that is statistically close to uniform.} from $\mathbb{Z}_q^{m \times n}$, the overlap between distinct $\ket{\phi(b, x)}$ states decays exponentially in $m$. On the other hand, if $((b, x), (b', x'))$ does form a claw, we can show that the overlap of $\ket{\phi(b,x)}$ and $\ket{\phi(b',x')}$ is negligibly close to $1$. From these facts and the trace-preserving nature of the operations involved, it follows that the state in the preimage register will have high overlap with a superposition of preimages, upon the prover measuring the image register, $\mathsf{Z}$.

An important observation about this construction is that it requires one to perform phase rotations in increments of $\frac{2\pi}{q}$. While such rotation operations are already native to most existing quantum computing architectures, it is also possible to use a constant-size gate set at the expense of making the circuit polynomially wider. This is achieved by approximating the rotation gates to within inverse-polynomial error through the repetition of a fixed set of rotations (see Remark 3.5 in~\cite{quantumfanout}). 

Our second construction is thus an instantiation of the protocol in~\cite{brakerski2018cryptographic} with an LWR-based TCF and having the prover perform a phase-encoded evaluation of that function. The main appeal of this construction is that it is much simpler than the generic construction from the previous section and achieves circuits with fewer qubits. Specifically, as computed in Subsections~\ref{subsect:resources} and~\ref{subsect:resources2}, for a security parameter $\lambda > 0$, the generic construction uses $O(\lambda^{33})$ qubits, whereas the LWR-based one uses $O(\lambda^8 \log^3 \lambda)$.
Additionally, the use of the repetition code and the error-correcting properties of LWR offer the scheme some level of robustness against noise.
For the full details and proofs related to this construction, see Section~\ref{sect:phase_enc}.

\subsection{Related work}
One of the first efficient computational tests of quantum advantage was proposed in~\cite{shepherd2009temporally}, for certifying that a quantum prover can perform \emph{instantaneous quantum polynomial-time computations} (\textsf{IQP}). However, that test was based on a non-standard hardness assumption and it was later shown that there is an efficient classical algorithm which passes the test~\cite{kahanamoku2019forging}.

The first proof of quantumness based on LWE originated with the work of Brakerski et al.~\cite{brakerski2018cryptographic}. This is the proof of quantumness based on a strong TCF outlined in the introduction. As explained there, the protocol also serves as a certifiable random number generator. A subsequent work achieved a non-interactive version of this protocol in the \emph{quantum random-oracle model}~\cite{brakerski2020simpler}. Notably, in that protocol the adaptive hardcore bit property is not required, however the protocol does make use of a hash function (in addition to the TCF) modeled as a random oracle.

The second proof of quantumness we outlined, based on regular TCFs, was introduced in~\cite{kahanamokumeyer2021classicallyverifiable}. There the authors achieve more efficient proofs of quantumness by removing the requirement of the adaptive hardcore bit and using TCFs having a lower circuit complexity compared to the ones based on LWE. However, as mentioned, the cost of doing this is introducing additional rounds of interaction between the verifier and the prover (in the form of the Bell-like measurement of the equation test).

In terms of constant quantum depth constructions, it is interesting to contrast our work to that of~\cite{coudron2021trading}. There, the authors proposed a protocol for certifiable random-number generation with constant depth quantum circuits. The first difference with respect to our work is that~\cite{coudron2021trading} do not base the soundness of their protocol on the classical intractability of some computational problem, such as LWE. Instead, the protocol assumes that the ``prover'' generating the randomness is a circuit of sub-logarithmic depth (showing that sub-logarithmic classical circuits would not succeed in this task). The second difference is that our protocols require interleaving constant depth quantum circuits with logarithmic depth classical computation, whereas the protocol in~\cite{coudron2021trading} only requires the application of a constant depth quantum circuit. Finally, our protocols are interactive, whereas~\cite{coudron2021trading} is not.  

We also mention the independent work of Hirahara and Le Gall that appeared before ours and which also gives a constant-depth proof of quantumness \cite{hirahara2021test}. Similar to our work, they also considered one of the existing proofs of quantumness and made it so that the prover could perform its operations in constant quantum depth and using log-depth classical computations. In their case, they use a technique inspired from measurement-based quantum computing to have the prover perform the coherent evaluation of the strong TCF based on LWE.
Notably, their prover evaluates that function in the computational basis, unlike our LWR-based scheme which performs the evaluation in phase.

Lastly, we also point out the work of H{\o}yer and {\v S}palek showing that a large class of quantum algorithms can be implemented in constant depth with quantum gates of unbounded fan-out~\cite{quantumfanout}. In particular, the quantum subroutine of Shor's algorithm can be performed this way. It should then be possible to use the same trick of reproducing unbounded fan-out with bounded fan-out gates, through measurements and classical corrections, as we did for both our constructions. This would then yield a factoring algorithm that uses only constant depth quantum circuits. There are however two downsides to doing this, compared to our approach. First, the resulting algorithm would use classical circuits of supra-logarithmic depth (see also~\cite{cleve2000fast} for a discussion of this point), in contrast to the logarithmic depth circuits that we obtain~\cite{francois}. Second, the resulting circuits for factoring would be significantly larger compared to the circuits obtained in our constructions.

\subsection{Discussion and open problems}
We've shown how existing proof of quantumness protocols can be made to work with a prover that performs constant-depth quantum computations and log-depth classical computations. Thus, all protocols based on TCFs can be compiled to constant-depth versions using randomized encodings and preparations of cat states.

One potential objection to our result is the practicality of this construction. The prover must not only run constant-depth quantum circuits, but it must do so based on the outcomes of previous measurements or based on instructions from the verifier. This is similar to syndrome measurements and corrections in quantum error-correcting codes and so it might seem as if the prover must have the capability of doing fault-tolerant quantum computations.
In fact this is not the case. For the protocols based on strong TCFs the number of quantum-classical \emph{interleavings} --- that is, the number of alternations between performing a constant depth quantum circuit followed by a log-depth classical circuit --- is exactly three. The first is required for the preparation of cat states. In this case, the prover simply needs to apply $X$ corrections conditioned on the outcomes of certain parity measurements. The prover then evaluates the randomized-encoded TCF and measures one of its registers, sending that result to the verifier. Conditioned on its response it either measures the remaining state in the computational basis or in the Hadamard basis. Similar operations are performed for the LWR-based construction.
The prover, therefore, needs to do only a very restricted type of conditional operations and is only required to do this three times. Furthermore, the protocol is robust and some degree of noise is acceptable, provided Inequality~\ref{eq:winrel} is violated.
When using regular TCFs, in the generic compilation scheme, the protocol requires two additional quantum-classical interleavings, for a total of five. This is due to the Bell-like measurement of that protocol.
In both cases, only a small number of quantum-classical interleavings are required, unlike in a fully fault-tolerant computation where many such interleavings would be required~\cite{fowler2012surface}.  

It would, of course, be desirable to have a single-round proof of quantumness with a constant-depth prover and no quantum-classical interleavings. In other words, a protocol in which the prover has to run a single constant-depth quantum circuit and the verifier is able to efficiently certify that the prover is indeed quantum. Such a result would yield a weak separation between polynomial-time classical computation and constant-depth quantum computation. Basing such a separation on just the classical hardness of LWE seems unlikely\footnote{See the first paragraph of the ``Our results'' subsection in~\cite{brakerski2020simpler}.}. Basing it on the classical intractability of factoring or DLP seems more realistic, as those assumptions already yield a separation between polynomial-time classical computation and logarithmic-depth quantum computation~\cite{cleve2000fast}.
However, it is unclear how to adapt the existing protocols which rely on this commit-and-test approach that requires at least two rounds of interaction.
We leave answering this question as an interesting open problem.

Finally, the computational resources required to implement our constant-depth proofs of quantumness are still too high for existing quantum devices.
In particular, the resulting quantum circuits can be prohibitively wide to be implemented on existing NISQ devices. However, as we've seen, different implementations can lead to very different qubit requirements. Rough estimates show that our generic construction requires $O(\lambda^{33})$ qubits, while the LWR-based one requires $O(\lambda^8 \log^3 \lambda)$. These substantially different estimates give us some hope that further reducing the qubit requirements is possible. Additional optimizations are likely also possible when considering specific values for the security parameter and the choice of TCF. We therefore also leave as an open problem to reduce the width of these constructions so as to make the protocols better suited for use on near-term devices.

\section*{Acknowledgements}
AG is supported by Dr. Max R\"ossler, the Walter Haefner Foundation and the ETH Z\"urich Foundation.

\section{Preliminaries} \label{sect:prelim}

\subsection{Notation and basic concepts}
We let $\mathbb{N}$ denote the set of natural numbers, $\mathbb{Z}$ the set of integers, $\mathbb{Z}_q$ the set of integers modulo $q$, and $\mathbb{R}$ the set of real numbers.
The set $\{0, 1\}^n$ denotes all binary strings of length $n$.
For some binary string $v \in \{0, 1\}^n$, the $i$'th bit of $v$ is denoted $v_i$ (with $1 \leq i \leq n$). 
We denote as $|v|$ the \emph{Hamming weight} of $v$, which is defined as the number of 1's in $v$, or
\begin{equation*}
|v| = \sum_{i=1}^{n} v_i.
\end{equation*} 
The $xor$ of two bits $a$, $b$ is $a \oplus b = a + b \; mod \; 2$. This extends to strings so that for $v, w \in \{0, 1\}^n$, $v \oplus w$ is their bitwise xor. 
The \emph{Hamming distance} of the strings $v$ and $w$ is then defined as:
\begin{equation*}
d_H(v, w) = |v \oplus w|
\end{equation*}
We will also make use of the bitwise inner product of two strings, defined as:
\[
v \cdot w = \sum_{i = 1}^{n} v_i \cdot w_i \; mod \; 2.
\]
For a bit $b \in \{0, 1\}$, we will use $\bar{b}$ to denote a binary string consisting of \emph{copies of} $b$. That is, $\bar{b} = bbb...b$. The number of copies will generally be clear from the context and will otherwise be specified. We also extend this notation to binary strings. For some string $v \in \{0, 1\}^n$, $\bar{v}$ will denote a string in which each bit of $v$ has been repeated. That is, $\bar{v} = v_1 v_1 ... v_1 v_2 ... v_2 v_3 ... v_{n-1} v_n ... v_n$.

For any finite set $X$, we let $x\leftarrow_r X$ denote an element drawn uniformly at random from $X$.
The \emph{total variation distance} between two density functions $f_1, f_2 : X \to [0, 1]$ is $$\TVD(f_1,f_2)=\frac{1}{2}\sum_{x\in X}|f_1(x)-f_2(x)|.$$

For an element $r \in \mathbb{Z}_q$, its unique representative will be $[r]_q \in ( -q/2, q/2 ) \cap \mathbb{Z}$. Following~\cite{brakerski2018cryptographic}, we use the notation $|r| = |[r]_q|$. For any vector $v$ of $n$ components, its $l^2$-norm is defined as
$$
||v||_2 = \sqrt{\sum_{i=1}^{n} |v_i|^2},
$$
and its $l^{\infty}$ norm is
$$
||v||_\infty = \max_i(|v_i|).
$$

The \emph{Hellinger distance} between $f_1$ and $f_2$ is
$$H^2(f_1,f_2)=1-\sum_{x\in X}\sqrt{f_1(x)f_2(x)}.$$

For any discrete probability distribution $p(x)$, its \emph{support} is defined as the set of points where the distribution is positive, $\supp(p(x)) = \{x: \; p(x) > 0 \}$.

For a positive $B\in \mathbb{R}$ and positive integer $q$, the truncated discrete Gaussian distribution over $\mathbb{Z}_q$ with parameter $B$ is supported on $\{x\in\mathbb{Z}_q:\,\|x\|\leq B\}$ and has density 
\begin{equation}\label{eq:d-bounded-def}
 D_{\mZ_q,B}(x) \,=\, \frac{e^{\frac{-\pi\lVert x\rVert^2}{B^2}}}{\sum\limits_{x\in\mZ_q,\, \|x\|\leq B}e^{\frac{-\pi\lVert x\rVert^2}{B^2}}} \;.
\end{equation}

 We let $negl(x)$ denote a \emph{negligible function}. A function $\mu:\mathbb{N}\rightarrow\mathbb{R}$ is negligible if for any positive polynomial $p(x)$ there exists an integer $N > 0$ such that for all $x>N$ it's the case that
$$|\mu(x)|<\frac{1}{p(x)}.$$

We sometimes abbreviate polynomial functions as $poly$.
Throughout the paper, $\lambda$ will denote the \emph{security parameter}.
This will be polynomially-related to the input size of all functions we consider. Consequently, all polynomial and negligible functions will scale in $\lambda$.

Let $\{D_\lambda \}_{\lambda\in\mathbb{N}}$ and $\{E_\lambda \}_{\lambda\in\mathbb{N}}$ be two families of probability distributions defined on $\{0,1\}^\lambda$. They are \emph{computationally indistinguishable} if for every polynomial-time algorithm $\mathcal{A}:\{0,1\}^\lambda \rightarrow \{0,1\}$, it is the case that
\begin{equation*}
    |\Pr_{x\leftarrow D_\lambda}(\mathcal{A}(x)=0) - \Pr_{x\leftarrow E_\lambda}(\mathcal{A}(x)=0)| = negl(\lambda).
\end{equation*}

Letting $g_i\in\mathbb{Z}_q$ with $q \geq 2$, the (mod-$q$) \emph{phase encoding} of $g_i$ is defined as 
\begin{equation}
    \ket{\phi_i} = \frac{1}{\sqrt{2}}(\ket{0}+e^{i\phi_i}\ket{1})
\end{equation}\
where 
\begin{equation}
\phi_i = \frac{2\pi g_i}{q} - \frac{\pi}{2}.
\end{equation}

In terms of quantum information, we follow the usual formalism as outlined, for instance, in~\cite{nielsen2002quantum}.
All Hilbert spaces are finite dimensional. We use sans-serif font to label spaces that correspond to certain quantum registers. For instance, $\mathsf{X}$ will correspond to an $n$-qubit Hilbert space of inputs to a function.
We also extend the bar notation from strings to quantum states. So, for instance $\ket{\bar{0}} = \ket{ 00 ... 0}$. The multi-qubit \emph{cat state} can then be written as $\ket{\psi}=\frac{1}{\sqrt{2}}(\ket{\bar{0}}+\ket{\bar{1}})$.

We now recall some standard notions of classical and quantum computation. For more details, we refer the reader to~\cite{arora2009computational, nielsen2002quantum}. 
\begin{itemize}
\item The notion of computational efficiency will refer to algorithms or circuits that run in polynomial time.
\item We say that an algorithm (or Turing machine) is PPT if it uses randomness and runs in polynomial time. We say it is QPT if it is a quantum algorithm running in polynomial time.
\item All Boolean circuits we consider are comprised of AND, OR, XOR and NOT gates.
\item We say that a classical gate has \emph{bounded fan-out} if the number of output wires is constant (independent of the length of the input to the circuit). Otherwise, we say it has \emph{unbounded fan-out}.
\item For quantum computation we assume the standard circuit formalism with the gate set $\{ R_X, R_Y, R_Z, H, CZ, CNOT, CCNOT \}$ and computational basis measurements. Here, $R_X$, $R_Y$, $R_Z$ denote rotations along the $X$, $Y$ and $Z$ axes of the Bloch sphere. More precisely, $R_W(\theta) = exp(-i \theta W / 2)$, with $W \in \{X, Y, Z\}$, the set of Pauli matrices. The allowed rotation angles can be assumed to be multiples of $\pi/4$.
In addition, $H$ is the Hadamard operation, $CZ$ is a controlled application of a Pauli-$Z$ gate, $CNOT$ is a controlled application of a Pauli-$X$ gate and $CCNOT$ is a doubly-controlled Pauli-$X$ operation, also known as a Toffoli gate. It should be noted that, apart from $CCNOT$, a number of the existing quantum devices can indeed perform all of these gates natively \cite{arute2019quantum, Qiskit, wright2019benchmarking}.
\end{itemize}

We say that a computational problem is \emph{intractable} if there is no polynomial-time algorithm solving that problem. Throughout this paper we are only concerned with computational intractability for PPT algorithms. 
We give a simplified description of some candidate intractable problems of interest:
\begin{itemize}
\item \textbf{Factoring.} Given a composite integer $N$, find its prime-factor decomposition. For the specific case of semiprime $N = p \cdot q$, the task is to find primes $p$ and $q$.
\item \textbf{Discrete logarithm problem (DLP).} For some abelian group $\mathbb{G}$, given $g \in \mathbb{G}$ and $g^k$, with $k > 0$, find $k$.
\item \textbf{Learning with errors (LWE).} Letting $\mathbb{Z}_q$ be the ring of integers modulo $q \geq 2$, given the matrix $\mathbf{A} \in \mathbb{Z}^{m \times n}_q$ and the vector $y = \mathbf{A}s + e$, with $s \in \mathbb{Z}_q^n$ and $e$ sampled from a discrete Gaussian distribution over $\mathbb{Z}_q^m$, find $s$.
\item \textbf{Ring learning with errors (Ring-LWE).} Letting $R_q$ be a quotient ring $R_q = R/qR$, for some (cyclotomic) ring $R$ over the integers, given $m > 0$ pairs $(a_i, y_i)$ with $a_i \in R_q$ and $y_i = a_i \cdot s + e_i$, $i \leq m$, $s \in R_q$ and each $e_i$ sampled independently from a discrete Gaussian distribution over $R_q$, find $s$.
\end{itemize}
LWE and Ring-LWE are also conjectured to be QPT-intractable~\cite{regev2009lattices, lyubashevsky2010ideal}.

\subsection{Learning with rounding (LWR)}

As learning with rounding is the basis for our second proof of quantumness construction, in this subsection we define the problem and state some of its essential properties, taken from~\cite{lwr_revisited}.

\label{sec:def_lwr}
\begin{definition}[Rounding function]
For integers $q\geq p\geq 2$, the $p$-rounding function of an integer $\alpha$ satisfying $0\leq \alpha < q$ is defined as
\begin{equation}
    \lfloor\alpha\rfloor_p : \mathbb{Z}_q \rightarrow \mathbb{Z}_p = \biglfloor \frac{p}{q} \cdot \alpha \bigrfloor.
\end{equation}
\end{definition}
As mentioned in Subsection~\ref{subsect:introsecondpoq}, this rounding operation is equivalent to taking the most significant $\log_2 p$ bits of $\alpha$.

\begin{definition}[The learning with rounding (LWR) assumption \cite{lwr_revisited}]
Suppose $\mathbf{A}\in \mathbb{Z}_q^{m\times n}$, $x\leftarrow_r \mathbb{Z}_q^n$ and $u\leftarrow_r \mathbb{Z}_q^m$, then $(\mathbf{A},\lfloor\mathbf{A}x\rfloor_p)$ and $(\mathbf{A},\lfloor u\rfloor_p)$ are computationally indistinguishable.
\end{definition}
Note that this is the \emph{decision} version of LWR. There is also a \emph{search} version, in analogy to LWE. The search version is: given $(\mathbf{A}, \lfloor \mathbf{A}x \rfloor_p)$, as above, to find $x$. Whenever we refer to the ``learning with rounding problem'' we can use the decision version or the search version interchangeably, as they are equivalent for the parameter choices we use here.

\begin{definition}(Trapdoor one-way functions from LWR \cite{lwr_revisited})
\label{def:usefultools}
\begin{enumerate}
    \item $\Gen(n,m,q)$: an efficient algorithm that receives positive integers $n,m,q$ and samples a matrix $\mathbf{A}\in\mathbb{Z}^{m\times n}_q$ and \emph{trapdoor} $T$ with $\mathbf{A}$ being statistically close to uniform.
    \item $\Inv(T,\mathbf{A},c)$: an efficient algorithm that receives $T,\mathbf{A}$ in the support of $\Gen(n,m,q)$ and $c=\mathbf{A}x+e \in \mathbb{Z}^m_q$ for some $x\in\mathbb{Z}^n_q$ and some error $\| e \|_{\infty}\leq O\left(\frac{q}{\sqrt{n\log_2 q}}\right)$ and outputs $x$. 
    \item $\LWRInv(T,\mathbf{A},c)$: for $(\mathbf{A},T)$ in the support of $\Gen(n,m,q)$ and some $c\in\mathbb{Z}^m_p$ such that $c = \lfloor\mathbf{A}x\rfloor_p$, the function outputs $x$ efficiently.
\end{enumerate}
\end{definition}
\begin{lemma}[Trapdoors for LWR \cite{lwr_revisited}]
\label{lemma:trapdoorLWR}
There exist efficient $\Gen$ and $\LWRInv$ functions for any $n\geq 1$, $q\geq 2$, $m\geq O(n\log q)$ and $p\geq O(\sqrt{mn\log q})$. In particular, $\LWRInv$ is defined as
\begin{equation}
\LWRInv(T,\mathbf{A},c):=\Inv(T,\mathbf{A},\Transform_{q,p}(c))
\end{equation}
where
\begin{equation}
\Transform_{q,p}(c) := \left\lceil \frac{q}{p}\cdot c \right\rceil.
\end{equation}
\end{lemma}

We also note that for the parameter choices we consider throughout this paper, which are essentially the same as the ones in~\cite{brakerski2018cryptographic} (that is, $m$, $n$, $q$, $\|e\|_{\infty}$ as functions of the security parameter), LWE and LWR are computationally equivalent. In other words, there exists a polynomial-time reduction from LWE to LWR and vice-versa. We refer the reader to~\cite{lwroriginal, lwr_revisited} for the details.

\subsection{Proof of quantumness protocols}
\label{sec:poqps}
\subsubsection{Trapdoor claw-free functions}
Most proof of quantumness protocols are based on trapdoor claw-free (TCF) functions or noisy trapdoor claw-free functions (NTCF). We start with definition of a TCF, taken from~\cite{kahanamokumeyer2021classicallyverifiable}. 

\begin{definition}[TCF family \cite{kahanamokumeyer2021classicallyverifiable}]
	\label{def:tcf}

	Let $\lambda$ be a security parameter, $K$ a set of keys, and $X_k$ and $Y_k$ finite sets for each $k \in K$.
	A family of functions
	\[ \mathcal{F} = \{ f_k : X_k \to Y_k \}_{k \in K} \]
	is called a trapdoor claw free (TCF) family if the following conditions hold:

	\begin{enumerate}
		\item \textbf{Efficient Function Generation.} There exists a PPT algorithm $\Gen$ which generates a key $k \in K$ and the associated trapdoor data $t_k$:
		\[(k, t_k) \leftarrow \Gen(1^\lambda)\]
		\item \textbf{Trapdoor Injective Pair.} For all keys $k \in K$, the following conditions hold:
		\begin{enumerate}
			\item Injective pair: Consider the set $R_k$ of all tuples $(x_0, x_1)$ such that $f_k(x_0) = f_k(x_1)$.
			Let $X_k' \subseteq X_k$ be the set of values $x$ which appear in the elements of $R_k$.
			For all $x \in X_k'$, $x$ appears in exactly one element of $R_k$; furthermore, $\lim_{\lambda \to \infty} |X_k '|/|X_k| = 1$.
			\item Trapdoor: There exists a polynomial-time deterministic algorithm INV$_{\mathcal{F}}$ such that for all $y \in Y_k$ and $(x_0, x_1)$ such that $f_k(x_0) = f_k(x_1) = y$, INV$_{\mathcal{F}}(t_k, b, y) = x_b$, with $b \in \{0, 1\}$.
		\end{enumerate}
		\item \textbf{Claw-free.} For any non-uniform probabilistic polynomial time (nu-PPT) classical algorithm $\mathcal{A}$, there exists a negligible function $\mu(\cdot)$ such that
		\[ \Pr \left[ f_k(x_0) = f_k(x_1) \land x_0 \neq x_1 | (x_0, x_1) \leftarrow \mathcal{A}(k) \right] < \mu(\lambda) \]
		where the probability is over both the choice of $k$ and the random coins of $\mathcal{A}$.
		\item \textbf{Efficient Superposition.} There exists a polynomial-size quantum circuit that on input a key $k$ prepares the state
		\[ \frac{1}{\sqrt{|X_k|}} \sum_{x \in X_k} \ket{x} \ket{f_k(x)} \]
	\end{enumerate}
\end{definition}

Next, we define the notion of a noisy TCF, first introduced in~\cite{mahadev2018classical, brakerski2018cryptographic}. These are TCFs for which the efficient superposition is allowed to be approximate, rather than exact. The outputs of these functions are additionally assumed to be distributions over binary strings, rather than just binary strings. NTCFs, as defined in~\cite{brakerski2018cryptographic}, also satisfy a property known as the \emph{adaptive hardcore bit} which is independent of the ``noisy'' aspect of the TCF. As we want to distinguish between TCFs which satisfy this property and those that do not satisfy it, we shall refer to the former as \emph{strong} TCFs and the latter as ordinary TCFs, as per Definition~\ref{def:tcf}.
Thus, the NTCFs we consider will be referred to as strong NTCFs: 

\begin{definition}[Strong NTCF Family \cite{brakerski2018cryptographic}]\label{def:trapdoorclawfree}
Let $\lambda$ be a security parameter. Let $\mathcal{X}$ and $\mathcal{Y}$ be finite sets and $\mathcal{D}_{\mathcal{Y}}$ a collection of distributions over $\mathcal{Y}$.
 Let $\mathcal{K}_{\mathcal{F}}$ be a finite set of keys. A family of functions 
$$\mathcal{F} \,=\, \big\{f_{k,b} : \mathcal{X}\rightarrow \mathcal{D}_{\mathcal{Y}} \big\}_{k\in \mathcal{K}_{\mathcal{F}},b\in\{0,1\}}$$
is called a \textbf{strong noisy trapdoor claw-free (strong NTCF) family} if the following conditions hold:

\begin{enumerate}

\item{\textbf{Efficient Function Generation.}} Same as in Definition~\ref{def:tcf}.

\item{\textbf{Trapdoor Injective Pair.}} Same as in Definition~\ref{def:tcf}.

\item{\textbf{Efficient Range Superposition.}}
For all keys $k\in \mathcal{K}_{\mathcal{F}}$ and $b\in \{0,1\}$ there exists a function $f'_{k,b}:\sX\mapsto \mathcal{D}_{\sY}$ such that
\begin{enumerate} 
\item For all $(x_0,x_1)\in \mathcal{R}_k$ and $y\in \supp(f'_{k,b}(x_b))$, INV$_{\mathcal{F}}(t_k,b,y) = x_b$ and INV$_{\mathcal{F}}(t_k,b\oplus 1,y) = x_{b\oplus 1}$. 
\item There exists an efficient deterministic procedure CHK$_{\mathcal{F}}$ that, on input $k$, $b\in \{0,1\}$, $x\in \sX$ and $y\in \sY$, returns $1$ if  $y\in \supp(f'_{k,b}(x))$ and $0$ otherwise. Note that CHK$_{\mathcal{F}}$ is not provided the trapdoor $t_k$. 
\item For every $k$ and $b\in\{0,1\}$, 
$$ \Es{x\leftarrow_U \sX} \big[\,H^2(f_{k,b}(x),\,f'_{k,b}(x))\,\big] \,\leq\, \mu(\lambda)\;,$$
 for some negligible function $\mu(\cdot)$. Here $H^2$ is the Hellinger distance. Moreover, there exists an efficient procedure  SAMP$_{\mathcal{F}}$ that on input $k$ and $b\in\{0,1\}$ prepares the state
\begin{equation*}
    \frac{1}{\sqrt{|\sX|}}\sum_{x\in \sX,y\in \sY}\sqrt{(f'_{k,b}(x))(y)}\ket{x}\ket{y}\;.
\end{equation*}

\end{enumerate}

\item{\textbf{Adaptive Hardcore Bit.}}
For all keys $k\in \mathcal{K}_{\mathcal{F}}$ the following conditions hold, for some integer $w$ that is a polynomially bounded function of $\lambda$. 
\begin{enumerate}
\item For all $b\in \{0,1\}$ and $x\in \sX$, there exists a set $\dset_{k,b,x}\subseteq \{0,1\}^{w}$ such that $\Pr_{d\leftarrow_U \{0,1\}^w}[d\notin \dset_{k,b,x}]$ is negligible, and moreover there exists an efficient algorithm that checks for membership in $\dset_{k,b,x}$ given $k,b,x$ and the trapdoor $t_k$.
\item If
\begin{eqnarray}\label{eq:defsetsH}
H_k &=& \big\{(b,x_b,d,d\cdot(x_0\oplus x_1))\,|\; b\in \{0,1\},\; (x_0,x_1)\in \mathcal{R}_k,\; d\in \dset_{k,0,x_0}\cap \dset_{k,1,x_1}\big\}\;,\\
\overline{H}_k &=& \{(b,x_b,d,c)\,|\; (b,x,d,c\oplus 1) \in H_k\big\}\;,
\end{eqnarray}
then for any quantum polynomial-time procedure $\mathcal{A}$ there exists a negligible function $\mu(\cdot)$ such that 
\begin{equation}\label{eq:adaptive-hardcore}
\Big|\Pr_{(k,t_k)\leftarrow \textrm{GEN}_{\mathcal{F}}(1^{\lambda})}[\mathcal{A}(k) \in H_k] - \Pr_{(k,t_k)\leftarrow \textrm{GEN}_{\mathcal{F}}(1^{\lambda})}[\mathcal{A}(k) \in\overline{H}_k]\Big| \,\leq\, \mu(\lambda)\;.
\end{equation}
\end{enumerate}
\end{enumerate}
\end{definition}

As a point of clarification, note that a noisy TCF (NTCF) is a TCF with a modified efficient range superposition property. A strong NTCF is a NTCF with the adaptive hardcore bit property. As mentioned, in~\cite{mahadev2018classical, brakerski2018cryptographic} NTCFs are not distinguished from strong NTCFs. 
As an abuse of notation, we will use NTCF and strong NTCF interchangeably.

\subsubsection{The BCMVV protocol}
The first protocol we mention is the one from \cite{brakerski2018cryptographic}, which relies on the adaptive hardcore bit property and so the function family used is NTCF. We outlined the protocol in the introduction, while here we give a step-by-step description of its workings, in Figure \ref{fig:bcmvvprotocol}.

\protocol{BCMVV protocol}{The BCMVV proof of quantumness protocol based on NTCFs \cite{brakerski2018cryptographic}.}{fig:bcmvvprotocol}{
Let $\mathcal{F}$ be an NTCF family of functions. Let $\lambda$ be a security parameter and $N \geq 1$ a number of rounds. The parties taking part in the protocol are a PPT machine, known as the verifier and a QPT machine, known as the prover. They will repeat the following steps $N$ times:
\begin{enumerate}
\item The verifier generates $(k,t_k)\leftarrow \Gen(1^\lambda)$. It sends $k$ to the prover.
\item The prover uses $k$ to run SAMP$_{\mathcal{F}}$ and prepare the state:
\begin{equation*}
    \frac{1}{\sqrt{|\sX|}}\sum_{b \in \{0, 1\}, x\in \sX,y\in \sY}\sqrt{(f'_{k,b}(x))(y)}\ket{b}_{\mathsf{B}}\ket{x}_{\mathsf{X}}\ket{y}_{\mathsf{Y}}\;.
\end{equation*}
It then measures the $\mathsf{Y}$ register, resulting in the string $y\in \{0,1\}^{poly(\lambda)}$ which it sends to the verifier.
\item The verifier selects a uniformly random challenge $c\leftarrow_R \{0,1\}$ and sends $c$ to the prover.
\item \begin{enumerate}
    \item (Preimage test:) When $c=0$, the prover is expected to measure in the standard basis the $\mathsf{BX}$ registers of the state leftover in step 2. It obtains the outcomes $b \in \{0, 1\}$ and $x\in\{0,1\}^n$, with $n(\lambda) = poly(\lambda)$, which it sends to the verifier. If $\Chk_{\mathcal{F}}(k,b,x,y)=0$ the verifier aborts, otherwise it continues.
    \item (Equation test:) When $c=1$, the prover is expected to apply Hadamard gates to each qubit in the $\mathsf{BX}$ registers and measure them in the standard basis (equivalently, measure all qubits in the Hadamard basis). It obtains the outcomes $b' \in \{0, 1\}$ and $d\in\{0,1\}^n$ which it sends to the verifier. The verifier computes $(x_0,x_1) = \Inv_{\mathcal{F}}(t_k,y)$ and rejects if $d \cdot (x_0\oplus x_1) \neq b'$.
\end{enumerate}
\end{enumerate}
At the end of the $N$ rounds, if the verifier has not aborted it accepts.\\
}

\noindent The protocol is complete, in the following sense:
\begin{theorem}[\cite{brakerski2018cryptographic}]
A QPT prover, $\mathcal{P}$, following the honest strategy in the BCMVV protocol is accepted with probability $1 - negl(\lambda)$.
\end{theorem}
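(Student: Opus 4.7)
The plan is to analyze a single round of the protocol, show that the honest prover is accepted with probability $1 - \text{negl}(\lambda)$ conditioned on either challenge $c \in \{0,1\}$, and then union-bound over the $N = poly(\lambda)$ rounds. Fix a round and condition on a key $k$ generated by $\GEN$. The honest prover first runs $\SAMP_\mathcal{F}$ to obtain the state $\frac{1}{\sqrt{2|\sX|}}\sum_{b,x,y} \sqrt{(f'_{k,b}(x))(y)}\,\ket{b}_\mathsf{B}\ket{x}_\mathsf{X}\ket{y}_\mathsf{Y}$ and measures the $\mathsf{Y}$ register to obtain $y$. I would denote the resulting post-measurement state on $\mathsf{BX}$ by $\ket{\psi_y}$; by the efficient range superposition property, the support of $\ket{\psi_y}$ lies in pairs $(b,x)$ with $y \in \supp(f'_{k,b}(x))$.

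For the preimage test ($c = 0$), measuring $\mathsf{BX}$ in the computational basis yields some $(b,x)$ supported by $\ket{\psi_y}$. By property 3(b) of the NTCF definition, $\Chk_\mathcal{F}(k,b,x,y) = 1$ whenever $y \in \supp(f'_{k,b}(x))$, so the verifier never rejects in this branch (conditioned on the honest preparation succeeding). For the equation test ($c = 1$), I would argue in two steps. First, use property 3(a) together with the trapdoor injective pair property: for $x \in X_k'$ (which happens with probability $1 - o(1)$), $\ket{\psi_y}$ is of the form $\alpha_0 \ket{0,x_0} + \alpha_1 \ket{1,x_1}$ where $(x_0,x_1) \in \mathcal{R}_k$ and $\Inv_\mathcal{F}(t_k,y) = (x_0,x_1)$. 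Then replace $\ket{\psi_y}$ by the idealized state $\frac{1}{\sqrt{2}}(\ket{0,x_0}+\ket{1,x_1})$; the two are negligibly close in trace distance because the efficient range superposition property bounds the expected squared Hellinger distance between $f_{k,b}$ and $f'_{k,b}$ by $\mu(\lambda)$, and Hellinger distance upper bounds trace distance on pure states after one-register measurement. Applying $H^{\otimes (1+n)}$ to the idealized state yields $\frac{1}{2^{(n+1)/2}} \sum_{b',d} \tfrac{1}{2}\bigl((-1)^{d\cdot x_0} + (-1)^{b' + d\cdot x_1}\bigr) \ket{b',d}$, which has support exactly on pairs satisfying $b' = d\cdot(x_0 \oplus x_1)$, so the verifier accepts with probability $1$ in the idealized case.

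Combining these two pieces, the total probability that the honest prover fails in one round conditioned on challenge $c=1$ is bounded by the trace-distance approximation error plus $\Pr[x \notin X_k']$, both of which are $\text{negl}(\lambda)$. Averaging over $c \in \{0,1\}$, each round is passed with probability $1 - \text{negl}(\lambda)$. Finally, a union bound over the $N = poly(\lambda)$ rounds shows the overall acceptance probability is $1 - N \cdot \text{negl}(\lambda) = 1 - \text{negl}(\lambda)$, as desired.

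The main obstacle I anticipate is handling the approximation gap between $f_{k,b}$ and $f'_{k,b}$ cleanly: one must pass from the expected Hellinger bound over uniform $x$ (property 3(c)) to a trace-distance bound on the post-measurement state on $\mathsf{BX}$ for a typical $y$. This is a standard conversion (Markov's inequality on the expectation, combined with the inequality $\|\rho - \sigma\|_1 \le \sqrt{2}\, H(\rho,\sigma)$ for purifications), but it is the one place where the ``noisy'' aspect of the NTCF actually enters the argument, so it requires care rather than being a direct application of the definitions.
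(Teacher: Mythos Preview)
The paper does not actually prove this theorem; it is stated with a citation to~\cite{brakerski2018cryptographic} and no proof is given. Your sketch follows the standard completeness argument from that reference and is essentially correct: the preimage test succeeds by property 3(b), the equation test succeeds on the idealized claw state by the Hadamard computation you wrote out, and the gap between the idealized and actual state is controlled by the Hellinger bound in property 3(c).

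One small point to watch: you write that $x \in X_k'$ with probability $1 - o(1)$, matching the injective-pair clause as stated in Definition~\ref{def:tcf}, but the theorem claims acceptance $1 - \text{negl}(\lambda)$. The definition as written only guarantees $|X_k'|/|X_k| \to 1$, not negligible defect; in the actual LWE-based construction of~\cite{brakerski2018cryptographic} the defect is indeed negligible, so this is a mismatch between the simplified definition reproduced here and what is needed (and what holds) in the source. Your identification of the Hellinger-to-trace-distance conversion as the one nontrivial step is accurate.
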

\noindent The soundness of the protocol against classical provers follows from the following theorem:
\begin{theorem}[\cite{brakerski2018cryptographic, zhu2021demonstration}]
For any PPT prover, $\mathcal{P}$, in the BCMVV protocol, it is the case that
\begin{equation}\label{ineq:winntcf}
  p_{\mathrm{pre}}+2p_{\mathrm{eq}}-2\leq negl(\lambda)  
\end{equation}
where $p_{\mathrm{pre}}$ is $\mathcal{P}$'s success probability in the preimage test and $p_{\mathrm{eq}}$ is $\mathcal{P}$'s success probability in the equation test.
\end{theorem}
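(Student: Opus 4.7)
The plan is to turn any PPT prover $\mathcal{P}$ achieving success probabilities $p_{pre}$ and $p_{eq}$ into an efficient adversary $\mathcal{A}$ against the adaptive hardcore bit property of $\mathcal{F}$, by exploiting the fact that a classical prover can be rewound to the moment immediately after it commits to $y$. Running $\mathcal{P}$ once to obtain $y$, then rerunning it twice from that snapshot — once on challenge $c=0$ and once on $c=1$ — yields both a preimage response $(b,x_b)$ and an equation response $(b',d)$ from the same $y$. Since each test is passed with probability $p_{pre}$ and $p_{eq}$ respectively, a union bound gives that the two tests are jointly passed with probability at least $p_{pre}+p_{eq}-1$.

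Concretely, I would define $\mathcal{A}$ as follows: on input $k$, sample fresh random coins for $\mathcal{P}$, simulate $\mathcal{P}(k)$ up through step 2 to obtain $y$ and a snapshot of its internal state, then execute one copy forward on $c=0$ to obtain $(b,x_b)$ and another copy forward on $c=1$ to obtain $(b',d)$, and finally output the tuple $(b,x_b,d,b')$. This $\mathcal{A}$ runs in polynomial time and produces syntactically valid outputs for the adaptive hardcore bit game.

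Next I would analyze $\Pr[\mathcal{A}(k)\in H_k]$ and $\Pr[\mathcal{A}(k)\in \overline{H}_k]$. On the event $E_{pre}\cap E_{eq}$ that both tests succeed, the trapdoor-injective-pair property forces $x_b$ to be one of the two claw preimages of $y$, and the equation test passing is exactly the condition $b'=d\cdot(x_0\oplus x_1)$; together with the fact that $d\in G_{k,0,x_0}\cap G_{k,1,x_1}$ holds with overwhelming probability for uniformly random $d$, this places the output in $H_k$ up to a negligible loss. By the same argument, on $E_{pre}\cap\overline{E_{eq}}$ the output lies in $\overline{H}_k$ up to a negligible loss. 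Therefore
\begin{equation*}
\Pr[\mathcal{A}(k)\in H_k]\;\geq\;\Pr[E_{pre}\cap E_{eq}] - \mathrm{negl}(\lambda)\;\geq\;p_{pre}+p_{eq}-1-\mathrm{negl}(\lambda),
\end{equation*}
while
\begin{equation*}
\Pr[\mathcal{A}(k)\in\overline{H}_k]\;\leq\;\Pr[E_{pre}\cap\overline{E_{eq}}] + \mathrm{negl}(\lambda)\;\leq\; p_{pre}-(p_{pre}+p_{eq}-1)+\mathrm{negl}(\lambda)\;=\;1-p_{eq}+\mathrm{negl}(\lambda).
\end{equation*}
Subtracting and invoking the adaptive hardcore bit inequality from Definition \ref{def:trapdoorclawfree}, which gives $|\Pr[H_k]-\Pr[\overline{H}_k]|\leq\mathrm{negl}(\lambda)$, yields $p_{pre}+2p_{eq}-2\leq\mathrm{negl}(\lambda)$ as claimed.

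The main obstacle is not the rewinding, which is essentially free for a classical $\mathcal{P}$ once its random tape and $y$ are fixed, but rather the careful bookkeeping around the good set $G_{k,b,x}$ and around the gap between the verifier's check $\Chk_{\mathcal{F}}$ (which tests $y\in\supp(f'_{k,b}(x_b))$) and the formal membership criterion for $H_k$ (which talks about actual claw tuples $(x_0,x_1)\in \mathcal{R}_k$ and requires $d$ in the good set). These are reconciled via the trapdoor-injective-pair and efficient-range-superposition properties, but each step in the reduction absorbs a small negligible term that must be tracked; combined with the Hellinger-distance clause in the efficient range superposition definition, one has to verify that all such terms remain negligible after summing, which is straightforward but the place where a careful write-up is needed.
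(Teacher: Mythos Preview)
The paper does not actually prove this theorem: it is stated with a citation to \cite{brakerski2018cryptographic, zhu2021demonstration} and no proof is given. The only argument the paper provides is the informal sketch in the introduction, which is precisely the rewinding argument you describe --- run $\mathcal{P}$ to obtain $y$, rewind, answer both challenges from the same committed $y$, and feed the resulting $(b,x_b,d,b')$ into the adaptive hardcore bit game. Your write-up is a correct fleshing-out of that sketch, including the union bound $\Pr[E_{pre}\cap E_{eq}]\geq p_{pre}+p_{eq}-1$ and the subtraction that produces $p_{pre}+2p_{eq}-2$.

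One point worth flagging: you assert that ``$d\in G_{k,0,x_0}\cap G_{k,1,x_1}$ holds with overwhelming probability for uniformly random $d$'', but the prover's $d$ is not a priori uniform --- it is whatever $\mathcal{P}$ chooses to output. As written in this paper's protocol description, the verifier only checks $d\cdot(x_0\oplus x_1)=b'$ and does not test membership in $G$, so a malicious $\mathcal{P}$ could in principle output $d\notin G$ and still pass the equation test, which would place $\mathcal{A}$'s output outside both $H_k$ and $\overline{H}_k$ and break your lower bound on $\Pr[\mathcal{A}(k)\in H_k]$. The standard fix (used in \cite{brakerski2018cryptographic}) is that the verifier, holding the trapdoor, additionally rejects when $d\notin G_{k,0,x_0}\cap G_{k,1,x_1}$; this costs only a negligible amount of completeness and makes $E_{eq}$ imply $d\in G$. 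You correctly identify this neighborhood as the place requiring care in your final paragraph, but the resolution is this extra verifier check rather than anything about the distribution of $d$.
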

Thus, in any run of the protocol, as long as Inequality~\ref{ineq:winntcf} is violated, we conclude that the prover is quantum.

One known instantiation of the BCMVV protocol, as is described by \cite{brakerski2018cryptographic}, is based on the LWE problem. The LWE-based construction is currently the only known instance of a strong NTCF family of functions.

\subsubsection{The KMCVY protocol}
The BCMVV protocol relies on the adaptive hardcore bit property of NTCFs in order to be sound.
However, this property is only known to be true for NTCFs based on LWE. The authors of~\cite{kahanamokumeyer2021classicallyverifiable} addressed this fact by introducing a proof of quantumness protocol that can use any TCF.
As mentioned in the introduction, their protocol is a sort of computational Bell test. We outline it in Figure \ref{fig:compBellprotocol}.

\protocol{KMCVY protocol}{The KMCVY proof of quantumness protocol based on TCFs \cite{kahanamokumeyer2021classicallyverifiable}.}{fig:compBellprotocol}{
Let $\mathcal{F}$ be a TCF family of functions. Let $\lambda$ be a security parameter, $N \geq 1$ a number of rounds and $T = 1/poly(\lambda)$ a threshold parameter. The parties taking part in the protocol are a PPT machine, known as the verifier and a QPT machine, known as the prover. 
Before interacting with the prover, the verifier initializes two counters $N_s = 0$, $N_t = 0$.
The two will then repeat the following steps $N$ times:
\begin{enumerate}
\item The verifier generates $(k,t_k)\leftarrow \Gen(1^\lambda)$. It sends $k$ to the prover.
\item The prover uses $k$ to prepare the state:
\begin{equation*}
\frac{1}{\sqrt{|X_k|}} \sum_{x \in X_k} \ket{x}_{\mathsf{X}} \ket{f_k(x)}_{\mathsf{Y}}
\end{equation*}
It then measures the $\mathsf{Y}$ register, resulting in the string $y\in \{0,1\}^{poly(\lambda)}$ which it sends to the verifier.
\item The verifier selects a uniformly random challenge $c\leftarrow_R \{0,1\}$ and sends $c$ to the prover.
\item \begin{enumerate}
    \item (Preimage test:) When $c=0$, the prover is expected to measure in the standard basis the $\mathsf{X}$ register of the state leftover in step 2. It obtains the outcome $x\in\{0,1\}^n$, with $n(\lambda) = poly(\lambda)$, which it sends to the verifier. If $f_k(x) \neq y$ the verifier aborts, otherwise it continues.
    \item (Computational Bell test:) When $c=1$,
    \begin{enumerate}
    \item The verifier sends a random bitstring $v \leftarrow_R \{0,1\}^n$ to the prover.
    \item The prover creates the state
\begin{equation*}
\frac{1}{\sqrt{2}}\left( \ket{v \cdot x_0}_{\mathsf{A}} \ket{x_0}_{\mathsf{X}} + \ket{v \cdot x_1}_{\mathsf{A}} \ket{x_1}_{\mathsf{X}} \right)
\end{equation*}
  with $f_k(x_0) = f_k(x_1) = y$.
    \item The prover applies Hadamard gates to all qubits in the $\mathsf{X}$ register and measures them in the standard basis. The measurement outcome is denoted $d\in\{0,1\}^{n}$ and is sent to the verifier. 
	\item The verifier computes $(x_0,x_1) = \Inv_{\mathcal{F}}(t_k,y)$. Together with $d$, the verifier can determine the current state $\ket{\gamma}_{\mathsf{A}} \in \{\ket{0},\ket{1},\ket{+},\ket{-}\}$ in the prover's $\mathsf{A}$ register. It then chooses a random $\phi \in \{\pi/4,-\pi/4\}$ and sends it to the prover.
    \item The prover is expected to measure the qubit in the $\mathsf{A}$ register in the basis:
    \begin{equation*}
\left\{ 
\begin{array}{ll}
\cos \left( \frac{\phi}{2} \right) \ket{0} + & \sin \left( \frac{\phi}{2} \right) \ket{1} \\
\cos \left( \frac{\phi}{2} \right) \ket{1} - & \sin \left( \frac{\phi}{2} \right) \ket{0} \\
\end{array}
\right\}\;.
\end{equation*}
    \item The verifier sets $N_s \leftarrow N_s +1$ if the measurement outcome was the likely one.
    \end{enumerate}
\end{enumerate}
\end{enumerate}
If the verifier has not aborted, it will accept if $\frac{N_s}{N_t} - 0.75 \geq T$.\\
}

\noindent The protocol is complete, in the following sense:
\begin{theorem}[\cite{kahanamokumeyer2021classicallyverifiable}]
A QPT prover, $\mathcal{P}$, following the honest strategy in the KMCVY protocol is accepted with probability $1 - negl(\lambda)$.
\end{theorem}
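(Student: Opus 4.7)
The plan is to handle the two tests separately and then combine the per-round success probabilities with a concentration bound.

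For the preimage test ($c=0$), after step 2 the prover's $\mathsf{X}$ register is in the state $\tfrac{1}{\sqrt{2}}(\ket{x_0} + \ket{x_1})$, where $f_k(x_0) = f_k(x_1) = y$. A computational-basis measurement returns either $x_0$ or $x_1$, each of which satisfies $f_k(x) = y$, so the check in step 4(a) never fires and the verifier never aborts on an honest run.

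For the Bell test ($c=1$), I would first verify that the Hadamard measurement of $\mathsf{X}$ collapses $\mathsf{A}$ to one of the four cardinal states $\{\ket{0}, \ket{1}, \ket{+}, \ket{-}\}$, which is what lets the verifier predict the ``likely'' outcome using the trapdoor. Starting from $\tfrac{1}{\sqrt{2}}(\ket{v\cdot x_0}_{\mathsf{A}}\ket{x_0}_{\mathsf{X}} + \ket{v\cdot x_1}_{\mathsf{A}}\ket{x_1}_{\mathsf{X}})$, applying $H^{\otimes n}$ to $\mathsf{X}$ and conditioning on outcome $d$ leaves $\mathsf{A}$ proportional to $\ket{v\cdot x_0} + (-1)^{d\cdot(x_0 \oplus x_1)}\ket{v\cdot x_1}$. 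When $v\cdot x_0 = v\cdot x_1$ this is a computational-basis state; when $v\cdot x_0 \neq v\cdot x_1$ it is $\ket{+}$ or $\ket{-}$ according to the sign $(-1)^{d\cdot(x_0 \oplus x_1)}$. The verifier, who recovers $(x_0, x_1)$ via $\Inv_{\mathcal{F}}$, knows exactly which cardinal state $\ket{\gamma}$ is held. A direct computation then shows that for any cardinal state and either choice of $\phi \in \{\pi/4, -\pi/4\}$, the more likely of the two outcomes in the $\phi$-rotated basis is obtained with probability exactly $\cos^2(\pi/8) \approx 0.854$.

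Since Bell-test rounds are independent Bernoulli trials with bias $\cos^2(\pi/8)$, and the uniformly random challenge bit guarantees $N_t = \Theta(N)$ with overwhelming probability (via a preliminary Chernoff bound on the choices of $c$), a Hoeffding bound yields
\begin{equation*}
\Pr\!\left[\tfrac{N_s}{N_t} < \cos^2(\pi/8) - \varepsilon\right] \leq \exp\!\left(-\Omega(N \varepsilon^2)\right).
\end{equation*}
Setting $\varepsilon = \cos^2(\pi/8) - 0.75 - T$, which is a strictly positive $1/poly(\lambda)$ quantity for any admissible threshold $T$, and taking $N = poly(\lambda)$ large enough that $N\varepsilon^2 = \omega(\log \lambda)$, the honest prover meets the condition $N_s/N_t - 0.75 \geq T$ with probability $1 - negl(\lambda)$.

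The only genuinely nontrivial step is the Bell-test leftover-state calculation, together with the CHSH-style symmetry observation that the success probability $\cos^2(\pi/8)$ is identical across all four possible states in $\mathsf{A}$ and for either verifier angle. Once this per-round probability is established, it is independent of $(v, d, x_0, x_1)$ and the concentration argument proceeds in a standard fashion.
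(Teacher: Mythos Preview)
The paper does not prove this theorem; it is stated as a citation of~\cite{kahanamokumeyer2021classicallyverifiable} with no argument given, so there is no ``paper's own proof'' to compare against. Your sketch is therefore doing work the present paper outsources entirely.

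That said, your argument is essentially correct, with two minor points worth tightening. First, the TCF definition (Definition~\ref{def:tcf}) only guarantees that $|X_k'|/|X_k| \to 1$, so with some vanishing probability the measured $y$ has a unique preimage rather than a claw; in that event your Bell-test state analysis does not apply. This is harmless since the event has negligible mass, but it should be mentioned and absorbed into the final $negl(\lambda)$ term. Second, your claim that $\varepsilon = \cos^2(\pi/8) - 0.75 - T$ is ``a strictly positive $1/poly(\lambda)$ quantity'' understates the situation: since $T = 1/poly(\lambda)$ and $\cos^2(\pi/8) - 0.75 \approx 0.104$ is a fixed constant, $\varepsilon$ is in fact bounded below by a positive constant for all sufficiently large $\lambda$. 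This only strengthens your Hoeffding bound (you need merely $N = \omega(\log \lambda)$, not a carefully tuned polynomial), but the phrasing is misleading. With these two adjustments the proof goes through.
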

\noindent The soundness of the protocol against classical provers follows from the following theorem:
\begin{theorem}[\cite{kahanamokumeyer2021classicallyverifiable}]
For any PPT prover, $\mathcal{P}$, in the KMCVY protocol, it is the case that
\begin{equation}\label{ineq:wintcf}
  p_{\mathrm{pre}}+4p_{\mathrm{Bell}}-2\leq negl(\lambda)  
\end{equation}
where $p_{\mathrm{pre}}$ is $\mathcal{P}$'s success probability in the preimage test and $p_{\mathrm{Bell}}$ is $\mathcal{P}$'s success probability in the computational Bell test.
\end{theorem}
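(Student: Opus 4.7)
The plan is to prove soundness by a rewinding reduction to claw-freeness, with a CHSH-style extraction playing the role that the adaptive hardcore bit plays in the BCMVV analysis. Suppose for contradiction that a PPT prover $\mathcal{P}$ achieves $p_{\mathrm{pre}} + 4 p_{\mathrm{Bell}} - 2 = \varepsilon(\lambda)$ for some non-negligible $\varepsilon$. I will construct a PPT algorithm $\mathcal{B}$ which, on input $k \leftarrow \Gen(1^\lambda)$, outputs a claw $(x_0,x_1)$ with non-negligible probability, contradicting property 3 of Definition~\ref{def:tcf}.

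Because $\mathcal{P}$ is classical, $\mathcal{B}$ runs it once while recording its random tape and then rewinds it to any earlier message. Concretely, $\mathcal{B}$ first runs $\mathcal{P}(k)$ and stores the commitment $y$; it then rewinds $\mathcal{P}$ to just after $y$, replies with $c=0$, and with probability $p_{\mathrm{pre}}$ recovers a preimage $x_b$ with $f_k(x_b)=y$. Next it rewinds again, replies with $c=1$ and a uniform $v\in\{0,1\}^n$, receives the commitment $d$, and then rewinds a third time from just after $d$ in order to query $\mathcal{P}$ on both $\phi=\pi/4$ and $\phi=-\pi/4$, obtaining two deterministic response bits $g_+$ and $g_-$. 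Letting $m(v)=v\cdot(x_0\oplus x_1)$ and $a(v)=v\cdot x_b$ (which $\mathcal{B}$ can compute from $x_b$), the honest analysis in~\cite{kahanamokumeyer2021classicallyverifiable} identifies the verifier's Bell-test acceptance predicate as a CHSH-type function of $(a(v),m(v),\phi,g_\phi)$. A CHSH rearrangement then turns the assumed violation $p_{\mathrm{pre}}+4p_{\mathrm{Bell}}\ge 2+\varepsilon$ into the statement that an efficiently computable function $\psi(v,d,g_+,g_-,a(v))$ agrees with $m(v)$ with probability at least $1/2+\Omega(\varepsilon)$ over uniform $v$ and the internal coins of $\mathcal{P}$.

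Thus $\mathcal{B}$ has non-negligible advantage in predicting inner products $v\cdot s$ for $s:=x_0\oplus x_1$, and Goldreich--Levin list decoding recovers $s$ from the predictor $\psi$ with non-negligible probability. $\mathcal{B}$ then outputs $(x_b, x_b\oplus s)$ after checking $f_k(x_b\oplus s)=y$ and $s\neq 0$, yielding a claw. The main obstacle I expect is the quantitative CHSH extraction in the middle step: converting the hypothesis $p_{\mathrm{pre}}+4p_{\mathrm{Bell}}-2\ge\varepsilon$, for an arbitrary PPT strategy whose choice of $d$ may depend adversarially on $v$, into a clean per-instance bias on $m(v)$ that is uniform in $v$ and thus usable by Goldreich--Levin. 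The cleanest route is to first condition on a ``good'' event over the joint distribution of $y$ and $d$ on which both the preimage extraction and the Bell prediction succeed with non-negligible probability, and only then average over $v$; once this is done, the rest of the reduction is a direct adaptation of the BCMVV soundness proof.
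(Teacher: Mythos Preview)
The paper does not give its own proof of this theorem; it is quoted directly from \cite{kahanamokumeyer2021classicallyverifiable}, with only a brief informal remark in the introduction that ``a similar rewinding argument'' to the BCMVV case applies. Your proposal is exactly the argument of \cite{kahanamokumeyer2021classicallyverifiable}: fix the classical prover's random tape, rewind after the commitment $y$ to extract a preimage $x_b$ on the $c{=}0$ branch, rewind again on the $c{=}1$ branch to obtain the prover's answers for both values of $\phi$, use the CHSH structure of the acceptance predicate to convert the assumed excess $p_{\mathrm{pre}}+4p_{\mathrm{Bell}}-2\ge\varepsilon$ into a non-negligible predictor for $v\cdot(x_0\oplus x_1)$ over uniform $v$, and finally invoke Goldreich--Levin to recover $x_0\oplus x_1$ and output the claw $(x_b,x_b\oplus(x_0\oplus x_1))$. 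You have also correctly flagged the one genuinely delicate point---making the bias uniform in $v$ despite the fact that $d$ may depend on $v$---and the fix you propose (pass to a good event on $(y,\text{coins})$ before averaging over $v$) is the one used in the original proof.
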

Thus, in any run of the protocol, as long as Inequality~\ref{ineq:wintcf} is violated, we conclude that the prover is quantum.

In~\cite{kahanamokumeyer2021classicallyverifiable}, the authors provide the following candidate TCFs: 
\begin{itemize}
\item Rabin's function, or $x^2 \; mod \; n$. The TCF properties are based on the computational intractability of factoring.
\item A Diffie-Hellman-based function. The TCF properties are based on the computational intractability of DLP.
\item A ring-LWE-based function. The TCF properties are based on the computational intractability of ring-LWE.
\end{itemize}
Of course, the NTCF family based on LWE can also be used.

\subsection{Randomized encodings} \label{subsect:re}
Randomized encodings (also known as \emph{garbled circuits} \cite{yao1986generate}) are probabilistic encodings of functions that are information-theoretically equivalent to the functions they encode.
The idea of constructing randomized encodings which can be evaluated in constant depth originated with~\cite{AIK06}. We restate here the essential definitions and results from that paper.

\begin{definition}[Randomized encoding \cite{AIK06}]
Let $f:\{0,1\}^n\rightarrow\{0,1\}^l$ be a function and $r\leftarrow_{R}\{0,1\}^m$ be $m$ bits sampled uniformly at random from $\{0,1\}^m$. We say that a function $\hat{f}:\{0,1\}^n\times\{0,1\}^m\rightarrow\{0,1\}^s$ is a $\delta$-correct, $\epsilon$-private randomized encoding of $f$ if it satisfies the following properties.
\begin{itemize}
\item Efficient generation. There exists a deterministic polynomial-time algorithm that, given a description of the circuit implementing $f$, outputs a description of a circuit for implementing $\hat{f}$.
  \item $\delta$-correctness. There exists a deterministic polynomial-time algorithm $\Dec$, called a decoder such that for every input $x\in\{0,1\}^n$, $\underset{{r\leftarrow_{R}}\{0,1\}^m}{\Pr}[\Dec(\hat{f}(x,r))\neq f(x)] \leq \delta$.
  \item $\epsilon$-privacy. There exists a PPT algorithm $S$, called a simulator, such that for every $x\in\{0,1\}^n$, $\TVD(S(f(x)), \hat{f}(x,r))\leq\epsilon$.
\end{itemize}
\end{definition}

A \emph{perfect randomized encoding} is one for which $\delta = 0$ (perfect correctness) and $\epsilon = 0$ (perfect privacy).
Note that for perfect encodings $f(x)$ can always be reconstructed from $\hat{f}(x, r)$. Additionally, perfect privacy means that $\hat{f}(x, r)$ encodes as much information about $x$ as $f(x)$.
An important property of perfect encodings that we will use is that of \emph{unique randomness}:

\begin{theorem}[Unique randomness \cite{AIK06}] \label{thm:uniquerand}
Suppose $\hat{f}$ is a perfect randomized encoding of $f$. Then for any input $x$, the function $\hat{f}(x,\cdot)$ is injective; namely, there are no distinct $r$,$r^\prime$ such that $\hat{f}(x,r)=\hat{f}(x,r^\prime)$. Moreover, if $f$ is a permutation, then so is $\hat{f}$.
\end{theorem}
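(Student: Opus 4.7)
The plan is to derive both claims from the interplay of perfect correctness and perfect privacy, closed off by a pigeonhole/counting argument. First, I fix an input $x$ and study the distribution of $\hat{f}(x, R)$ for a uniformly random $R \leftarrow_r \{0,1\}^m$. Perfect privacy equates this distribution with that of $S(f(x))$; in particular, it depends only on $f(x)$, not on $x$ itself. Perfect correctness ensures that $\Dec(y) = f(x)$ for every $y$ in the support of $\hat{f}(x, \cdot)$, so the images $\mathrm{Im}(\hat{f}(x_1, \cdot))$ and $\mathrm{Im}(\hat{f}(x_2, \cdot))$ are disjoint whenever $f(x_1) \neq f(x_2)$.

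For the injectivity claim, I would argue by contradiction: suppose $\hat{f}(x, r_1) = \hat{f}(x, r_2)$ for some distinct $r_1, r_2 \in \{0,1\}^m$. This forces $|\mathrm{Im}(\hat{f}(x, \cdot))| < 2^m$, and since each $y$ in the image of $\hat{f}(x, \cdot)$ has weight a positive integer multiple of $2^{-m}$ in the induced distribution, the balance feature of the perfect encoding construction in \cite{AIK06} (where the simulator $S(f(x))$ is uniform on a set of size exactly $2^m$) is contradicted via the distributional equality enforced by privacy. Hence the map $r \mapsto \hat{f}(x, r)$ must be injective, and in fact a bijection onto an image of size $2^m$.

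For the permutation claim, assume $f$ is a bijection on $\{0,1\}^n$; the AIK06 construction produces an encoding with output length $s = n+m$. The injectivity just established gives $|\mathrm{Im}(\hat{f}(x, \cdot))| = 2^m$ for every $x$. Since $f$ is a bijection, distinct $x$'s have distinct images $f(x)$, so by the disjointness observation $|\mathrm{Im}(\hat{f})| = 2^n \cdot 2^m = 2^{n+m}$. As the domain $\{0,1\}^{n+m}$ also has size $2^{n+m}$, $\hat{f}$ is a bijection on $\{0,1\}^{n+m}$, i.e., a permutation.

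The main obstacle is the injectivity step. Perfect privacy alone only forces $\hat{f}(x, R)$ and $S(f(x))$ to agree in distribution, and does not by itself forbid a ``wasteful'' encoding that ignores some of its randomness. The proof really needs the balance/uniform-support property of the canonical AIK06 simulator, which follows from the structural way its perfect encoding is built (the simulator uses its random coins injectively to sample from a set of size $2^m$). Once balance is available, the pigeonhole contradiction closes immediately and the permutation half reduces to simple counting using the disjointness observation.
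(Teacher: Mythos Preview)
The paper does not prove this statement itself; it is simply quoted from \cite{AIK06}, so there is no in-paper proof to compare against. Your argument is essentially the one in \cite{AIK06}, and your self-diagnosis in the final paragraph is exactly right: perfect correctness and perfect privacy alone --- which is all the present paper's definition of ``perfect randomized encoding'' demands --- do \emph{not} suffice. A counterexample is the encoding $\hat{f}(x,r)=f(x)$ that discards $r$: it is perfectly correct (decode by the identity) and perfectly private (simulate by outputting $f(x)$ deterministically), yet $\hat{f}(x,\cdot)$ is constant. What makes the theorem true is that in \cite{AIK06} the word ``perfect'' carries two further structural requirements: \emph{balance} (the simulator uses exactly $m$ random coins and is injective in them, so $S(f(x))$ is uniform on a set of size $2^m$) and \emph{stretch preservation} ($s = l + m$, hence $s = n+m$ when $f$ is a permutation on $\{0,1\}^n$). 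You invoke both --- balance drives your pigeonhole step for injectivity, and stretch preservation is your ``$s=n+m$'' in the permutation count --- and with them your argument is correct. Just make this dependence explicit at the top rather than leaving it to the final paragraph, since the paper's abbreviated definition does not include these axioms.
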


The main result in~\cite{AIK06} is the following:
\begin{theorem}[\cite{AIK06}] \label{thm:constre}
Any Boolean function that can be computed by a log-depth circuit, admits a perfect randomized encoding that can be computed in constant depth.
\end{theorem}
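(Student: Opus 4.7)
The plan is to follow the Ishai--Kushilevitz construction (via Kilian's randomizing polynomials) as adapted in Applebaum--Ishai--Kushilevitz. The high-level idea is to reduce the computation of $f$ to a small algebraic object whose every output bit is a low-degree polynomial in the inputs and fresh randomness; once each output bit is a degree-$3$ polynomial, a constant-depth circuit with unbounded fan-in $\oplus$ and $\wedge$ gates can evaluate it. The reduction has three stages, which I outline below.

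\textbf{Stage 1: from log-depth circuits to branching programs.} Since $f$ is computable by a log-depth Boolean circuit, it lies in $\mathsf{NC}^1$. By Barrington's theorem (or, more directly, by the branching-program simulation of formulas), $f$ can be computed by a polynomial-size (mod-$p$) branching program, which in turn can be represented by a product of polynomial-size matrices over $\mathbb{F}_2$ whose entries are affine functions of the input bits $x$. The output $f(x)$ is read off a specific entry of this matrix product $M(x) = M_1(x) M_2(x) \cdots M_m(x)$. This step is routine given the assumed depth bound and can be carried out by a polynomial-time deterministic compiler, as required by the ``efficient generation'' clause.

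\textbf{Stage 2: randomize the matrix product.} Sample uniformly random lower-triangular (or similarly structured) matrices $R_1,\dots,R_{m-1}$ over $\mathbb{F}_2$ and output the randomized encoding
\[
\hat{f}(x,r) \;=\; \bigl(\, M_1(x) R_1,\; R_1^{-1} M_2(x) R_2,\; \dots,\; R_{m-1}^{-1} M_m(x) \,\bigr).
\]
Here the string $r$ packs the bits of the $R_i$'s. Because each $R_i^{-1}$ has entries that are fixed polynomials in the entries of $R_i$ of bounded degree (in fact, for the group/triangular choice used in \cite{AIK06}, the entries of $R_i^{-1}$ are degree-$1$ in $r$), each entry of each block $R_{i-1}^{-1} M_i(x) R_i$ is a polynomial of total degree $3$ in $(x,r)$ over $\mathbb{F}_2$. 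Correctness and perfect privacy follow from the standard ``telescoping'' calculation: multiplying the blocks collapses the $R_i$'s and recovers $M(x)$, so $f(x)$ is decodable from $\hat{f}(x,r)$; and conversely, because the $R_i$ are uniform over an appropriate group, conditioned on $f(x)$ the distribution of $\hat{f}(x,r)$ is a fixed function of $f(x)$, yielding a perfect simulator $S$.

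\textbf{Stage 3: evaluate degree-$3$ polynomials in constant depth.} Each output bit of $\hat{f}$ is a sum over $\mathbb{F}_2$ of $\mathrm{poly}(\lambda)$ monomials, each of which is a product of at most three variables drawn from $x$ and $r$. Each monomial can be computed by a depth-$2$ circuit of $\mathrm{AND}$ gates (bounded fan-in, depth $O(1)$ since the degree is $3$); the $\oplus$ of polynomially many such monomials is computable by a single unbounded fan-in $\oplus$ gate, and $\oplus$ itself is in $\mathsf{AC}^0[\oplus]$. Hence $\hat{f}$ has an $\mathsf{NC}^0$-with-parity (equivalently $\mathsf{AC}^0[\oplus]$ of constant depth) implementation. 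In the setting of the paper, where unbounded fan-out/fan-in classical gates are allowed, this immediately gives constant classical depth.

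\textbf{Main obstacle.} The only genuinely delicate step is verifying perfect correctness and privacy of the randomization in Stage~2, which requires choosing the distribution of the $R_i$'s carefully so that (i) each $R_i^{-1}$ remains a low-degree polynomial in $r$ (otherwise Stage~3 breaks), and (ii) the joint distribution of the randomized blocks, conditioned on $f(x)$, is independent of $x$. Both are handled by the triangular/group-theoretic choice in \cite{AIK06} (or equivalently by the randomizing-polynomials construction of Ishai--Kushilevitz), so I would invoke that construction directly rather than re-derive it, and then observe that degree-$3$ computability implies constant depth to conclude.
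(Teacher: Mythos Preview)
Your proposal sketches a valid route to a constant-depth randomized encoding, but it differs from the construction the paper actually uses (outlined in Appendix~\ref{app:bre}, following~\cite{AIK06}). The paper does not randomize a telescoping product $M_1(x)\cdots M_m(x)$ via Kilian-style insertions $R_i, R_i^{-1}$; instead it represents the branching program by a single $(l-1)\times(l-1)$ matrix $L(x)$ with $f(x)=\det L(x)\bmod 2$, and randomizes once as $\tilde f(x,r^{(1)},r^{(2)})=R^{(1)}L(x)R^{(2)}$, where $R^{(1)}$ is upper-triangular with uniformly random strict upper part and $R^{(2)}$ has randomness only in its last column. Both schemes yield degree-$3$ output polynomials, but the determinant-sandwich construction is the one for which the paper later proves randomness reconstruction (Lemma~\ref{lem:rrc}, Appendix~\ref{app:rrc}); your telescoping variant would require a separate (though not difficult) argument for that property.

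There is also a genuine gap in your Stage~3. Degree-$3$ polynomials with polynomially many monomials need an unbounded fan-in $\oplus$, which is $\mathsf{AC}^0[\oplus]$, not $\mathsf{NC}^0$. The statement in~\cite{AIK06} (and the paper's use of it) is the stronger claim of \emph{constant output locality}: each output bit depends on $O(1)$ input bits. To get there, \cite{AIK06} applies a second encoding on top of $\tilde f$: each entry $\tilde f_{i,j}=T_1\oplus\cdots\oplus T_k$ is re-encoded as $\hat f_{i,j}=(T_1\oplus r_1,\ldots,T_k\oplus r_k,\,r_1\oplus r_1',\,r_1'\oplus r_2\oplus r_2',\ldots,r_{k-1}'\oplus r_k)$, which has output locality~$4$. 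You omit this locality-reduction step entirely. Separately, your claim that ``the entries of $R_i^{-1}$ are degree-$1$ in $r$'' is not correct for generic triangular $R_i$ of width $>2$; if you want degree control here you must either work in the $S_5$ group model (constant-size group, so inversion is a constant-size lookup) or restrict the $R_i$ further and justify it.
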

In fact a more general result is shown in~\cite{AIK06}, however the result of the above theorem is sufficient for our purposes. We also require the following result:
\begin{lemma}[Randomness reconstruction] \label{lem:rrc}
Given $x$ and $\hat{f}(x,r)$, where $\hat{f}$ is a randomized encoding following the construction from~\cite{AIK06}, there is a deterministic polynomial-time algorithm, denoted $\Rrc$, for computing the randomness $r$.
\end{lemma}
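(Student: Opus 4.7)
The plan is to establish that the AIK06 randomized encoding $\hat{f}(x,r)$ is affine in $r$ for every fixed $x$, which immediately reduces randomness reconstruction to solving a linear system over a small finite field.

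Recall that the AIK06 construction proceeds in two stages. First, the log-depth function $f$ is represented by a polynomial-length mod-$p$ (typically mod-$2$) branching program, which can be encoded by a matrix product $M(x) = M_1(x_{i_1})\cdots M_\ell(x_{i_\ell})$ where each $M_j$ is a small matrix determined by a single bit of $x$. Second, the Ishai--Kushilevitz randomization masks each intermediate value additively: the random bits $r$ enter only as one-time-pad contributions on entries of intermediate matrices (or, equivalently, as entries of random matrices from a fixed abelian subgroup that multiply a known matrix on one side). The $NC^0$ compilation layer on top of this is itself built by additive sharing and therefore also contributes randomness only as additive masks. The net effect is that every output bit of $\hat{f}(x,r)$ takes the form $\hat{f}(x,r)_i = \alpha_i(x) + \sum_j \beta_{i,j}(x)\cdot r_j \pmod{p}$, with coefficients $\alpha_i(x), \beta_{i,j}(x)$ computable in deterministic polynomial time from $x$ and the description of $\hat{f}$.

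Given this structure, the algorithm $\Rrc$ works as follows. On input $x$ and $y = \hat{f}(x,r)$, it computes all $\alpha_i(x)$ and $\beta_{i,j}(x)$ by simulating the known-$x$ part of the encoding circuit, assembles the system $B(x)\,r = y - \alpha(x)$ over $\mathbb{F}_p$, and solves it by Gaussian elimination. By Theorem~\ref{thm:uniquerand}, the map $r\mapsto \hat{f}(x,r)$ is injective, so $B(x)$ has full column rank and the linear system has a unique solution, which is precisely the randomness $r$ used to produce $y$. All steps run in $poly(\lambda)$ time, so $\Rrc$ is a deterministic polynomial-time algorithm.

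The main obstacle is bookkeeping: one must trace through the AIK06 compilation layer by layer and verify the invariant that every encoding step depends affinely on the randomness it introduces, and that composing affine-in-$r$ encodings remains affine-in-$r$. This is straightforward for the Ishai--Kushilevitz stage, since its randomness is literally used as additive masks on branching-program wires, but it requires a small inductive argument for the $NC^0$ compilation, where one checks gate by gate that no step multiplies two random bits together. Once this invariant is in hand, the reduction to Gaussian elimination outlined above gives $\Rrc$ immediately.
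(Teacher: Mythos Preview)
Your reduction to a single linear system rests on the claim that $\hat{f}(x,r)$ is affine in the full randomness vector $r=(r^{(1)},r^{(2)},r,r')$ once $x$ is fixed. That claim is false for the AIK06 construction used here. The first encoding stage is $\tilde{f}(x,r^{(1)},r^{(2)}) = R^{(1)} L(x) R^{(2)}$, with random upper-triangular matrices on \emph{both} sides, so each matrix entry is a sum of degree-$3$ monomials of the form $R^{(1)}_{i,k_1} L_{k_1,k_2}(x) R^{(2)}_{k_2,j}$; in particular there are genuine bilinear cross terms $r^{(1)}_a r^{(2)}_b$ (with coefficients depending on $x$). The $l=4$ example in Appendix~\ref{app:bre} makes this explicit: $M_{1,3}$ contains $r^{(2)}_1 r^{(1)}_1$ and $r^{(2)}_2 r^{(1)}_3$. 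The second, locality-reduction stage only pads each monomial $T_m$ with fresh additive masks $r_m,r'_m$, so those cross terms survive in the output of $\hat{f}$. Hence the map $r\mapsto\hat{f}(x,r)$ is not affine, your matrix $B(x)$ does not exist, and Gaussian elimination on the whole system does not apply.

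What the paper does instead is exploit a triangular structure that substitutes for global linearity. After peeling off the additive masks $r,r'$ (which is indeed trivially linear), one is left with $M=R^{(1)}L(x)R^{(2)}$. The entries along the main diagonal of $M$ involve only the second diagonal of $R^{(1)}$ and the last entry of $r^{(2)}$, and are linear in those unknowns. Solving them and substituting back, the next super-diagonal becomes linear in the next batch of unknowns, and so on diagonal by diagonal until all of $r^{(1)},r^{(2)}$ are recovered. Uniqueness at each step follows from Theorem~\ref{thm:uniquerand}. So the missing idea in your proposal is precisely this diagonal-by-diagonal back-substitution; without it, the quadratic cross terms block a one-shot linear solve.
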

Note that this property is not universal to randomized encodings, in that it cannot be derived from the definition of randomized encodings. However, the property is satisfied by the specific encodings defined in~\cite{AIK06}. This fact is mentioned in~\cite{AIK06}, however no formal proof is provided. We outline their construction in Appendix~\ref{app:bre} and prove the randomness reconstruction property in Appendix~\ref{app:rrc}.

Finally, we show the following fact concerning randomized encodings of functions that may have collisions:
\begin{lemma}[Collision preservation]
\label{col-pre}
For every $x_1, x_2$ with $x_1 \neq x_2$ for which $f(x_1) = f(x_2)$ there exist unique $r_1$ and $r_2$ such that $\hat{f}(x_1,r_1)=\hat{f}(x_2,r_2)$. In addition, for every $(x_1, r_1), (x_2, r_2)$, $x_1 \neq x_2$, such that $\hat{f}(x_1,r_1)=\hat{f}(x_2,r_2)$ it is the case that $f(x_1) = f(x_2)$.
\end{lemma}
\begin{proof}
Perfect privacy says that there exists a polynomial-time simulator $S$, such that for all $x$, it should be that $\TVD(S(f(x)), \hat{f}(x, r)) = 0$, where $\TVD$ is the total variation distance and $r$ is sampled uniformly at random. Essentially, $S$ should always be able to sample from the set of randomized encoding values that can be decoded to $f(x)$ (i.e. all $\hat{f}(x, r)$, for all $r$).

But now suppose we have $x_1$ and $x_2$ such that $f(x_1)$ = $f(x_2)$. By perfect privacy it must be that $\TVD(S(f(x_1)), \hat{f}(x_1, r_1)) = 0$ and $\TVD(S(f(x_2)), \hat{f}(x_2, r_2)) = 0$, for uniform $r_1$ and $r_2$. Since $f(x_1) = f(x_2)$, it must be that $\TVD(\hat{f}(x_1, r_1), \hat{f}(x_2, r_2)) = 0$. In other words, $\hat{f}(x_1, r_1)$ and $\hat{f}(x_2, r_2)$ are the same distribution (for random choices of $r_1$ and $r_2$) and so the randomized encodings that can be decoded to $f(x_1) = f(x_2)$ are the same for both $x_1$ and $x_2$.

Moreover, unique randomness (Theorem~\ref{thm:uniquerand}) ensures that there are no distinct $r_1$ and $r_1'$ such that $\hat{f}(x_1, r_1) = \hat{f}(x_1, r_1')$ (with the analogous statement holding for the $x_2$ case). Thus, for uniform $r_1$, $\hat{f}(x_1, r_1)$ is the uniform distribution over all randomized encodings which decode to $f(x_1) = f(x_2)$. As $\hat{f}(x_2, r_2)$ is the same distribution (for uniform $r_2$), it is the case that there are unique $r_1$ and $r_2$ such that $\hat{f}(x_1, r_1) = \hat{f}(x_2, r_2)$.
This shows the first part of the lemma, that for every $x_1, x_2$ with $x_1 \neq x_2$ for which $f(x_1) = f(x_2)$ there exist unique $r_1$ and $r_2$ such that $\hat{f}(x_1,r_1)=\hat{f}(x_2,r_2)$.

Next, consider $(x_1, r_1), (x_2, r_2)$, $x_1 \neq x_2$, such that $\hat{f}(x_1,r_1)=\hat{f}(x_2,r_2)$. Since $\Dec(\hat{f}(x_1,r_1)) = f(x_1)$ and $\Dec(\hat{f}(x_2,r_2)) = f(x_2)$, because $\hat{f}(x_1,r_1)=\hat{f}(x_2,r_2)$ it follows that $\Dec(\hat{f}(x_1,r_1)) = \Dec(\hat{f}(x_2,r_2))$ and so $f(x_1) = f(x_2)$.

Hence, the collisions of the original function are exactly preserved by the encoding.
\end{proof}

\section{Generic proofs of quantumness in constant quantum depth} \label{sect:poq}
We now have all the tools for presenting our generic compiler which can take the two proof of quantumness protocols from Subsection~\ref{sec:poqps} and map them to equivalent protocols in which the prover's operations require only constant quantum depth and logarithmic classical depth.
The idea is the following:
provided the (N)TCF of the original protocol can be evaluated in log depth, simply \emph{replace it with a constant-depth randomized encoding}, as follows from Theorem~\ref{thm:constre}. In other words, $y=f(x)$ should be replaced by $\hat{y}=\hat{f}(\hat{x})$ where $\hat{x}=(x,r)$ and $r$ denotes the randomness of the encoding. As mentioned, it was shown in~\cite{gheorghiu2020estimating, kahanamokumeyer2021classicallyverifiable} that the (N)TCFs of the two proofs of quantumness considered here, can indeed be performed in classical logarithmic depth.
Thus, to show that our construction works, we prove two things:
\begin{enumerate}
\item The prover can evaluate $\hat{f}$ coherently in constant quantum depth (as well as perform its remaining operations in constant depth). This is the completeness condition of the protocol shown in Subsection~\ref{subsect:completeness}.
\item A randomized encoding of a (N)TCF is itself a (N)TCF. This means that the modified protocol is sound against classical polynomial-time provers. We show this in Subsection~\ref{subsect:soundness}.
\end{enumerate}

\subsection{Completeness} \label{subsect:completeness}
To show completeness, we give a strategy for an honest prover, that interleaves constant-depth quantum circuits and log-depth classical circuits, to succeed in the proofs of quantumness described in Section~\ref{sect:poq}. We assume that the (N)TCFs used in those protocols can be evaluated in constant classical depth and denote the corresponding function as $\hat{f}_k$. These circuits are allowed to contain gates of unbounded fan-out. We can always map such a circuit to one that uses only gates of bounded fan-out, provided multiple copies of the input bits are provided. The intuition for this was mentioned in the Introduction and in Figure~\ref{fig:fanoutb}. We will assume each input bit of the initial circuit has been copied $k$ times.

The first step is preparing the state corresponding to a coherent evaluation of the (N)TCF over a uniform superposition of inputs:
\begin{equation} \label{eqn:targetstate}
    \ket{\psi}=\sum_{b\in\{0,1\}}\sum_{\hat{x}\in\{0,1\}^ {poly(\lambda)}}\ket{b}_{\mathsf{B}} \ket{x}_{\mathsf{X}}\ket{\hat{f}_k(b,x)}_{\mathsf{Y}},
\end{equation}
where the $\mathsf{B}$ and $\mathsf{X}$ registers store the inputs of $\hat{f}$ and the $\mathsf{Y}$ register will store the computed value of $\hat{f}$. As a slight abuse of notation, we omit the normalization term and assume the state is an equal superposition. 

Instead of preparing the state in Equation~\ref{eqn:targetstate}, we will prepare a state that is essentially equivalent to it, namely:
\begin{equation} \label{eqn:actualstate}
    \ket{\psi}=\sum_{b\in\{0,1\}}\sum_{\hat{x}\in\{0,1\}^ {poly(\lambda)}}\ket{\bar{b}}_{\mathsf{B}} \ket{\bar{x}}_{\mathsf{X}}\ket{\hat{f}_k(b,x)}_{\mathsf{Y}},
\end{equation}
where $\ket{\bar{b}}=\ket{b}^{\otimes k}$ and $\ket{\bar{x}}=\ket{x}^{\otimes k}$. We view the $\mathsf{X}$ register as consisting of multiple sub-registers, one for each bit in $x$. In other words\footnote{Note that this is the only place where a subscript on $x$ is used to denote a bit of $x$. Throughout the rest of the section, $x_b$ will denote a specific $x$ \emph{string}, and \emph{does not} refer to the $b$'th bit of the string $x$.}, if $x = x_1 x_2 ... x_n$, with $n(\lambda) = poly(\lambda)$, and $\bar{x} = \bar{x}_1 \bar{x}_2 ... \bar{x}_n$, we assume $\mathsf{X} = \mathsf{X}_1 \otimes \mathsf{X}_2 \otimes ... \otimes \mathsf{X}_n$. Here, $\mathsf{X}_i$ holds the state $\sum_{x_i \in \{0, 1\}} \ket{\bar{x}_i}$.

The prover starts by preparing:
\begin{equation} \label{eqn:startstate}
    \ket{\psi_0}=\sum_{b,\hat{x}}\ket{\bar{b}}_{\mathsf{B}} \ket{\bar{x}}_{{\mathsf{X}}}\ket{0}_{{\mathsf{Y}}}.
\end{equation}
Note that the $\mathsf{B}$ and $\mathsf{X}$ registers contain cat states. These can be prepared in constant quantum depth, together with logarithmic classical depth. As outlined in the introduction, the idea is to first prepare a poor man's cat state in constant depth, as described in~\cite{watts2019exponential}. The prover then uses the parity information from the prepared poor man's cat state to perform a correction operation consisting of Pauli-$X$ gates. Determining where to perform the $X$ gates from the parity information requires logarithmic classical depth. 
The $X$ corrections will map the poor man's cat states to cat states.

Next, the function $\hat{f}$ needs to be evaluated and the outcome will be stored in $\mathsf{Y}$ register. With multiple copies of the input, the circuit evaluating $\hat{f}$ consists only of gates with bounded fan-out. It can therefore be mapped to an equivalent constant depth quantum circuit (having twice the depth, so as to perform the operations reversibly) consisting of Toffoli, Pauli-$X$ and $CNOT$ gates. Evaluating this circuit on the state from~\ref{eqn:startstate} will result in the state from~\ref{eqn:actualstate}, as intended.

The prover is then required to measure the $\mathsf{Y}$ register and report the outcome to the verifier. This adds one more layer to the circuit. The measured state will collapse to
\begin{equation*}
    \ket{\psi_{y}}=\sum_{b\in\{0,1\}}\ket{\bar{b}}_{\mathsf{B}} \ket{\bar{x}_b}_{\mathsf{X}}\ket{y}_{\mathsf{Y}}.
\end{equation*}
In the preimage test, the prover will also measure this state in the computational basis and report the outcome to the verifier.

The next steps will differ for the two protocols.
\begin{enumerate}
    \item{\textbf{For the BCMVV protocol:}} In the equation test, the prover applies a layer of Hadamard gates on the qubits in $\mathsf{B}$ and $\mathsf{X}$. It then measures them in the computational basis, denoting the results as $b' \in \{0, 1\}^k$ and $d \in \{0, 1\}^{n \cdot k}$. In the original protocol, $b'$ was one bit and $d$ was $n$ bits and they satisfy the relation $d \cdot (x_0 \oplus x_1) = b'$. To arrive at that result, the prover will xor all the bits in $b'$ and all bits in each $k$-bit block of $d$ and report those results to the verifier. Note that the distributions of these xor-ed outcomes is the same as the distribution over the outcomes of a Hadamard-basis measurement of:
\begin{equation*}
    \sum_{b\in\{0,1\}} \ket{b}_{\mathsf{B}} \ket{x_b}_{\mathsf{X}}.
\end{equation*}
    \item{\textbf{For the KMCVY protocol:}} In the computational Bell test, the prover receives the string $v$ from the verifier. The original protocol has the prover use an ancilla qubit to store the bitwise inner product $v \cdot x_b$. However, such a multiplication requires \emph{serial} $CNOT$ gates which cannot be performed in constant depth. 
We therefore use a multi-qubit ancila register initalized as a cat state $\ket{a}_{\mathsf{A}}=\frac{\ket{0}^{\otimes n}+\ket{1}^{\otimes n}}{\sqrt{2}}$. For every bit $v_i$, in $v$, if $v_i=1$, the prover applies a controlled-Z ($CZ$) gate with control qubit any of the qubits in $\mathsf{X}_i$ and target qubit $\ket{a}_i$. The resulting state will be 
\begin{equation*}    
\sum_{b\in\{0,1\}} \frac{\ket{0}^{\otimes n}_{\mathsf{A}}+(-1)^{v\cdot x_b}\ket{1}^{\otimes n}_{\mathsf{A}}}{\sqrt{2}} \ket{\bar{x}_b}_{\mathsf{X}}
    =
    \sum_{b\in\{0,1\}}\ket{(-1)^{v\cdot x_b}}_{\mathsf{A}} \ket{\bar{x}_b}_{\mathsf{X}}
\end{equation*}	    

    where we denote $\ket{(-1)^{v\cdot x_b}}=\sum_{b\in\{0,1\}} \frac{\ket{0}^{\otimes n}_{\mathsf{A}}+(-1)^{v\cdot x_b}\ket{1}^{\otimes n}_{\mathsf{A}}}{\sqrt{2}}$.
    Next, the prover is required to measure $\mathsf{X}$ in the Hadamard basis yielding the result $d \in \{0, 1\}^{n \cdot k}$. Once again, in the original protocol $d$ is an $n$-bit string. As in the BCMVV protocol, this is ``fixed'' by having the prover xor each $k$-bit block of $d$ and report those outcomes to the verifier. The verifier can then use this result to determine the state in the ancilla register.
    
    After the measurement, the ancilla register will be in the state $\ket{\gamma}_{\mathsf{A}}\in\{\ket{\bar{0}},\ket{\bar{1}},\ket{\bar{+}},\ket{\bar{-}}\}$ where $\ket{\bar{\pm}}=\frac{\ket{\bar{0}}\pm\ket{\bar{1}}}{\sqrt{2}}$\footnote{Note that here the bar notation, $\ket{\bar{a}}$, refers to an $n$-fold repetition, rather than a $k$-fold one as in the previous case. That is, here $\ket{\bar{a}} = \ket{a}^{\otimes n}$.}. As the last step, the prover receives $\phi \in \{-\pi/4,\pi/4\}$. The original protocol requires him to measure the ancilla register in the rotated basis 
    $$\left\{           
    \begin{array}{lr}
    \cos(\phi/2)\ket{\bar{0}}+\sin(\phi/2)\ket{\bar{1}}\\
    \cos(\phi/2)\ket{\bar{1}}-\sin(\phi/2)\ket{\bar{0}}
    \end{array}
    \right.$$
    and report the result, $b'$. But how does the prover perform this measurement in constant depth? We give an approach that requires one more round of interleaving constant-depth quantum circuits and a log-depth classical computation. The basic idea is to reduce the multi-qubit state in the ancilla to a single-qubit state, i.e. $\{\ket{\bar{0}},\ket{\bar{1}},\ket{\bar{+}},\ket{\bar{-}}\} \rightarrow \{\ket{0},\ket{1},\ket{+},\ket{-}\}$. This reduction needs to be done in such a way that $\{\ket{\bar{0}},\ket{\bar{1}}\} \to \{ \ket{0}, \ket{1} \}$ and $\{\ket{\bar{+}},\ket{\bar{-}}\} \to \{ \ket{+}, \ket{-} \}$. Once this is done, the resulting qubit can be measured in the rotated basis.

    To perform the reduction, the prover first measures all but one qubit of $\ket{\gamma}_{\mathsf{A}}$ in the Hadamard basis. Denote this $(n-1)$-bit outcome as $w$. If the initial state was $\ket{\bar{0}}$ or $\ket{\bar{1}}$, the unmeasured qubit will be $\ket{0}$ or $\ket{1}$ respectively. If the initial state was $\ket{\bar{\pm}}$, it can be re-expressed as
    \begin{equation*}
    \begin{split}
    \ket{\bar{\pm}}&\propto \ket{0}\ket{00...0}\pm\ket{1}\ket{11...1}\\
    &\propto\ket{0}(\ket{+}+\ket{-})^{\otimes n-1} \pm \ket{1}(\ket{+}-\ket{-})^{\otimes n-1}\\
    &\propto \sum_{w} \left( \ket{0}\pm(-1)^{|w|}\ket{1} \right) \ket{w}\\
    &\propto \sum_{w} Z^{|w| \; mod \; 2}\ket{\pm}\ket{w}
    \end{split}
    \end{equation*}
    
    Thus, the qubit after the measurement will be $Z^{|w| \; mod \; 2}\ket{\pm}$. The prover will apply the  $Z^{|w| \; mod \; 2}$ operation to this qubit. In this way, the state $\ket{\bar{\pm}}$ is reduced to $\ket{\pm}$.
    
    Finally, the prover has to measure the qubit in the rotated basis and report the outcome. This can be done in constant depth by rotating the qubit appropriately and measuring in the standard basis. As in the original protocol, this prover will pass the verifier's checks with probability $\cos(\pi/8)^2\approx 85\%$.    
\end{enumerate}

\subsection{Soundness} \label{subsect:soundness}
We do not need to prove soundness from scratch for our modified protocols. Instead, since our only change was to replace the (N)TCFs used in the protocols with randomized encodings, we will have the same soundness as the original constructions provided randomized encodings of (N)TCFs are still (N)TCFs. That is what we show here.

\begin{theorem}
A perfect randomized encoding of a (N)TCF, satisfying the randomness reconstruction property, is still a (N)TCF.
\end{theorem}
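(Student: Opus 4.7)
The plan is to verify each defining property of a (N)TCF in turn for $\hat{f}$, treating it as an NTCF whose input set is still $\sX$ and whose value on input $x$ is the distribution induced by sampling $r\leftarrow_U\{0,1\}^m$ and returning $\hat{f}(x,r)$. The key for $\hat{f}$ is the same $k$ produced by $\Gen$ (the circuit for $\hat{f}$ is deterministically obtainable from that of $f$), and the trapdoor is inherited as $t_k$. Choosing this viewpoint, rather than treating $(x,r)$ as an enlarged input, is what makes the hardcore-bit step work, so I would commit to it at the outset.

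I would first dispatch the easy properties. Efficient generation is immediate. For the trapdoor injective pair, perfect privacy implies that $\hat{f}_{k,0}(x_0)$ and $\hat{f}_{k,1}(x_1)$ coincide as distributions exactly when $f(x_0)=f(x_1)$, so the relation $R_k$ and the ratio $|X_k'|/|X_k|$ are unchanged; define $\Inv_{\hat{\mathcal{F}}}(t_k,b,\hat{y}):=\Inv_{\mathcal{F}}(t_k,b,\Dec(\hat{y}))$, which is correct by perfect correctness of $\Dec$. Claw-freeness reduces to that of $f$: by Lemma~\ref{col-pre} and unique randomness (Theorem~\ref{thm:uniquerand}), any collision $\hat{f}(x_0,r_0)=\hat{f}(x_1,r_1)$ with $(x_0,r_0)\neq(x_1,r_1)$ forces $x_0\neq x_1$, yielding a claw for $f$ in polynomial time from any efficient adversary for $\hat{f}$.

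For the efficient range superposition, the prover first coherently evaluates $\hat{f}$ on a uniform superposition over $(x,r)$ to obtain $\sum_{x,r}\ket{x}\ket{r}\ket{\hat{f}(x,r)}$, and then runs the randomness reconstruction algorithm $\Rrc$ from Lemma~\ref{lem:rrc} coherently on the $x$ and $\hat{y}$ registers to uncompute the $r$ register. By unique randomness this leaves exactly $\sum_x\sum_{\hat{y}\in \supp(\hat{f}_{k,b}(x))}\ket{x}\ket{\hat{y}}$ up to normalization, which matches the target distribution $\hat{f}_{k,b}(x)$. The check procedure $\Chk_{\hat{\mathcal{F}}}(k,b,x,\hat{y})$ runs $\Rrc(x,\hat{y})$ to obtain $r$ and then verifies $\hat{f}(x,r)=\hat{y}$; the Hellinger distance between the prepared and ideal distributions is zero by perfect correctness.

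The main obstacle is the adaptive hardcore bit, and this is where the NTCF viewpoint pays off. Because $\hat{f}$ is treated as an NTCF on $\sX$ under the same key $k$, the candidate hardcore bit $d\cdot(x_0\oplus x_1)$ and the set $G_{k,b,x}$ carry over verbatim from $f$. An adversary $\mathcal{A}$ against $\hat{f}$'s hardcore bit game is given only $k$ and outputs a tuple $(b,x_b,d,c)$ of exactly the form required for $f$; since every $\hat{f}$-specific computation is efficiently derivable from $k$, any non-negligible advantage against $\hat{f}$ transfers directly to a non-negligible advantage against $f$, contradicting Definition~\ref{def:trapdoorclawfree}. The technical care here is in verifying that treating $\hat{f}$ as a distribution-valued function on $\sX$ genuinely produces an object satisfying Definition~\ref{def:trapdoorclawfree} in full, with perfect privacy and unique randomness playing the role of identifying the collision structures of the two functions, so that the reduction really is the identity on adversaries.
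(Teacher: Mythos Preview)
Your proposal is correct but takes a genuinely different route from the paper. The paper treats $\hat{f}$ as a function on the \emph{enlarged} input $\hat{x}=(x,r)$, so the injective pairs are $(\hat{x}_0,\hat{x}_1)$ and the adaptive hardcore bit concerns $\hat{d}\cdot(\hat{x}_0\oplus\hat{x}_1)$; their reduction splits $\hat{d}=(d_x,d_r)$ and strips off the $d_r\cdot(r_0\oplus r_1)$ contribution to recover a hardcore bit for $f$. You instead keep the input set as $\sX$ and push the randomness into the output distribution, which makes the hardcore-bit game for $\hat{f}$ literally identical to that for $f$ (same $R_k$, same $d\cdot(x_0\oplus x_1)$), so the reduction is the identity. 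Your use of randomness reconstruction is also different: the paper invokes $\Rrc$ only inside $\Inv_{\hat{\mathcal{F}}}$ to recover $(x_b,r_b)$, whereas you use it inside $\textsc{Samp}$ to coherently uncompute the $r$ register and inside $\Chk$. Two trade-offs are worth noting. First, your construction always yields a distribution-valued function, so a TCF is mapped to an NTCF rather than a TCF; this is harmless for the protocols but is slightly weaker than the paper's statement for the TCF case. Second, the paper's enlarged-input viewpoint matches the constant-depth prover of Section~\ref{subsect:completeness}, where the $r$ register is \emph{not} uncomputed and the equation test produces $\hat{d}\cdot(\hat{x}_0\oplus\hat{x}_1)$; your formulation would require running $\Rrc$ coherently before the Hadamard measurement, which is polynomial-time (so fine for the theorem as stated) but would need separate justification in the depth-bounded setting.
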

\begin{proof}
We show this result for NTCFs specifically, since the TCF case is subsumed. The idea of the proof is to show that every property of a NTCF is also satisfied by its randomized encoding.
\begin{enumerate}
    \item{\textbf{Efficient Function Generation.}} By definition, randomized encodings can be efficiently generated given a description of the function to be encoded. In this case, the description is given by the public key produced by the PPT algorithm $\textrm{GEN}_{\mathcal{F}}$. More precisely, $\textrm{GEN}_{\mathcal{F}}$ generates the key $k\in \mathcal{K}_{\mathcal{F}}$ together with a trapdoor $t_k$. The generating procedure for the encoding will run $\textrm{GEN}_{\mathcal{F}}$ and output $k$, the efficient circuit for generating a randomized encoding and the trapdoor $t_k$. Schematically,
    
    $$(\hat{f}_{k,b}, t_k) \xleftarrow{\textrm{randomized encoding}}(f_{k,b}, t_k) \equiv (k,t_k) \leftarrow \textrm{GEN}_{\mathcal{F}}(1^\lambda)\;.$$

    \item{\textbf{Trapdoor Injective Pair.}}
    \begin{enumerate}
        \item \textit{Trapdoor}: Due to perfect correctness, $\supp(\hat{f}_{k,b}(x_0,r_0)) \cap \supp(\hat{f}_{k,b}(x_1,r_1))=\emptyset$ is satisfied since if $\supp(\hat{f}_{k,b}(x_0,r_0)) \cap \supp(\hat{f}_{k,b}(x_1,r_1))\neq\emptyset$, then perfect correctness leads to $\supp(f_{k,b}(x_0)) \cap \supp(f_{k,b}(x_1))\neq\emptyset$ which violates the trapdoor injective pair property of the original function $f$. The efficient deterministic algorithm for inverting the randomized encoding also exists and is defined as $\Inv_{\hat{\mathcal{F}}}(t_k,b,\hat{y})=\Rrc\circ \Inv_{\mathcal{F}}\circ \Dec (t_k,b,\hat{y})$, i.e. the composition of the decoding operation for the encoding, the original $\Inv_{\mathcal{F}}$ procedure of the NTCF and the randomness reconstruction procedure (see Lemma~\ref{lem:rrc}).
        \item \textit{Injective pair}: 
        Let $\hat{R}_k$ be the set of all tuples of the form $((x_0, r_0), (x_1, r_1))$ such that 
        $\hat{f}_{k,0}(x_0, r_0) = \hat{f}_{k, 1}(x_1, r_1)$. Additionally, let $\hat{X}'_k \subseteq \hat{X}_k$ be the set of values $(x, r)$ which appear in the elements of $\hat{R}_k$.
        It is the case that every $(x, r) \in \hat{X}'_k$ appears in exactly one element of $\hat{R}_k$. This is because, using the collision-preservation property (Lemma~\ref{col-pre}), it must be that 
        $\hat{f}_{k,0}(x_0, r_0) = \hat{f}_{k, 1}(x_1, r_1)$ only if $f_{k,0}(x_0) = f_{k, 1}(x_1)$ and only for unique $r_1$ and $r_2$. We also know from the injective pair property of $f_{k, b}$, that every $x$ appears in exactly one tuple defining a collision for $f_{k, b}$.
        
        Also note that $|\hat{X}_k| = 2^m |X_k|$, where $|r| = m$. In other words, the set of possible inputs for $\hat{f}_{k, b}$ is $2^m$ times larger than that of $f_{k, b}$, as for every input, $x$, we also have the $m$-bit string $r$. The collision preservation property (Lemma~\ref{col-pre}) also ensures that $|\hat{X}'_k| = 2^m |X'_k|$. Since we know that $\lim_{\lambda \to \infty} |X_k '|/|X_k| = 1$ it also follows that $\lim_{\lambda \to \infty} |\hat{X}_k '|/|\hat{X}_k| = 1$.
    \end{enumerate}
    \item{\textbf{Efficient Range Superposition.}}
    The efficient range superposition property of the original function $f$ means there's an efficient quantum procedure to create a state approximating a superposition over the range of $f$. Assume we add an additional register, $\mathcal{R}$, to represent the randomness of the encoding, $\hat{f}$, and initialize it as a uniform superposition over computational basis states. We can now combine the efficient procedure for generating $\hat{f}$ with the procedure for generating the range superposition of $f$ and apply them coherently on $\mathcal{R}$. This will then yield the desired state
    
    \begin{equation*}
    \sum_{x, r, y}\sqrt{(\hat{f}'_{k,b}(x, r))(y)}\ket{x}\ket{r}\ket{y}\;,
\end{equation*}
suitably normalized.    
    
    \item{\textbf{Adaptive Hardcore Bit.}} We prove this property by contradiction. Assume there exists a QPT adversary $\hat{\mathcal{A}}$ that breaks the adaptive hardcore bit property for the randomized encoding. This means that there exists a non-negligible function $p(\lambda)$ that satisfies
    $$\Big|\Pr_{(k,t_k)\leftarrow \textrm{GEN}_{\mathcal{F}}(1^{\lambda})}[\hat{\mathcal{A}}(k) \in \hat{H}_k] - \Pr_{(k,t_k)\leftarrow \textrm{GEN}_{\mathcal{F}}(1^{\lambda})}[\hat{\mathcal{A}}(k) \in\overline{\hat{H}}_k]\Big| \,\geq \, p(\lambda)\;$$
    where
    $$\hat{H}_k = \big\{(b,\hat{x}_b,\hat{d},\hat{d}\cdot(\hat{x}_0\oplus \hat{x}_1))\,|\; b\in \{0,1\},\; (\hat{x}_0,\hat{x}_1)\in \hat{\mathcal{R}}_k,\; \hat{d}\in \hat{\dset}_{k,0,x_0}\cap \hat{\dset}_{k,1,x_1}\big\}\;$$ and
    $$\overline{\hat{H}}_k = \{(b,\hat{x}_b,\hat{d},c)\,|\; (b,\hat{x},\hat{d},c\oplus 1) \in \hat{H}_k\big\}\;.$$
    
    By definition $\hat{x}_b=(x_b,r_b)$, therefore $\hat{d}$ can be split into $(d_x,d_r)$ such that
    \begin{equation*}
        \hat{x}_b\cdot\hat{d}=(x_b\cdot d_x) \oplus (r_b\cdot d_r)
    \end{equation*}
    which implies that
    \begin{equation*}
        \hat{d}\cdot(\hat{x}_0\oplus \hat{x}_1) = (d_x\cdot(x_0\oplus x_1)) \oplus (d_r\cdot(r_0\oplus r_1)).
    \end{equation*}
    Note that the output of $\hat{\mathcal{A}}$ is a tuple $(b,\hat{x}_b,\hat{d},\hat{d}\cdot(\hat{x}_0\oplus \hat{x}_1))$.
    One can now define a new QPT adversary $\mathcal{A}$ which runs $\mathcal{\hat{A}}$ and then outputs $(b,x_b,d_x,\hat{d}\cdot(\hat{x}_0\oplus \hat{x}_1) \oplus (d_r\cdot(r_0\oplus r_1)) )$. This then implies that
    $$\Big|\Pr_{(k,t_k)\leftarrow \textrm{GEN}_{\mathcal{F}}(1^{\lambda})}[\mathcal{A}(k) \in H_k] - \Pr_{(k,t_k)\leftarrow \textrm{GEN}_{\mathcal{F}}(1^{\lambda})}[\mathcal{A}(k) \in\overline{H}_k]\Big| \,\geq\, p(\lambda)\;.$$
    Hence, the adaptive hardcore bit of the original NTCF family is violated. We conclude that the randomized encoding must also satisfy the adaptive hardcore bit property.
\end{enumerate}

\end{proof}

\subsection{Resource estimation} \label{subsect:resources}
In this section, we give some estimates of the resources required to run our modified protocols. We summarize this information in Table~\ref{tab:my_label} and proceed to explain the results. The functions listed in the table are the same as the ones from~\cite{kahanamokumeyer2021classicallyverifiable}, as these are the existing candidate TCFs used in proof of quantumness protocols.

\begin{table}[h!]
    \centering
    \begin{tabular}{ccccc}
    \hline 
    Function&Adaptive H.C.&\# of quantum-classical interleavings&Depth&Width\\
    \hline 
    LWE&\ding{51}&3&14&$O(\lambda l^{4})$\\
    Ring-LWE&\ding{55}&4&18&$O(\lambda l^4)$\\
    $x^2 \mod n$&\ding{55} &4&18&$O(\lambda l^4)$\\
    Diffie-Hellman&\ding{55}&4&18&$O(\lambda l^4)$\\
    \hline
    \end{tabular}
    \caption{The table of resource estimations for each type of (N)TCF function that may be used. Here H.C. means hardcore bit. The number of quantum-classical interleavings refers to the instances where the prover performs a constant-depth quantum circuit followed by a classical computation. This is done, for instance, in the preparation of cat states as well as when it responds to one of the verifier's challenges. Depth refers to the total number of layers of quantum gates that the prover has to perform. Width refers to the width of the quantum circuits the prover has to implement. Here, $\lambda$ denotes the security parameter and $l$ is the size of the branching program implementing the randomized encoding, as described in Appendix~\ref{app:bre}.}
    \label{tab:my_label}
\end{table}

\subsubsection{Quantum depth and quantum-classical interleavings}
In this subsection we explain the overall quantum depth that the prover has to perform in our modified proofs of quantumness. Depth here represents the number of layers of quantum gates or measurements (as described in Section~\ref{sect:prelim}) that the prover will perform throughout the protocol, in the worst case. As mentioned, the prover's operations consist of alternating between constant-depth quantum circuits and log-depth classical computation. This latter step we referred to as a quantum-classical interleaving.

For the NTCF-based protocol which uses LWE, the total quantum depth is 14 and 3 quantum-classical interleavings are performed, whereas for the TCF-based approaches the depth is 17 and the number of interleavings is 4. Let us explain where these numbers come from:
\begin{enumerate}
    \item{\textbf{Preparation of cat states.}} As mentioned, we prepare cat states by interleaving a constant depth quantum circuit with a log-depth classical computation, followed by another quantum circuit. The exact steps are outlined in~\cite{watts2019exponential}, while here we just summarize the gates performed in each step. The procedure starts with a layer of Hadamard gates followed by two layers of $CNOT$ gates. Some of the qubits are then measured in the computational basis. The remaining qubits will collapse to a poor man's cat state, while the measured qubits contain the parity information for that state. To ``correct'' the state to a cat state, the parity information is used to compute a Pauli-$X$ correction. This is one quantum-classical interleaving. The final quantum layer consists of Pauli-$X$ gates. Thus, the total depth will be 5 and we have 1 quantum-classical interleaving. This applies to all cat states, as they can be prepared in parallel.
    
    \item{\textbf{Evaluation of the randomized encoded function.}} As illustrated can see in Figure \ref{fig:classical-circuit}, the classical circuit for a randomized encoding has depth 3. In the quantum case, the AND gates are implemented by Toffoli gates and the XOR gate is a $CNOT$. As the quantum gates are reversible, one needs to \emph{uncompute} any auxiliary results and so the quantum depth will be double that of the classical circuit. Hence, for this step the quantum depth is 6 and there are no quantum-classical interleavings.
    
    \item{\textbf{Measurement of the $\mathsf{Y}$ register.}} Measuring the image register requires a layer of computational basis measurements and so the depth is 1. The results are read out and sent to the verifier, which we count as 1 quantum-classical interleaving.
    
    \item{\textbf{Preimage test or equation/Bell test.}} If a preimage test is performed, the prover only needs to measure the $\mathsf{X}$ register in the computational basis and report the result. This counts as depth 1 and 1 interleaving. In the NTCF protocol, if an equation test is performed, then the prover is expected to apply a layer of Hadamard gates to the $\mathsf{X}$ register and measure them. This counts as depth 2 and 1 interleaving. In the TCF protocol, when the computational Bell test is performed, the prover's operations (as outlined in Subsection~\ref{subsect:completeness}) will consist of a layer of $CZ$ gates, a layer of Hadamard gates together with a computational basis measurement, a classical computation and reporting the results to the verifier, a Pauli-$Z$ operation, a rotation gate and finally another measurement and reporting the results to the verifier. This counts as depth 6 and 2 interleavings.
\end{enumerate}

We can see that for the NTCF-based protocol the worst-case depth is $5+6+1+2=14$ and the number of interleavings is $1+0+1+1=3$. For the TCF-based one, the depth is $5+6+1+6=18$ and the number of interleavings is $1+0+1+2=4$.

\subsubsection{Circuit width}
The constant-depth versions of the proof of quantumness protocols require larger numbers of qubits than the original version. As explained, most of this is due to the use of cat states, which effectively copy the input and allow us to apply a constant depth circuit with bounded fan-out gates.
That circuit is a randomized encoding of the original TCF. Following the construction of randomized encodings from~\cite{AIK06} and described in Appendix~\ref{app:bre}, the width of the constant-depth circuit will depend on the size of the branching program used to evaluate the original function.
In Appendix~\ref{app:bre} we explain how, as a result of \emph{Barrington's theorem}, the size of this branching program is exponential in the depth of the original TCF. As all TCFs considered here can be evaluated in logarithmic depth, the resulting branching programs will have sizes polynomial in the security parameter $\lambda$.
Giving a precise account of the size of the branching program, as a function of $\lambda$, for each TCF, is beyond the scope of this paper. Instead, we find in Appendix~\ref{app:bre} that the overall circuit width for the prover's quantum circuit is $O(\lambda l^4)$, where $l$ is the size of the branching program used to evaluate the TCF. The $\lambda$ factor comes from having to repeat the branching program construction in parallel $O(\lambda)$ times. This is because one branching program computes a single output bit of the TCF and so one has to consider a different branching program (of the same size) for each output bit.

As a rough estimate, we can relate the width to the security parameter for the LWE-based NTCF of~\cite{mahadev2018classical, brakerski2018cryptographic}.
There we know from~\cite{gheorghiu2020estimating} that the functions can be evaluated in depth $\propto4\log{\lambda}$. From Barrington's theorem, the size $l$ of the corresponding branching program is on the order of $\lambda^8$. As the width is $O(\lambda l^4)$, we find that the prover requires $O(\lambda^{33})$ qubits. 
This is a discouraging result for the purposes of implementing these protocols on near-term devices. However, it should be noted that this was merely a rough calculation based on existing asymptotic estimates. We conjecture that these estimates are not optimal and can be improved with a tighter analysis, better circuit implementations and more compact branching programs.
Additionally, for a fixed-size implementation (say $\lambda=50$), it is likely that additional optimizations are possible that could further reduce the number of required qubits.

\section{Proofs of quantumness via phase encoding}
\label{sect:phase_enc}
The first construction based on randomized encoding is a generic method that works for all types of (N)TCFs. However, as mentioned, its naive implementation based on Barrington's theorem leads to circuits which are too wide to be implemented on near-term devices.

In this section, we propose another approach that can be implemented on much narrower circuits, thus bringing it closer to implementation on near-term devices. This construction relies on \emph{phase encodings} to evaluate a specific NTCF, based on the LWR problem that is defined in Subsection~\ref{sec:def_lwr}. As we will see, the resulting circuits also involve only constant quantum depth and logarithmic classical depth.

Before presenting the protocol, we first define the LWR-based NTCF, denoted as $f$, and introduce its phase encoded implementation.

\subsection{LWR-based NTCF}
\label{sec:lwr-ntcf}
The LWR-based NTCF was suggested in \cite{brakerski2018cryptographic} but not used. It is however used in~\cite{zhu2021demonstration}, but without the phase encoding. The specific NTCF we consider is the following:

\begin{definition}[LWR-based NTCF] \label{def:lwrntcf}
Let $\lambda > 0$ be a security parameter. We take $n(\lambda), m(\lambda), q(\lambda), p(\lambda)$ as functions of $\lambda$ subject to the following constraints: $n = O(\lambda)$, $q = 2^{O(n)}$ is prime, $m = \Omega(n \log q)$, and $p=O(\sqrt{mn\log q})$ is a power of 2. Additionally $\chi$ will denote a discrete Gaussian distribution over $\mathbb{Z}_q$ having width $O(q/p^5)$. Taking $\mathbf{A} \leftarrow_r \mathbb{Z}_q^{m \times n}$, $s \leftarrow_r \{0, 1\}^n$, $e \leftarrow_{\chi^m} \mathbb{Z}_q^{m}$ (so that $\| e \|_{\infty} = O(q/p^5)$), we define the function
$$f(b,x):\{0,1\}\times\mathbb{Z}_q^n \rightarrow \mathbb{Z}_p^m=\lfloor g(b,x)\rfloor_p$$ 
where 
$$g(b,x):\{0,1\}\times\mathbb{Z}_q^n \rightarrow \mathbb{Z}_q^m =\mathbf{A}x+b \cdot (\mathbf{A}s+e).$$
\end{definition}
\noindent For the specific constants in the parameters defined above, we use the same values as in~\cite{brakerski2018cryptographic}. It should be noted that the width of the error distribution is taken to be polynomially smaller than in~\cite{brakerski2018cryptographic} ($O(q/p^5)$ versus $O(q/p)$). But since the width is still superpolynomial (in $n$) we are still in the ``hardness regime'' where both LWE and LWR are intractable. For more details, we refer the reader to the Preliminaries of~\cite{brakerski2018cryptographic}. The reason for this choice will become apparent in Subsection~\ref{subsect:fidelity}.

Although we are referring to $f$ as an NTCF, it is not clear if this is indeed the case. Following the definition from Subsection \ref{sec:poqps}, we next show that all the properties are satisfied. As $f(b,x)$ uses the same LWE instance as the LWE-based NTCF of~\cite{brakerski2018cryptographic}, we will have the same $\Gen$, which immediately proves the efficient function generation property. Additionally, Lemma~\ref{lemma:trapdoorLWR} confirms that the $(k,t_k)$ pair sampled by $\Gen$ is also the key and trapdoor pair for the LWR-based function (for this reason we will sometimes write the function as $f_k$). We can also see that if $(0, x)$ is the preimage of $y=f(0,x)$, the other preimage is $(1,x-s)$. The trapdoor injective pair property then follows. The efficient evaluation property comes from the fact that mod-$q$ matrix multiplication and additions can be efficiently performed by polynomial-depth quantum circuits. In fact, the rest of this section is devoted to showing an efficient evaluation in constant quantum-depth using the phase encoding construction.

We are left with showing the adaptive hardcore bit property. As a first step, we show the following:
\begin{lemma}\label{lem:LWRclaws}
$x_0$ and $x_1$ form a claw of the LWR-based NTCF if and only if they are also a claw of the corresponding LWE-based NTCF (from~\cite{brakerski2018cryptographic}), with high probability.
\end{lemma}
\begin{proof}
Consider
\begin{align*}
    f(b,x) &= \biglfloor \mathbf{A}x+b\cdot (\mathbf{A}s +e) \bigrfloor_p \\
    h(b,x) &=  \mathbf{A}x+b\cdot (\mathbf{A}s +e) + e'
\end{align*}
where $h$ is the LWE-based NTCF using in~\cite{brakerski2018cryptographic} and both functions are based on \emph{the same LWE sample} $\mathbf{A}s+e$. The statement we would like to show is then re-expressed as
\begin{equation*}
    f(0,x_0)=f(1,x_1) \Leftrightarrow h(0,x_0)=h(1,x_1)
\end{equation*}
with high probability over the choices of $\mathbf{A}, s,$ and $e$. We can prove it by showing both implications.
\begin{itemize}
    \item {($\rightarrow$)} Consider its contrapositive: if $h(0,x_0)\neq h(1,x_1)$, then $f(0,x_0)\neq f(1,x_1)$, with high probability. In \cite{brakerski2018cryptographic}, it was shown that $h(0,x_0)\neq h(1,x_1)$ if and only if $x_1 \neq x_0-s$, with high probability. Now take $x_1 = x_0 - s + w$ for some non-zero $w \in \mathbb{Z}_q^n$. We know that $\mathbf{A} w$ is a uniformly random vector (over the random choice of $\mathbf{A}$) and therefore every bit of $f(1,x_1)$ has a probability of $\frac{1}{2}$ to be flipped with respect to $f(0,x_0)$. Thus, the probability of $f(0,x_0)=f(1,x_1)$ can be bounded by the additive Chernoff inequality 
    \begin{equation*}
        \Pr(d_H(f(0,x_0),f(1,x_1))=0)\leq \exp \left(-\frac{m \log_2 p}{4} \right)
    \end{equation*}
    which is negligible.
    \item {($\leftarrow$)} Suppose $h(0,x_0)=h(1,x_1)$, which immediately leads to $x_1 = x_0-s$, with high probability. We then have $f(0,x_0) = \lfloor \mathbf{A}x_0 \rfloor_p$ and $f(1,x_1) = \lfloor \mathbf{A}x_0 + e \rfloor$. As we have $\| e \|_{\infty} = O(q/p^5)$, the probability of $f(0,x_0)=f(1,x_1)$ is $1-negl(n)$ as shown in \cite{lwr_revisited}.
\end{itemize}

\end{proof}

\noindent Now we have all the ingredients for the proof of the adaptive hardcore bit property.
\label{sec:adaptive-hc}
\begin{theorem}
The LWR-based NTCFs ($f_k(b,x)$) have the adaptive hardcore bit property.
\end{theorem}
\begin{proof}
We present a proof by contradiction. Suppose $f_k(b,x)=\lfloor\mathbf{A}x+b(\mathbf{A}s+e)\rfloor_p$ is an LWR-based NTCF where $k$ is the key and $t_k$ is the trapdoor, both generated by $\Gen$. Assume there exists a QPT adversary $\hat{\mathcal{A}}$ that breaks the adaptive hardcore bit property of $f$. This means that there exists a non-negligible function $\kappa(m)$ that satisfies
\begin{equation*}
\Big|\Pr_{(k,t_k)\leftarrow \Gen_{\mathcal{F}}(1^{\lambda})}[\hat{\mathcal{A}}(k) \in \hat{H}_k] - \Pr_{(k,t_k)\leftarrow \Gen_{\mathcal{F}}(1^{\lambda})}[\hat{\mathcal{A}}(k) \in\overline{\hat{H}}_k]\Big| \,\geq\, \kappa(m)\;
\end{equation*}
where
\begin{eqnarray*}
\hat{H}_k &=& \big\{(b,x_b,d,d\cdot(x_0\oplus x_1))\,|\; b\in \{0,1\},\; (x_0,x_1)\in \hat{\mathcal{R}}_k \big\}\;,\\
\overline{\hat{H}}_k &=& \{(b,x_b,d,c)\,|\; (b,x,d,c\oplus 1) \in \hat{H}_k\big\}\;,
\end{eqnarray*}
and $\hat{\mathcal{R}}_k$ is the set of all tuples $x_0$, $x_1$ such that $f_k(0,x_0)=f_k(1,x_1)$. We can then consider the LWE-based NTCF $h_k(b,x):=\mathbf{A}x+b \cdot (\mathbf{A}s+e)+e'$, whose corresponding sets are denoted by $H_k$, $\overline{H}_k$, and $\mathcal{R}_k$. As is shown in Lemma~\ref{lem:LWRclaws}, we have $\mathcal{R}_k = \hat{\mathcal{R}}_k$, with overwhelming probability, hence $H_k = \hat{H}_k$ and $\overline{H}_k = \overline{\hat{H}}_k$ . Therefore, we can define the QPT adversary, $\mathcal{A} :=\hat{\mathcal{A}}$. It satisfies
\begin{equation*}
\Big|\Pr_{(k,t_k)\leftarrow \textrm{GEN}_{\mathcal{F}}(1^{\lambda})}[\mathcal{A}(k) \in H_k] - \Pr_{(k,t_k)\leftarrow \textrm{GEN}_{\mathcal{F}}(1^{\lambda})}[\mathcal{A}(k) \in\overline{H}_k]\Big| \,\geq\, \kappa(m)\;
\end{equation*}
which breaks the adaptive hardcore bit property of LWE-based NTCFs.
\end{proof}
\noindent This implies that $f(b,x)$ satisfies all requirements of an NTCF.

\subsubsection{Prime $q$}
\label{sec:q_is_prime}

As mentioned in Definition~\ref{def:lwrntcf}, we require $q$ to be a prime. This is, in fact, also a requirement in~\cite{brakerski2018cryptographic}. The reason for this is that some of the properties of these NTCF-based constructions hold only when $\mathbb{Z}_q$ is a finite field, rather than a finite ring. Normally, this would just be a minor technical point. However, in our case since we would like to perform the prover's operations in constant depth, we would need to provide a procedure that allows the prover to prepare equal superpositions over the field elements. In other words, the prover needs to create an equal superposition of a prime number of elements. While this can be done in constant quantum depth, using cat states and ideas from~\cite{quantumfanout}, we will find that this is not necessary, provided $q$ is sufficiently large and \emph{sufficiently close to a power of 2}. In this section, we show that these conditions can indeed be satisfied and it is possible to efficiently choose a prime $q$ that is close to a power of 2.

We start with a result from~\cite{dusart}:
\begin{lemma}[\cite{dusart}]
For $q'>3275$, there exists a prime $q$ in the interval
\begin{equation*}
    q'<q<\left(1+\frac{1}{2\ln^2 q'}\right)q'.
\end{equation*}
\end{lemma}
\noindent This implies that the ratio of $q$ and $q'=2^n$ is bounded by
\begin{equation*}
    1<\frac{q}{q'}<1+\frac{1}{2(\ln2)^2 n^2}=1+O(n^{-2}).
\end{equation*}

Moreover, a specific prime in between $q'=2^n$ and $\left(1+\frac{1}{2\ln^2 q'}\right)q'$ can be efficiently found. It suffices to sample random integers in the range and check if they are prime. The checking can be done by (for instance) the Miller-Rabin algorithm \cite{rabintest}, in polynomial time. We can show that the number of samples to check is $O(n)$ using the \emph{Prime number theorem}, which states that, if $\pi(N)$ is the prime counting function, for integers in the range $(0,N)$, then it is the case that
\begin{equation*}
    \pi(N)\sim \frac{N}{\log N}.
\end{equation*}

\noindent Thus, the number of primes in the desired range can be estimated by
\begin{align*}
    \pi\left( \left(1+\frac{1}{2\ln^2 q'} \right)q'\right) &\sim \frac{2^n\left(1+\frac{1}{2(\ln2)^2 n^2}\right)}{n+\log\left(1+\frac{1}{2(\ln2)^2 n^2}\right)}
     \sim 2^n\left( \frac{1}{n} + \frac{1}{2(\ln2)^2 n^3} + O(n^{-4}) \right)
\end{align*}
and
\begin{align*}
    \pi\left( \left(1+\frac{1}{2\ln^2 q'}\right)q'\right) - \pi(q') &= \frac{2^n}{2(\ln 2)^2 n^3 } + O(2^n n^{-4}).
\end{align*}
Therefore, the density of primes in the range is 
\begin{align*}
    \rho = \frac{\pi\left( \left(1+\frac{1}{2\ln^2 q'}\right)q'\right) - \pi(q')}{ q'\frac{1}{2\ln^2 q'} } \sim \frac{2^n \frac{1}{2(\ln2)^2 n^3}}{2^n \frac{1}{2(\ln2)^2 n^2}} = \frac{1}{n}+O(n^{-2}),
\end{align*}
which immediately implies that a prime can be found with an expected number of $O(n)$ random samples. All of this is incorporated in the $\Gen$ procedure as that is responsible for choosing a suitable $q$.
As will also be mentioned later, since $q$ is close to a power of 2, when the prover has to create an equal superposition over the elements of $\mathbb{Z}_q$ it will instead create the superposition over elements up to $q'$, the nearest power of 2, larger than $q$. The resulting state will be sufficiently close in trace distance that we only incur a $1/poly(n)$ penalty in completeness for making this replacement.

\subsection{Phase encoding}
\label{sec:phase_enc}
The concept of phase encoding was described in Section~\ref{sect:prelim}. 
In this section we will look at several properties of the phase encoding for the LWR-based NTCF (Definition~\ref{def:lwrntcf}). We aim to show how to evaluate $g(b,x) =\mathbf{A}x+b \cdot (\mathbf{A}s+e)$ in phase, and show that measuring the resulted state in Hadamard  basis will reveal the value of $f(b,x)=\lfloor  g(b,x)\rfloor_p$, with high probability.

It is natural to start by considering the phase encoding of $g(b,x)$ for a specific $(b,x)$. Note that $x\in \mathbb{Z}_q^n$ and $g(b,x)\in \mathbb{Z}_q^m$, both being vectors. The phase encoded state that we would like the prover to prepare (for each $b$ and $x$) should have the following form:
\begin{equation}
\label{eq:phase_enc_easiest}
    \ket{\phi(b,x)} = \bigotimes_{i=1}^m \ket{\phi_i(b,x)}
\end{equation}
with
\begin{equation}
   \ket{\phi_i(b,x)} = \frac{1}{\sqrt{2}}(\ket{\bar{0}}+e^{i\phi_i(b,x)}\ket{\bar{1}})
\end{equation} 
and
\begin{equation}
    \phi_i(b,x) = \frac{2\pi g_i(b,x)}{q} - \frac{\pi}{2}
\end{equation}
where $g_i$ represents the $i$'th component of $g(b,x)$.

For the majority of this section, we will focus on the case $p=2$. That is, we assume that $f(b, x)$ simply takes the most significant bit of each component of $g(b, x)$. This, of course, is not the NTCF we defined since there we had that $p=O(\sqrt{mn\log q})$. We will address the case of general $p$ in Subsection~\ref{subsect:generalp}.

For $p=2$, we denote the output of $f(b,x)=\lfloor g(b,x) \rfloor_2$ by $y$, a binary string of length $m$. We have $y_i = \lfloor g_i(b,x) \rfloor_2$ where $y_i$ is the $i$'th bit of $y$ and $g_i(b,x)$ is the $i$'th component of $g(b,x)$.
Before explaining how to prepare the phase encoded state in constant depth, let us first investigate how to decode $y=f(b,x)$ from $\ket{\phi(b,x)}$ with high probability.

\subsubsection{Decoding by measurements}
\label{sec:dec_by_measure}
The phase encoding can be \emph{probabilistically decoded} through Hadamard measurements. Denote the process of measuring the $XX...X$ observable on the state in Equation~\ref{eqn:hadamard_phase_encoding} by $M$ and the measurement outcomes of all $m$ phase encoded states by $z \in \{0, 1\}^m$. One can then write $z\leftarrow M(\ket{\phi(b,x)})$. It should be clear that $z=y$ indicates that the decoding was completely successful.

Let us consider the case of a single component in the encoding, namely $\ket{\phi_i}$. In order to investigate the possible values of $z_i=M(\ket{\phi_i})$, $\ket{\phi_i}$ can be rewritten as
\begin{align}
\label{eqn:hadamard_phase_encoding}
    \ket{\phi_{i}}&=\frac{1}{\sqrt{2}}(\ket{\bar{0}}+e^{i\phi_i}\ket{\bar{1}})\\
    &=\frac{1}{2}((1+e^{i\phi_i})\ket{\bar{+}}+(1-e^{i\phi_i})\ket{\bar{-}}).
\end{align}

\noindent If the qubit is measured in the Hadamard basis, we can express the outcome probabilities as
\begin{equation}
    \PrM(\pm|\ket{\phi_i}) = \frac{1}{4}[(1\pm\cos\phi_i)^2+\sin^2 \phi_i]=\frac{1}{2}(1\pm\cos\phi_i).
\end{equation}
with $\phi_i = \frac{2\pi g_i}{q} - \frac{\pi}{2}$. Note that $g_i<q/2$ is equivalent to $y_i=\lfloor g_i \rfloor_2$ = 0. Additionally, $g_i<q/2$ leads to $\cos\phi_i>0$. Therefore the probability of getting $+$ is larger than that of $-$. If we map $+$ to 0 and $-$ to 1, it is clear that the Hadamard measurement is essentially a probabilistic decoding of $y_i$ from $\phi_i$, with success probability always greater than $\frac{1}{2}$. More compactly, we can write the probability of measuring any $z_i$ from $\ket{\phi_i}$ by
\begin{equation}
    \PrM(z_i|\ket{\phi_i}) = \frac{1}{2}(1 + (-1)^{z_i} \cos\phi_i).
\end{equation}
Furthermore, the probability of \emph{successfully decoding} $\phi_i$ (i.e. $z_i = y_i$) is denoted by
\begin{equation}
    \pcor(\phi_i):=\Pr(z_i=y_i) =\PrM(y_i|\ket{\phi_i}).
\end{equation}
where $\Pr(z_i=y_i) = \PrM(+|\phi_i)=  \frac{1}{2}(1+\cos\phi_i)$ if $y_i=0$ and $\Pr(z_i=y_i) =\PrM(-|\phi_i) = \frac{1}{2}(1-\cos\phi_i)$ if $y_i=1$. Similarly, the probability of unsuccessful decoding is represented by 
\begin{equation}
    \pinc(\phi_i) :=\Pr(z_i\neq y_i) = \PrM(\neg y_i | \ket{\phi_i}) = 1-\pcor(\phi_i).
\end{equation}

We can now evaluate the expected values of these probabilities over the uniform choice of the matrix $\mathbf{A}$ and show the following:

\begin{lemma}
Over the choice of matrix $\mathbf{A}$, the average probability of successful decoding of any $\ket{\phi_i}$ is $\frac{1}{2}+\frac{1}{\pi}\approx 0.82$.
\end{lemma}

\begin{proof}
To clarify, there are two sources of randomness here. On the one hand we have the randomness of the measurement and on the other hand we have the random choice of the matrix $\mathbf{A}$. We're interested in seeing the expected probability of a successful (as well as an unsuccessful) decoding over the choice of $\mathbf{A}$. As $g(b, x) = \mathbf{A}x + b \cdot (\mathbf{A}s + e)$, we can see that if $\mathbf{A}$ is uniform (over a finite field), then $g(b, x)$ will also be uniform (for any non-zero $b$ and $x$).
Hence, $\Pr(\phi_i)=\Pr(g_i)=\frac{1}{q}$ for all $\phi_i\in \{-\frac{\pi}{2},\frac{2\pi}{q}-\frac{\pi}{2},\dots,\frac{3\pi}{2} \}$.
The expected probability of a correct decoding is then

\begin{equation}
\label{eq:pc1}
\begin{aligned}
    \bpcor:=\EA(\pcor(\phi_i)) &=\sum_{g_i=0}^{q/2-1} \Pr(\phi_i) \PrM(+|\phi_i) + \sum_{y_i=q/2}^{q-1} \Pr(\phi_i) \PrM(-|\phi_i)\\
    & = 2\sum_{g_i=0}^{q/2-1} \Pr(\phi_i) \PrM(+|\phi_i)\\
    &= 2 \sum_{g_i=0}^{q/2-1} \frac{1}{q}\frac{1}{2}(1+\cos\phi_i):=S\\
\end{aligned}    
\end{equation}
which we can view as a \emph{Riemann sum}. For large $q$, the summation converges to an integral
\begin{equation}
    \bpcor = S \rightarrow I := 2\int_0^{\frac{q}{2}-1} \frac{1}{2q} \left( 1+\cos \left( \frac{2\pi g_i}{q}-\frac{\pi}{2} \right) \right)dg_i.
\end{equation}    
By the change of variable  $\phi_i = \frac{2\pi g_i}{q} - \frac{\pi}{2}$, this becomes
\begin{align}
        \bpcor \rightarrow I&=\frac{1}{\pi} \int_{-\frac{\pi}{2}}^{\frac{\pi}{2}} \frac{1}{2} (1+\cos(\phi_i))d\phi_i\\
    &=\frac{1}{2}+\frac{1}{\pi} \sim 0.82.
\end{align}
We also have the expected probability of an incorrect decoding
\begin{equation}
    \bpinc:=\EA(\pinc(\phi_i)) \rightarrow 1 - \bpcor = \frac{1}{2}-\frac{1}{\pi} \sim 0.18.
\end{equation}
The approximation $S\rightarrow I$ comes with an error which we can bound. Such an error for an $(l+1)$-order differentiable integrand $\chi$ can be determined with the Euler-Maclaurin formula
\begin{equation}
    S - I = \sum_{k=1}^{l}\frac{B_k}{k!}\left(\chi^{(k-1)}\left(\frac{q}{2}-1 \right) - \chi^{(k-1)}(0) \right)+R_l
\end{equation}
where $B_k$ is the $k$-th Bernoulli number, $R_l=o(q^{-l})$ is the remainder term, and $\chi(y_i) = \frac{1}{q}(1+\cos(\frac{2\pi g_i}{q}-\frac{\pi}{2}))$ is the integrand. We can see that $\chi^{(k-1)}(\frac{q}{2}-1) - \chi^{(k-1)}(0) = 0$ for odd $k$. Therefore, the error can be written as
\begin{equation}
\begin{aligned}
    S - I &= \frac{B_2}{2}\frac{1}{q}\frac{2\pi}{q}\left[-\sin \left(\frac{\pi}{2}-\frac{2\pi}{q} \right) + \sin \left(-\frac{\pi}{2} \right)\right] + o(q^{-2})\\
    & = -\frac{1}{3q^2}+o \left(q^{-2} \right) =O(q^{-2}).
\end{aligned}    
\end{equation}
\end{proof}

As $g(b,x)$ is uniform (over the random choice of $\mathbf{A}$ and whenever $(x, b) \neq (0, 0)$), each of its components will be a uniform value in $\mathbb{Z}_q$. Thus, we can view the measurement of each component of $\ket{\phi(b, x)}$ to be an independent and identically distributed random variable. As the expected probability of a correct decoding is $0.82$, it follows from a Chernoff bound that $0.82m$ values will be decoded correctly, with overwhelming probability over the choice of $\mathbf{A}$.
While this means that most values are correctly decoded, we, in fact, need \emph{all} values to be decoded correctly with high probability.
To achieve this, we use a \emph{classical repetition code} and repeat each output component several times in order to take a majority vote.

\subsubsection{Decodability and repetition code ($p=2$)}
\label{sec:rep-and-majority}
Instead of the prover having to prepare $\ket{\phi(b, x)}$ (for each $b$ and $x$), we will instead ask it to prepare:
\begin{equation}
\label{eq:rep_code_version}
    \ket{\phi(b, x)} = \bigotimes_{i=1}^m \ket{\phi_i(b, x)}^{\otimes v} = \bigotimes_{i=1}^m \left(\frac{1}{\sqrt{2}}(\ket{0}+e^{i\phi_i}\ket{1})\right)^{\otimes v}
\end{equation}
where $v$ represents the number of repetitions. In this case, to decode the value of the $i$'th component, one measures all $v$ copies of that component and uses the majority outcome as the value $z_i$. 

We say that one component, for instance the $i$'th component, has been correctly decoded, if $z_i = y_i$, where recall that $y_i$ is the most-significant bit of $g_i(b, x)$. By analogy, we say that the whole state has been correctly decoded if all of its components were (i.e. $z = y$).
Our goal is to find the relation between $v$ and $m$ such that $z=y$ with sufficiently high probability (say, $99\%$) \emph{for most states} $\ket{\phi(b, x)}$ (say, $99\%$ of all such states). In doing so, we show the following

\begin{theorem} \label{thm:decodability}
At least $99\%$ of all $\ket{\phi(b,x)}$ states can be correctly decoded with probability $99\%$, whenever $v=\Omega(m^2\log m)$.
\end{theorem}
\begin{proof}

Without loss of generality, we focus on the case of $g_i < \frac{q}{2}$, that is $y_i = 0$. Recall that
\begin{equation}
    \pcor(\phi_i) = \PrM(+|\phi_i) = \frac{1}{2}(1+\cos(\phi_i)) = \frac{1}{2}\left(1+\sin\left(\frac{2\pi g_i}{q}\right)\right).
\end{equation}

It should be clear that for the very special case $g_i=0$, the probability of having the correct measurement outcome is $\frac{1}{2}$. In this case, it is impossible to tell if $z_i$ should be 0 or 1 even with repetition, because no matter how large $v$ is, there will always be an equal number of correctly and incorrectly decoded bits, on average. Therefore, any component $g_i$ that is extremely close to $0$ or $\frac{q}{2}$ so that $\pcor(\phi_i)$ is close to $\frac{1}{2}$ would make the whole $\ket{\phi(b, x)}$ state \emph{undecodable}\footnote{In fact, even if we ignore the cases where $\pcor(\phi_i)=\frac{1}{2}$, it is still required to have $v=O(q)$ to distinguish between $\phi_i=\frac{2\pi}{q}-\frac{\pi}{2}$ and $\phi_i=-\frac{2\pi}{q}+\frac{3\pi}{2}$ where $g_i = 1$ and $g_i=q-1$, respectively. This is clearly unacceptable since $q$ is exponential in $n$ and the resulting circuit would be exponentially wide.}. 

To be more explicit, we will consider $\ket{\phi_i}$ to be undecodable whenever we have that either $|g_i| < \frac{q}{cm}$ or $|g_i - q/2| <  \frac{q}{c m}$, for a constant $c > 0$ to be determined later. But as noted before, for a uniform $\mathbf{A}$, each $g_i$ (excluding the case $g(0, 0)$) is also uniform in $\mathbb{Z}_q$. It follows that the probability that $g_i$ leads to an undecodable $\ket{\phi_i}$ is at most $\frac{1}{q} \frac{4q}{cm} = \frac{4}{cm}$, over the choice of $\mathbf{A}$. From a union bound, we then also have that the probability of $\ket{\phi(b, x)}$ to be undecodable (i.e. at least one of its components is undecodable) is at most $m \frac{4}{cm} = \frac{4}{c}$.
This means that at least a fraction $1 - \frac{4}{c}$ of all $\ket{\phi(b, x)}$ states are, in fact, decodable. That is, all of their components are at least $\frac{q}{cm}$ away from the undecodability boundary. By taking $c = 400$, we have that $99\%$ of $\ket{\phi(b, x)}$ are decodable.

Without loss of generality, let's now consider a state that is barely decodable, with say $g_i = \frac{q}{c m}$. The probability of correctly decoding the corresponding $\ket{\phi_i}$ state will be
\begin{equation}
    \pcor(\phi_i) = \frac{1}{2}\left(1+\sin\left(\frac{2\pi g_i}{q}\right)\right) \approx \frac{1}{2} \left( 1 + \frac{1}{O(m)} \right).
\end{equation}
The state is biased away from $1/2$ by $1/O(m)$. From an application of the Chernoff-Hoeffding bound\footnote{Each measurement is viewed as an i.i.d.~random variable. The empirical mean of these variables is expected to be close to $1/2 + 1/O(m)$. Chernoff-Hoeffding tells us that a deviation of $\epsilon$ from this expected value occurs with probability $\exp(-v \epsilon^2)$. Thus, since the case of interest is $\epsilon = 1/O(m)$, we can see that to have a constant probability of incorrectly decoding, it must be that $v = \Omega(m^2)$.} it follows that repeating the measurement $\Omega(m^2)$ times and taking a majority vote is enough to ensure that the value is correctly decoded with \emph{constant probability} (say $99\%$). Of course, we want that \emph{all} $m$ values are correctly decoded which means that we should take the number of repetitions $v$ so that the probability of correctly decoding one value is at least $1 - 1/O(m)$. Once again, we can use Chernoff-Hoeffding and find that $v = \Omega(m^2 \log m)$. As the probability of incorrectly decoding one value is now $1/O(m)$, from a union bound the probability of incorrectly decoding \emph{any} of the $m$ values is $O(1)$. By suitably choosing the constant factors, we can set this probability to be, say $1\%$.
We therefore have that $v = \Omega(m^2 \log m) = \Omega(n^2 \log m \log^2 q) = \Omega(n^4 \log n)$.
\end{proof}

\subsubsection{Phase encoding for general $p$} \label{subsect:generalp}

The analysis from the previous subsections was concerned with the case $p=2$. We now adapt this to the general case of $p = O(\sqrt{m n \log q})$.

As we expect $p$ to be a power of 2, the rounding $\lfloor g_i \rfloor_p$ for any value of $g_i$ is exactly a $(\log_2 p)$-bit number. What we have been doing so far with the phase encoding is to encode the most significant bit of $f_i = \lfloor g_i \rfloor_p$ in phase. What about the other $\log_2 p - 1$ bits? The solution is simply to phase encode those bits as well.

\begin{lemma}
Applying the phase encoding to the $\log_2 p$ significant bits of every $g_i \in \mathbb{Z}_q$, leads to a repetition factor $v= \Omega(n^4 \log^2 n)$ in order to achieve the same guarantees as Theorem~\ref{thm:decodability}.
\end{lemma}
\begin{proof}
Specifically, the $k$'th significant bit of $g_i$ can be encoded as
\begin{equation}
    \ket{\phi_{i,k}} = \frac{1}{\sqrt{2}}(\ket{\bar{0}} + e^{i\phi_{i,k}}\ket{\bar{1}})
\end{equation}
with
\begin{equation}
    \phi_{i,k} = \frac{2^k \pi g_i}{q}.
\end{equation}

How does this affect the decodability results of the previous sections? The expected probability of decoding a single bit, without repetition, will still be negligibly close to $0.82$. This is because, as we saw in Subsection~\ref{sec:dec_by_measure}, the deviation from this expectation is inverse in the square of the field size, which is now $\sim \frac{q}{2^k}$. As $k \leq \log_2 p$, $p = O(\sqrt{m n \log q})$ so that $2^k = O(\sqrt{m n \log q})$ and $q = 2^{O(n)}$, the deviation from the expected value of $0.82$ remains negligible in $n$ (or $\lambda$).

The decodability boundary, from Subsection~\ref{sec:rep-and-majority}, also changes from $\frac{q}{cm}$ to $\frac{q}{2^k cm}$. As $2^k = O(\sqrt{m n \log q})$ and $m = \Omega( n \log q)$, the boundary becomes $\frac{q}{c' n^4}$, for some constant $c' > 0$.
Following the same steps as in Subsection~\ref{sec:rep-and-majority}, to ensure that most states can be correctly decoded, we see that the number of repetitions remains $\Omega(n^4)$. But this is just for the $m$-bit vector containing the $k$'th most significant bit of each component. As we have $\log_2 p$ such vectors, and we want all of them to be decoded correctly, we need to add an additional $\log_2 p$ factor so that overall we have $v = \Omega(n^4 \log n \log_2 p) = \Omega(n^4 \log^2 n)$.
\end{proof}

Thus, for each $b$ and $x$, the state the prover will prepare is
\begin{equation} \label{eqn:truephaseencoding}
    \ket{\phi(b, x)} = \bigotimes_{i=1}^m \bigotimes_{k=1}^{\log_2 p} \left( \ket{\bar{0}} + e^{i\phi_{i,k}} \ket{\bar{1}} \right)^{\otimes v}.
\end{equation}

\subsubsection{Constant-depth circuit implementation} \label{sec:const-depth-phase-enc}
Here we show that the phase encoding construction can be performed in constant quantum depth.

\begin{theorem}
It is possible to prepare the state in Equation~\ref{eqn:truephaseencoding} in constant quantum depth and with logarithmic depth classical computation.
\end{theorem}
\begin{proof}
We've already mentioned that cat states can be prepared in constant quantum depth with one quantum-classical interleaving. Let us then assume that we have sufficient cat states (of a size that will be determined later) and see how we can apply the required phases in constant quantum depth.

Recall that $g(b,x) =\mathbf{A}x+b \cdot (\mathbf{A}s+e)$, and determines the phase\footnote{We again focus only on the case of the most significant bit, as the $k$'th most significant bit can be obtained by simply mapping $q$ to $q/2^k$.} $\phi_i = \frac{2\pi g_i}{q}-\frac{\pi}{2}$.
The phase can then be expressed as
\begin{equation}
\begin{aligned}
    e^{i\phi_i} &= \exp \left(-\frac{\pi i}{2} \right) \exp \left(bi\frac{2\pi (\mathbf{A}s)_i+ 2\pi e_i}{q} \right) \exp \left(\frac{2\pi i}{q} \sum_{j=1}^n A_{ij} x_j \right)\\
    &= \exp(\phi'_i(b)) \prod_{j=1}^n \exp \left(\frac{2\pi i}{q} A_{ij} x_j \right)
\end{aligned}
\end{equation}
where
\begin{align}
    \exp(\phi'_i(b)) := \exp \left(-\frac{\pi i}{2} \right) \exp \left(bi\frac{2\pi (\mathbf{A}s)_i+ 2\pi e_i}{q} \right).
\end{align}

Note that $\phi'_i$ only depends on $b$ and not on $x$. Having multiple copies of $b$, we can easily apply a $\phi'_i$ rotation in parallel using $Z$-rotations ($R_z$) and  controlled-$Z$-rotations ($CR_z$):

\begin{equation}
    R_z \left(-\frac{\pi}{2} \right) CR_z \left(\frac{2\pi (\mathbf{A}s)_i+ 2\pi e_i}{q} \right) \frac{1}{\sqrt{2}}(\ket{\bar{b}}\ket{\bar{0}}+\ket{\bar{b}}\ket{\bar{1}}) = \ket{\bar{b}} \otimes \frac{1}{\sqrt{2}}(\ket{\bar{0}}+e^{i\phi'_i(b)}\ket{\bar{1}}).
\end{equation}

\noindent The corresponding circuit is shown in Figure~\ref{fig:circuit_phiprime}.
\begin{figure}[h]
    \centering
\begin{quantikz}
\lstick{$\mathsf{X}_0$} & \ctrl{1} & \qw   \\
\lstick{$\mathsf{Z}_{i,0}$} & \gate{R_z \left(\frac{2\pi (\mathbf{A}s)_i+ 2\pi e_i}{q} \right)} & \qw   \\
\lstick{$\mathsf{Z}_{i,1}$} & \gate{R_z(-\frac{\pi}{2})} & \qw 
\end{quantikz}
\caption{The quantum circuit for the vector addition operations in phase encoding. Here $\mathsf{X}_0$ is the first qubit of the $\mathsf{X}$ register that stores information of $b$. $\mathsf{Z}_{i,j}$ is the $j$'th qubit of the $i$'th cat state which stores information of $\phi_i$.}
\label{fig:circuit_phiprime}
\end{figure}
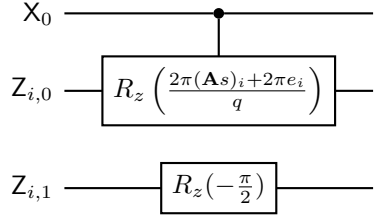

We now need to implement the phase-encoded matrix-vector multiplication in parallel on the cat state. Note that $x_j$ is a non-negative integer less than $q$ and it can be expanded as
\begin{equation}
    x_j = \sum_{k=0}^{\lceil \log_2(q) -1 \rceil} 2^{k}x_{j,k}
\end{equation}
denoting the $k$'th significant bit of $x_j$ by $x_{j,k}$. The phase can be further expanded:
\begin{align}
   \prod_{j=1}^n \exp \left(\frac{2\pi i}{q} A_{ij} x_j \right)= \prod_{j,k} \exp \left(\frac{2\pi i}{q} 2^k A_{ij} x_{j,k} \right).
\end{align}
Therefore, the desired phase can be applied to the cat state by parallel controlled-$Z$-rotation gates in constant-quantum depth. Specifically,
\begin{equation}
\begin{aligned}
   \left( \prod_{j=1}^n \prod_{k=0}^{{\lceil \log_2(q) -1 \rceil}} CR_z \left(\frac{2\pi}{q}2^k A_{i,j} \right)\right) \frac{1}{\sqrt{2}} (\ket{\overline{x_{j,k}}}\ket{\bar{0}}+e^{i\phi_i'(b)}\ket{\overline{x_{j,k}}}\ket{\bar{1}}) &= \\
\ket{\overline{x_{j,k}}} \otimes \frac{1}{\sqrt{2}} & (\ket{\bar{0}}+e^{i\phi_i(b, x)}\ket{\bar{1}})
\end{aligned}
\end{equation}
where the $CR_z$ gates can be performed in parallel if the size of cat is $\Omega(n\log q)=\Omega(n^2)$. The local quantum circuit for multiplying $A_{i,j}$ with the $k$'th significant bit of $x_j$ is shown in Figure~\ref{fig:circuit_CRzs}.

\begin{figure}[h]
    \centering
\begin{quantikz}
\lstick{$\mathsf{X}_{j,k}$} & \ctrl{1} & \qw \\
\lstick{$\mathsf{Z}_{i,l}$} & \gate{R_z \left(\frac{2\pi}{q}2^k A_{i,j} \right)}  & \qw 
\end{quantikz}
\caption{Part of the quantum circuit for matrix-vector multiplication in phase. Here $\mathsf{X}_{j,k}$ is the qubit that stores the $k$'th bit of $x_j$, and $\mathsf{Z}_{i,l}$ is the $l$'th qubit of the cat state storing the information of $\ket{\phi_i}$.}
\label{fig:circuit_CRzs}
\end{figure}
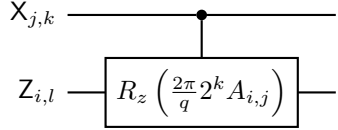
Thus, all operations can be performed in constant quantum depth.
\end{proof}

It is worth noting that in current physical realizations of quantum computers, these (controlled) rotations can be performed directly by tuning microwave frequencies for superconducting qubits~\cite{qipsc} or laser frequencies for trapped-ions~\cite{Bruzewicz2019TrappedionQC}. Alternatively, if one insists on having a fixed-size gate set, \cite{quantumfanout} provides a constant-depth implementation with $1/poly$ error which is also acceptable.

The Hadamard measurements discussed in the previous sections are performed by measuring $X$ on each qubit of a phase encoded cat state and then taking the parity of the outcomes.

\subsection{LWR-based protocol with phase encoding}
\label{sec:phase_enc_complete}
The protocol using the LWR-based NTCF and the phase encoding is outlined in Figure~\ref{fig:bcmvvprotocol2}. The verifier behaves essentially the same as in the BCMVV protocol. The major difference is in the prover's honest strategy, which requires it to perform the constant-depth evaluation of the phase encoding.

\protocol{Modified BCMVV protocol}
{Honest provers' strategy for the constant-depth version of the BCMVV protocol \cite{brakerski2018cryptographic} based on phase encoding.}{fig:bcmvvprotocol2}{
Let $\lambda = n$ be a security parameter and $N \geq 1$ a number of rounds. The parties taking part in the protocol are a PPT machine, known as the verifier and a QPT machine, known as the prover. They will repeat the following steps $N$ times:
\begin{enumerate}
\item The verifier generates $(k,t_k)\leftarrow \Gen(1^\lambda)$. It sends $k$ to the prover.
\item The prover uses $k$ to implement the phase encoding of the function $g_k(b,x)$, and prepare the following state:
\begin{equation*}
    \ket{\psi} = \frac{1}{\sqrt{2q^n}}\sum_{b \in \{0, 1\}, x\in \mathbb{Z}_q^n } \ket{\bar{b}}_{\mathsf{B}}\ket{\bar{x}}_{\mathsf{X}}\ket{\phi(b,x)}_{\mathsf{Z}}\;
\end{equation*}
with
\begin{equation*}
     \ket{\phi(b, x)} = \bigotimes_{i=1}^m \bigotimes_{k=1}^{\log_2 p} \left( \ket{\bar{0}} + e^{i\phi_{i,k}} \ket{\bar{1}} \right)^{\otimes v}.
\end{equation*}
where $\phi_{i,k}(b,x) = \frac{2\pi 2^k g_i(b,x)}{q}-\frac{\pi}{2}$.
The prover then measures the $\mathsf{Z}$ register in Hadamard basis. By conducting majority votes for the parities of the Hadamard measurement outcome of every block $\left(\ket{\bar{0}}+e^{i\phi_{i,k}}\ket{\bar{1}}\right)^{\otimes v}$, the prover obtains a new string $y\in\{0,1\}^{m \log_2 p}$ which it sends to the verifier. The remaining state is
\begin{equation}
    \ket{\psi_y} = \sum_{b\in\{0,1\}} \ket{\bar{b}}\ket{\bar{x}_b}\ket{y}.
\end{equation}
\item The verifier selects a uniformly random challenge $c\leftarrow_R \{0,1\}$ and sends $c$ to the prover.
\item \begin{enumerate}
    \item (Preimage test:) When $c=0$, the prover measures in the standard basis the $\mathsf{BX}$ registers of the state leftover in step 2. It obtains the outcomes $b \in \{0, 1\}$ and $x\in\{0,1\}^{poly(n)}$, which it sends to the verifier. If $f_k(b,x)=y$, the verifier sets $N_c\leftarrow N_c + 1$.
    \item (Equation test:) When $c=1$, the prover measures each qubit in the $\mathsf{BX}$ register in the Hadamarad basis. It obtains the outcomes $b' \in \{0, 1\}^{k'}$ and $d\in\{0,1\}^{poly(n)\cdot k'}$ which it sends to the verifier. Here $k'$ denotes the size of a cat state (used to encode $b$ and each bit in $x$). The verifier computes $(x_0,x_1) = \LWRInv(t_k,y)$ and sets $N_c\leftarrow N_c+1$ if $d \cdot (\bar{x}_0\oplus \bar{x}_1) = b''$ where $b''$ is the $\mathrm{xor}$ of all $k'$ bits of $b'$.
\end{enumerate}
\end{enumerate}
At the end of the $N$ rounds, if $\frac{N_c}{N}> 0.95$, the verifier accepts.
}

As we saw in the previous subsections, due to the randomness over the choice of $\mathbf{A}$ and the probabilistic nature of the measurements, the protocol is not perfectly complete. That is, the success probability for the honest prover is no longer $100\%$ as in the original BCMVV protocol. Before accounting for all sources of ``imperfections'' we first need to examine the post-measurement state in the preimage register after the prover performs step 2 in the protocol. Ideally, we would like this state to be as close as possible to an equal superposition over valid preimages. Thus, in the next subsection we compute a bound on the fidelity of the true state with respect to an ideal state.

\subsubsection{Fidelity of the post-measurement state and the success probability for an honest prover} \label{subsect:fidelity}

We wish to determine the success probability of an honest prover in the protocol. To do so, we need to characterize the prover's state after it measures the phase-encoded image register. We will show that the state in the preimage register (post-measurement of the phase-encoded image register) has high overlap with the ``ideal'' preimage state that would have be obtained if the prover performed the evaluation in the computational basis, rather than in phase. With this result, we can then compute the protocol's completeness in the next subsection.

To start the proof we will consider splitting the prover's measurement of the image register into two steps. First, the prover measures in the Hadamard basis all but one qubit from \emph{each} phase encoded state in the image register. Then, it measures the remaining unmeasured qubits as well. This separation is fictitious, as in the protocol the prover will measure all qubits of the image register in one step. But performing this separation and considering the prover's state after it measures all but one qubit of each phase encoded state will make the analysis simpler.
Let us begin with the honest prover's state after performing the coherent evaluation of the function in phase,

\begin{equation}
\label{eq:whole_cov_phase_enc}
    \ket{\psi} = \frac{1}{\sqrt{2q^n}} \sum_{b\in\{0,1\}} \sum_{x\in \mathbb{Z}_q^n} \ket{\bar{b},\bar{x}}_\mathsf{BX} \ket{\phi(b,x)}_\mathsf{Z}
\end{equation}
where, as before,
\begin{equation}
    \ket{\phi(b,x)} = \bigotimes_{i=1}^{m} \bigotimes_{k=1}^{\log_2 p} \ket{\phi_{i,k}(b,x)}^{\otimes v}.
\end{equation}
Also recall that each component $\ket{\phi_{i,k}}$ has the form of a rotated cat state
\begin{equation}
    \ket{\phi_{i,k}(b,x)} = \frac{1}{\sqrt{2}}(\ket{\bar{0}}+e^{i\phi_{i,k}} \ket{\bar{1}}).
\end{equation}
The prover will measure each qubit of such a state (or, more precisely, of the coherent superposition of such states) in the Hadamard basis. It should be clear that when measuring all but one qubit in the Hadamard basis, the state of that qubit becomes
\begin{equation}
    \ket{\tilde{\phi}_{i,k}(b,x)} = \frac{1}{\sqrt{2}}(\ket{0} \pm e^{i\phi_{i,k}}\ket{1}),
\end{equation}
where the $\pm$ relative phase is determined by the parity of the Hadamard basis measurement outcomes. Without loss of generality, let us fix\footnote{We can do this because, as we will see, this is equivalent to the prover having to flip the outcome of one of the measurements it performs. Alternatively, the prover can always perform a quantum-classical interleaving here in order to flip the phase, though this is not necessary.} this phase as $+$.

We now rewrite each component $\ket{\tilde{\phi}_{i,k}}$ as
\begin{equation}
\begin{aligned}
    \ket{\tilde{\phi}_{i,k}(b,x)} =& \frac{1}{\sqrt{2}}(\ket{0}+e^{i\phi_{i,k}}\ket{1}) \\
    &= \alpha(0|\phi_{i,k}) \sqrt{\PrM(+|\ket{\phi_{i,k}})}\ket{+} + \alpha(1|\phi_{i,k}) \sqrt{\PrM(-|\ket{\phi_{i,k}})}\ket{-}\\
    &\rightarrow^H \alpha(0|\phi_{i,k})\sqrt{\PrM(0|\ket{\phi_{i,k}})}\ket{0} + \alpha(1|\phi_{i,k})\sqrt{\PrM(1|\ket{\phi_{i,k}})}\ket{1}
\end{aligned}
\end{equation}
where in the last line we mapped from the Hadamard basis $\{\ket{+},\ket{-}\}$ to the computational basis $\{\ket{0},\ket{1}\}$, and $\alpha(0|\phi_{i,k})$ and $\alpha(1|\phi_{i,k})$ are pure phases (i.e. $|\alpha(0|\phi_{i,k})| = |\alpha(1|\phi_{i,k})| = 1$). 
Let us now consider what happens when all of these qubits are measured. Let $\tilde{z}\in\{0,1\}^{mv\log_2 p}$ denote the Hadamard measurement outcome of all $mv\log_2 p$ $\ket{\tilde{\phi}_{i,k}}$ states.
This string can be expressed as a concatenation of $m\log_2 p$ substrings $\tilde{z}_{i,k}\in \{0,1\}^v$ for $i\in\{1,\dots,m\}$ and $k\in\{1,\dots,\log_2 p\}$. The substring with index $i,k$ represents the measurement outcomes of $\ket{\tilde{\phi}_{i,k}}^{\otimes v}$. We can then write the state as

\begin{equation}
\begin{aligned}
    \ket{\tilde{\phi}_{i,k}(b,x)}^{\otimes v} &\rightarrow^H \left( \alpha(0|\phi_{i,k})\sqrt{\PrM(0|\ket{\phi_{i,k}})}\ket{0} + \alpha(1|\phi_{i,k})\sqrt{\PrM(1|\ket{\phi_{i,k}})}\ket{1} \right)^{\otimes v} \\
    &= \sum_{\tilde{z}_{i,k} \in \{0,1\}^v} \left( \prod_{j=1}^v \alpha(\tilde{z}_{i,k,j}|\phi_{i,k}) \sqrt{\PrM(\tilde{z}_{i,k,j}|\ket{\phi_{i,k}})} \right) \ket{\tilde{z}_i}\\
    &= \sum_{\tilde{z}_{i,k} \in \{0,1\}^v} \alpha(\tilde{z}_{i,k}|\phi_{i,k},v) \sqrt{\PrM\left(\tilde{z}_{i,k}|\ket{\phi_{i,k}}^{\otimes v}\right)}  \ket{\tilde{z}_i}
\end{aligned}
\end{equation}
where $\tilde{z}_{i,k,j}$ denotes the $j$'th bit of the substring $\tilde{z}_{i,k}$, and $\alpha(\tilde{z}_{i,k}|\phi_{i,k},v)$ is the product of the pure phases $\alpha(\tilde{z}_{i,k,j}|\phi_{i,k}) $ with $j$ ranging from 1 up to $v$. The entire phase encoded state $\ket{\tilde{\phi}(b, x)}$ can then be expressed as:
\begin{equation}
    \begin{aligned}
        \ket{\tilde{\phi}(b,x)} \rightarrow^H \sum_{\tilde{z}\in\{0,1\}^{mv\log_2 p}} \alpha(\tilde{z}|\phi) \sqrt{\PrM(\tilde{z}|\ket{\phi(b,x)})}\ket{\tilde{z}}.
    \end{aligned}
\end{equation}
Finally, the state of the coherent phase encoding evaluation in Equation~\ref{eq:whole_cov_phase_enc} (but after the prover has measured all but one qubit of each phase-encoded cat state) can be expressed as well:
\begin{equation}
    \begin{aligned}
    \ket{\tilde{\psi}} \rightarrow^H \frac{1}{\sqrt{2q^n}}\sum_{b,x}\ket{\bar{b},\bar{x}}_\mathsf{BX}\sum_{\tilde{z}} \alpha(\tilde{z}|\phi(b,x)) \sqrt{\PrM(\tilde{z}|\ket{\phi(b,x)})}\ket{\tilde{z}}_\mathsf{Z}.
    \end{aligned}
\end{equation}
Recall that we aim to estimate the success probability of an honest prover. To do so, we can first find an \emph{ideal} state such that, if the prover holds that state, it would very likely succeed in the protocol. The success probability can therefore be estimated by evaluating the fidelity between the real and the ideal states, then evaluating the success probability if the prover holds the ideal state. Denoting the ideal state by $\ket{\psi_\mathrm{ideal}}$ and the procedure of majority voting by $\mathrm{Maj}$\footnote{In other words, $Maj(\tilde{z})$ will be a string of $m \log_2 p$ bits containing the majority value of each substring of $v$ bits.}, we let
\begin{equation} \label{eqn:psiideal}
    \begin{aligned}
        \ket{\psi_\mathrm{ideal}}& = \frac{c}{\sqrt{2q^n}} \sum_{x_0 \in \mathbb{Z}_q^n} \sum_{\maj(\tilde{z})=f(0,x_0)} \\
        & \left(\alpha(\tilde{z}|\phi(0,x_0)) \sqrt{\PrM(\tilde{z}|\ket{\phi(\bar{0},\bar{x_0})})}\ket{0,x_0}+ \alpha(\tilde{z}|\phi(\bar{1},\bar{x_1})) \sqrt{\PrM(\tilde{z}|\ket{\phi(1,x_1)})}\ket{1,x_1}\right)_\mathsf{BX}  \ket{\tilde{z}}_\mathsf{Z}\\
    \end{aligned}
\end{equation}
where $c$ is a normalization constant, $x_0$ and $x_1 := x_0 - s$ form a claw of $f(b,x)$, hence $f(0,x_0)=f(1,x_1)$. It should be clear why $\ket{\psi_\mathrm{ideal}}$ is considered ideal, since the state in the $\mathsf{BX}$ register conditioned on having measured $\mathsf{Z}$, will be a superposition of the claw $((0, x_0), (1, x_1))$. This is due to the fact that $\maj(\tilde{z})=f(0,x_0)$ which ensures that the image $f(0,x_0)$ can be perfectly decoded. Hence, only the claw $((0, x_0), (1, x_1))$ will be consistent with this outcome of the image register.

We now show the following:

\begin{lemma}
\label{lemma:first_ideal}
$F(\ket{\tilde{\psi}},\ket{\psi_\mathrm{ideal}}) = |\braket{\tilde{\psi}|\psi_\mathrm{ideal}}|^2 > 0.98$.
\end{lemma}
\begin{proof}
Let us first give a lower bound of $c$, where recall that $c$ is the normalization constant in Equation~\ref{eqn:psiideal}. We showed in Theorem~\ref{thm:decodability} that at least $99\%$ of $\ket{\phi}$'s are decodable. In other words, we have
\begin{equation}
    \sum_{\maj(\tilde{z})=f(0,x_0)}\PrM(\tilde{z}|\ket{\phi(0,x_0)}) \geq 0.99
\end{equation}
and
\begin{equation}
    \sum_{\maj(\tilde{z})=f(1,x_1)}\PrM(\tilde{z}|\ket{\phi(1,x_1)}) \geq 0.99
\end{equation}
for at least $99\%$ possible $x_0$'s. Keeping in mind that $f(0, x_0) = f(1, x_1)$, the normalization condition leads to
\begin{equation}
    \frac{c^2}{2q^n} \sum_{x_0 \in \mathbb{Z}_q^n} \sum_{\maj(\tilde{z})=f(0,x_0)} \left( \PrM(\tilde{z}|\ket{\phi(0,x_0)}) + \PrM(\tilde{z}|\ket{\phi(1,x_1)}) \right) =1,
\end{equation}
which implies that
\begin{equation}
    1 < c^2 \leq \frac{2q^n}{0.99\cdot (0.99+0.99)q^n}=1.02.
\end{equation}

The fidelity can be computed as
\begin{equation}
\begin{aligned}
    F(\ket{\tilde{\psi}},\ket{\psi_\mathrm{ideal}}) &= |\braket{\tilde{\psi}|\psi_\mathrm{ideal}}|^2\\
    &\geq \left| \frac{1} {c}  \right|^2 > 0.98.
\end{aligned}
\end{equation}
\end{proof}

In the ideal state, every $\tilde{z}$ measurement outcome corresponds to exactly two $\ket{\phi(b,x)}$ states that form a claw of $f$. Supposing a specific $\tilde{z}$ is measured, the remaining post-measurement state in the $\mathsf{BX}$ register will be
\begin{equation}
\label{eq:post-state}
    \ket{\psi_{\tilde{z}}} \propto \sum_b  \alpha\left(\tilde{z}|\phi(b,x_b)\right) \sqrt{\PrM(\tilde{z}|\ket{\phi(b,x_b)})} \ket{b,x_b}.
\end{equation}
Recall that the honest prover would certainly succeed in the protocol with an equal superposition over the claw (without any relative phase between the components):
\begin{equation}
\label{eq:perfect_claw}
    \ket{\psi_{y}} \propto \sum_b \ket{b,x_b}.
\end{equation}
Unfortunately, the state in Equation~\ref{eq:post-state}, resulting from the measurement of $\ket{\psi_\mathrm{ideal}}$, is not of this form due to the presence of the phases $\alpha(\tilde{z}|\phi(\bar{b},\bar{x_b}))$ which could lead to a non-negligible relative phase. We now show that this relative phase is in fact close to zero. To do so, consider a ``more ideal state'' $\ket{\psi_\mathrm{ideal,2}}$:
\begin{equation}
    \begin{aligned}
        \ket{\psi_\mathrm{ideal,2}}& =  \frac{c'}{\sqrt{2q^n}} \sum_{x_0 \in \mathbb{Z}_q^n} \sum_{\maj(\tilde{z})=f(0,x_0)} \\
        &\left(\alpha(\tilde{z}|\phi(0,x_0)) \sqrt{\PrM(\tilde{z}|\ket{\phi(0,x_0)})}\ket{0,x_0}+ \alpha(\tilde{z}|\phi(0,x_0)) \sqrt{\PrM(\tilde{z}|\ket{\phi(0,x_0)})}\ket{1,x_1}\right)_\mathsf{BX} \ket{\tilde{z}}_\mathsf{Z},
    \end{aligned}
\end{equation}
where $c'\in\mathbb{R}$ is another normalization factor. Note that in this state the two components corresponding to the preimage register share the same phase, $\alpha(\tilde{z}|\phi(0,x_0))$, meaning that there is no relative phase.

We start by bounding the normalization constant $c'$ from the norm of the state:
\begin{equation}
\begin{aligned}
    1=\braket{\psi_\mathrm{ideal,2}|\psi_\mathrm{ideal,2}}&=\frac{c'^2}{2q^n} \sum_{x_0 \in \mathbb{Z}_q^n} \sum_{\maj(\tilde{z})=f(0,x_0)} \PrM(\tilde{z}|\ket{\phi(0,x_0)}) \left( \braket{0,x_0|0,x_0}+\braket{1,x_1|1,x_1} \right)  \braket{\tilde{z}|\tilde{z}}
\end{aligned}
\end{equation}
which implies that
\begin{equation}
    1<c'^2\leq \frac{2q^n}{0.99 q^n \cdot (0.99+0.99)}.
\end{equation}

It should be clear that if the prover holds $\ket{\psi_\mathrm{ideal,2}}$, it would succeed in the equation and preimage tests with $100\%$ probability. Thus, to calculate the success probability of the real prover in our protocol, we simply evaluate the fidelity between $\ket{\psi_\mathrm{ideal}}$ and $\ket{\psi_\mathrm{ideal,2}}$.

\begin{lemma}
\label{lemma:second_ideal}
$F(\ket{\psi_\mathrm{ideal}},\ket{\psi_\mathrm{ideal,2}}) = |\braket{\psi_\mathrm{ideal}|\psi_\mathrm{ideal,2}}|^2 > 0.97$.
\end{lemma}
\begin{proof}
\begin{equation}
\begin{aligned}
    |\braket{\psi_\mathrm{ideal} | \psi_\mathrm{ideal,2}}| & =  \frac{cc'}{2q^n} \left| \sum_{x_0 \in \mathbb{Z}_q^n} \sum_{\maj(\tilde{z})=f(0,x_0)} \right. \\
    &\left. \left[ \PrM(\tilde{z}|\ket{\phi(0,x_0)} + \alpha^* (\tilde{z}|\phi(1,x_1)) \alpha(\tilde{z}|\phi(0,x_0)) \sqrt{\PrM(\tilde{z}|\ket{\phi(1,x_1)}\PrM(\tilde{z}|\ket{\phi(0,x_0)}}  \right] \right|\\
    &\geq \frac{1}{2q^n}\left[ \left|0.99 q^n \cdot 0.99 + \sum_{x_0}\braket{\phi(1,x_1)|\phi(0,x_0)}\right| - 0.01q^n  \right],
\end{aligned}
\end{equation}
since
\begin{equation}
\braket{\phi(1,x_1)|\phi(0,x_0)}=\sum_{\tilde{z}\in\{0,1\}^{mv\log_2 p}} \alpha^* (\tilde{z}|\phi(1,x_1)) \alpha(\tilde{z}|\phi(0,x_0)) \sqrt{\PrM(\tilde{z}|\ket{\phi(1,x_1)}\PrM(\tilde{z}|\ket{\phi(0,x_0)}}.
\end{equation}

The inner product $\braket{\phi(1,x_1)|\phi(0,x_0)}$ can also be evaluated by considering their phase encoded form. We start with
\begin{equation*}
     \ket{\phi(b, x)} = \bigotimes_{i=1}^m \bigotimes_{k=1}^{\log_2 p} \left( \ket{\bar{0}} + e^{i\phi_{i,k}} \ket{\bar{1}} \right)^{\otimes v}
\end{equation*}
where $\phi_{i,k}(b,x) = \frac{2^k \pi g_i(b,x)}{q}-\frac{\pi}{2}$.
As both states are phase encodings, the inner product will be determined by the angle differences between the components. In other words, letting
\begin{equation}
\Delta \phi_{i,k} = \frac{2^k \pi (g_i(0,x) - g_i(1, x - s))}{q}
\end{equation}
and noting that $g_i(0, x) = (\mathbf{A}x)_i$ and $g_i(1, x - s) = (\mathbf{A}x)_i + e_i$, it is the case that
\begin{equation}
\Delta \phi_{i,k} = \frac{2^k \pi e_i}{q}.
\end{equation}

We can now express the inner product as
\begin{equation}
\begin{aligned}
\braket{\phi(0, x)| \phi(1, x - s)} &= \prod_{i, k} \left[ \exp\left(i\frac{\Delta \phi_{i,k}}{2}\right) \cos \left(\frac {\Delta \phi_{i,k}}{2} \right ) \right]^v \\
&=  \exp\left(i\sum_{i, k}\frac{\Delta \phi_{i,k}\cdot v}{2}\right) \prod_{i, k} \left( \cos \left( \frac {\Delta \phi_{i,k}}{2} \right ) \right)^v
\end{aligned}
\end{equation}
But now note that $\| e \|_{\infty}  \leq \frac{cq}{p^5}$, for some constant $c > 0$, as per Definition~\ref{def:lwrntcf}. If we substitute this into the formula for $\Delta \phi_{i, k}$, keeping in mind that $2^k \leq p$, we find that
\begin{equation}
\Delta \phi_{i,k} = \frac{2^k \pi e_i}{q} \leq \frac{\pi}{p^4}.
\end{equation}
Taking $n$ to be sufficiently large, so that $p$ is sufficiently large, leads to
\begin{equation}
\cos\left(\frac {\Delta \phi_{i,k}}{2}\right) \geq 1 - \frac{\pi^2}{8p^8} - O \left( p^{-16} \right)
\end{equation}
and
\begin{equation}
    \prod_{i, k} \left( \cos \left(\frac {\Delta \phi_{i,k}}{2} \right ) \right)^v \geq \left(1 - \frac{\pi^2}{8p^8} \right)^{m v  \log_2 p}.
\end{equation}
But now $p^8 = O \left( (mn \log q)^4 \right) = O(n^{16})$ and $mv \log_2 p = O(n^2 \cdot n^4 \log n \cdot \log n) = O(n^6 \log^2 n)$. It follows that
\begin{equation}
\prod_{i, k} \left( \cos \left( \frac {\Delta \phi_{i,k}}{2} \right ) \right)^v \geq 1 - \frac{1}{poly(n)}.
\end{equation}
For the phase part
\begin{equation}
    \sum_{i, k}\frac{\Delta \phi_{i,k}\cdot v}{2} = O\left(p^{-4} (m\log_2 p)^3 \log_2(m\log_2 p) \right) = O\left( \frac{(\log_2 n)^4}{n^2} \right)
\end{equation}
and similarly
\begin{equation}
\exp\left(i\sum_{i, k}\frac{\Delta \phi_{i,k}\cdot v}{2}\right) = 1-O\left( \frac{(\log_2 n)^8}{n^4} \right)+i O\left( \frac{(\log_2 n)^4}{n^2}\right)=1-\frac{1}{poly(n)}+i\cdot \frac{1}{poly(n)}.
\end{equation}
Finally,
\begin{equation}
\braket{\phi(0, x)| \phi(1, x - s)} = 1-\frac{1}{poly(n)}+i\cdot \frac{1}{poly(n)}
\end{equation}
and the fidelity can be lower-bounded as follows

\begin{equation}
\begin{aligned}
|\braket{\psi_\mathrm{ideal} | \psi_\mathrm{ideal,2}}|^2 &\geq \left[ \frac{1}{2q^n} \left( \left|0.99 q^n \cdot 0.99 + \sum_{x_0}\left(1-\frac{1}{poly(n)}+i\cdot \frac{1}{poly(n)} \right)\right| - 0.01q^n\right) \right]^2  \\
&=\left[ -\frac{0.01}{2} + \frac{1}{2} \left| 0.98+1-\frac{1}{poly(n)}+i\cdot \frac{1}{poly(n)} \right|  \right]^2\\
&=\left[ -\frac{0.01}{2}+\frac{1}{2}\sqrt{\left(1.98-\frac{1}{poly(n)}\right)^2 + \left(\frac{1}{poly(n)}\right)^2 } \right]^2\\
&=\left[ \frac{-0.01+1.98-\frac{1}{poly(n)}}{2} \right]^2 = \left[\frac{1.97-\frac{1}{poly(n)}}{2}\right]^2 > 0.97
\end{aligned}
\end{equation}

for large sufficiently $n$.
\end{proof}

Combining Lemmas~\ref{lemma:first_ideal} and~\ref{lemma:second_ideal}, we conclude that the success probability for an honest prover is lower bounded by $0.95$, using a union bound.

\subsubsection{Completeness}
We can now compute the probability for an honest prover, following the strategy outlined in Figure~\ref{fig:bcmvvprotocol2}, to pass the verifier's checks.
We start with the observation that $q$ is prime. As mentioned, this would require the prover to create a superposition in the preimage register of $q^n$ components. Instead, the prover creates a superposition of $q'^n$ components, where $q'$ is a power of 2 that is close to $q$. From the results in Subsection~\ref{sec:q_is_prime}, we incur a $O(n^{-1})$ penalty in the honest prover's success probability as a result of this. Next, we saw that when performing the measurement of the image register, there is a chance that the $\ket{\phi(b, x)}$ state contains components that are undecodable. 
We limited the probability of this happening to $1\%$, with the parameter choices mentioned in Subsection~\ref{sec:rep-and-majority}. Assuming the state is decodable, we saw that the probability of incorrectly decoding is also $1\%$. With these results, we showed in Subsection~\ref{subsect:fidelity} that the prover's state, upon measuring the image register (and successfully decoding the result, which is sent to the verifier), gives it at least a $95\%$ success probability in the equation and preimage tests. This also accounted for the failure probability of incorrectly decoding the image register.
Finally, as discussed in Subsection~\ref{sec:const-depth-phase-enc}, if we choose to use a fixed-size gate set, we will incur another $1/poly(n)$ error.

Putting everything together, we find that the overall completeness of the protocol is $95\% - O(n^{-1})$.

\subsubsection{Soundness}
Since we showed that the LWR-based function $f(b,x)$ is also an NTCF, in Subsection \ref{sec:lwr-ntcf}, our new constant quantum depth protocol inherits the soundness of the original BCMVV protocol.

\subsection{Resource estimation} \label{subsect:resources2}
As in Subsection~\ref{subsect:resources}, we summarize the resources required for an honest prover to succeed in the protocol.
\subsubsection{Quantum depth and quantum-classical interleavings}
\label{sec:forbigp}
\begin{enumerate}
    \item{\textbf{Preparation of cat states.}}
    Same as in the randomized encoding construction, the depth of this step is 5 and the prover interleaves constant-depth quantum computation and classical log-depth computation once.
    \item{\textbf{Evaluation of the LWR function by phase encoding.}}
    As is illustrated in Figure~\ref{fig:circuit_CRzs}, this step consists of only parallel $\mathrm{CR}_z$ gates or $\mathrm{R}_z(\frac{\pi}{2})$ gates. The depth added is only 1 for the example case in Figure~\ref{fig:circuit_CRzs}.
    \item{\textbf{Measurement of the $\mathsf{Z}$ register.}} As is explained in Subsection~\ref{sec:rep-and-majority}, the measurement of the $\mathsf{Z}$ register contains Hadamard measurements and a majority vote (performed classically on the measurement outcome), hence this step has quantum depth 2 and adds 1 step of quantum-classical interleaving.
    \item{\textbf{Preimage test/equation test.}} Exactly the same as in the BCMVV protocol, this step requires at most depth 2 and 1 interleaving for the equation test.
\end{enumerate}
In summary, the phase encoding construction requires even shorter quantum depth than the generic construction, as the overall quantum depth is $5+1+2+2=10$. The number of quantum-classical interleaving is 3, same as the generic construction.

\subsubsection{Circuit width}
The total width of the circuit is determined by the product of several multipliers in the protocol:
\begin{enumerate}
    \item{\textbf{Number of output components of $g(b,x)$}} is $O(m) = O(n^2)$, by definition.
     \item{\textbf{$\lfloor \cdot \rfloor_p$ rounding function.}}  The phase encoding needs to be prepared for all of the $\log p$ bits. This leads to another $O(\log (\sqrt{mn\log q})) = O(\log m) = O(\log n)$ multiplier.
    \item{\textbf{Cat state.}} As discussed in Subsection~\ref{sec:const-depth-phase-enc}, the size of the cat state for each component $\ket{\phi_i}$ needs to be $O(n\log q)=O(n^2)$.
    \item{\textbf{Repetition for majority votes.}} This is calculated in Subsection \ref{subsect:generalp} and each $\ket{\phi_i}$ needs to be repeated for $v=O(n^4 \log^2 n)$ times.
  
\end{enumerate}
In summary, the total circuit width required is $O(n^8 \log^3 n)$. Although this is still a high-order polynomial, it is a significant improvement over the randomized encoding construction (where we estimated $O(n^{33})$ width). Note that the normal, poly-depth, construction requires $O(m\log q)=O(n^3)$ width.

It is also worth mentioning that there can be a trade-off between the size of the cat states and the depth of the circuit, since the matrix multiplication does not need to be fully parallelized. In practice, one can double the number of $CR_z$ gates applied on each qubit to halve the width.

\subsection{Robustness against noise}
Another feature of our phase encoding construction is some amount of intrinsic robustness against noise, which makes it closer to practical use on near-term devices.

The key reasons for the noise-resistance are the use of cat states, the classical repetition code we applied in measuring the $\mathsf{Z}$ register, as is discussed in Subsection~\ref{sec:rep-and-majority}, the error-correcting properties of the LWR construction which we used implicitly in Subsection~\ref{subsect:fidelity} and the constant gap between the best quantum strategy and the best classical strategy (assuming intractability of LWE) as encapsulated by Inequality~\ref{ineq:winntcf}.

We can therefore see that errors on the image register, $\mathsf{Z}$, may lead to bit flips of the output string $z$ such that $z\neq y$ (where recall that $y$ is the ideal decoding). However, since any bit $z_i$ is determined by majority voting for all $v$ repetitions of the phase encoding of that bit, the probability that $z_i$ is flipped is much smaller than that of single bit flipping. Intuitively speaking, some correctly measured bits may be flipped due to noise that might appear in any stage of the protocol, but incorrect bits are equally likely to be flipped. Hence the majority vote will still very likely output $z_i=y_i$.

A repetition code is also used indirectly in the preimage register, as the preimages are encoded in cat states. While this makes the preimage test robust to noise, the equation test will not be, in general. 
This is because in the equation test, the prover needs to report a string $d$ and a bit $b$ such that
\begin{equation}
d \cdot (\bar{x}_0 \oplus \bar{x}_1) = b
\end{equation}
where $\bar{x}_0$ and $\bar{x}_1$ are the repetition code encodings of preimages $x_0$ and $x_1$ (that match the image the prover returned in the previous round of the protocol). In this case we can see that even a single bit flip in either the string $d$ or of the bit $b$ can make the equation invalid.
We therefore leave it as an open problem to find a fully noise-robust implementation of the protocol.

\bibliography{mainbib}

\appendix
\section{Randomized encoding construction from~\cite{AIK06}} \label{app:bre}
The construction of randomized encodings from~\cite{AIK06} is based on \emph{branching programs}. 
We are only interested in mod-2 branching programs, which we define here:

\begin{definition}[Branching programs \cite{AIK06}]
A branching program (BP) is defined by a tuple $BP=(G,\phi,s,t)$ where $G=(V,E)$ is a directed acyclic graph, $\phi$ is a labeling function assigning each edge either a positive literal $x_i$ or a negative literal $\neg{x_i}$. An input binary vector $\vec{w}$ determines a subgraph $G_w$ where an edge labeled as $x_i$ is preserved if and only if $w_i=1$. In a (counting) mod-2 BP, the BP computes the number of paths from $s$ to $t$ modulo 2. The size, $l$, of a BP is defined as the number of vertices, $|V|$.
\end{definition}

\noindent As an example, Figure \ref{fig:bp} shows a mod-2 branching program of size $l=4$ and having three inputs $x=(x_0,x_1,x_2)$.
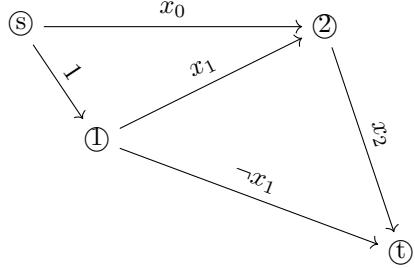
\begin{figure}[h!]
    \centering
    \begin{tikzpicture}
        \node (p_s) at (0,3) {$\textcircled{s}$};
        \node (p_t) at (5, 0) {$\textcircled{t}$};
        \node (p_1) at (1,1.5) {$\textcircled{1}$};
        \node (p_2) at (4,3) {$\textcircled{2}$};
        \draw [->] (p_s) -- (p_1) node [midway, above, sloped] (TextNode) {$1$};
        \draw [->] (p_s) -- (p_2) node [midway, above, sloped] (TextNode) {$x_0$};
        \draw [->] (p_1) -- (p_2) node [midway, above, sloped] (TextNode) {$x_1$};
        \draw [->] (p_1) -- (p_t) node [midway, above, sloped] (TextNode) {$\neg{x}_1$};
        \draw [->] (p_2) -- (p_t) node [midway, above, sloped] (TextNode) {$x_2$};
    \end{tikzpicture}
    \caption{This size-4 mod-2 branching program consists of 5 edges whose connectivity is decided by the value of the input bits. Note that $\neg{x}_1$ means that this edge is available if and only if $x_1=0$. As an example, when the input $x=(x_0,x_1,x_2)=(0,1,1)$, there is only one path from $s$ to $t$ which is $\textcircled{s}-\textcircled{1}-\textcircled{2}-\textcircled{t}$. Thus the output of this mod-2 BP will be 1.}
    \label{fig:bp}
\end{figure}

We now state one of the most important results concerning branching programs, due to Barrington:
\begin{theorem}[Barrington's theorem \cite{barrington1989bounded}]
\label{barrington}
If $f : \{0, 1\}^n \rightarrow \{0, 1\}$ can be computed by a circuit of depth $d$, then it can be computed by a branching program of width 5 and length $O(4^d)$.
\end{theorem}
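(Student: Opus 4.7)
The plan is to prove Barrington's theorem via \emph{permutation branching programs} over the symmetric group $S_5$, simulating fan-in-$2$ Boolean circuits by induction on depth. A permutation branching program is a sequence of instructions $(i_j, \alpha_j, \beta_j)$ with $i_j \in [n]$ and $\alpha_j, \beta_j \in S_5$; on input $x$ it outputs the product $\gamma_1 \gamma_2 \cdots \gamma_\ell \in S_5$, where $\gamma_j = \alpha_j$ if $x_{i_j}=0$ and $\gamma_j = \beta_j$ if $x_{i_j}=1$. Fix a $5$-cycle $\sigma$ and say that such a program $\sigma$-\emph{computes} a Boolean function $f$ if the product equals the identity $e$ whenever $f(x)=0$ and equals $\sigma$ whenever $f(x)=1$. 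Any length-$\ell$ permutation branching program can then be unrolled into a layered graph on $5$ vertices per layer, matching the form of the theorem statement up to a constant factor.

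First I would establish three closure properties. \emph{Conjugation:} premultiplying the first instruction by $\tau$ and postmultiplying the last by $\tau^{-1}$ turns a $\sigma$-computer for $f$ into a same-length $(\tau \sigma \tau^{-1})$-computer; since all $5$-cycles are conjugate in $S_5$, this lets me switch output cycles freely. \emph{Reversal} with inversion yields a $\sigma^{-1}$-computer of the same length. \emph{Negation:} appending one constant instruction $\sigma^{-1}$ converts a $\sigma$-computer for $f$ into a $\sigma$-computer for $\neg f$.

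The heart of the argument is the AND step, which rests on the group-theoretic fact that $A_5$ is non-abelian simple, so there exist $5$-cycles $\alpha, \beta \in S_5$ whose commutator $[\alpha, \beta] = \alpha \beta \alpha^{-1} \beta^{-1}$ is itself a $5$-cycle. Given programs for $g$ and $h$ of lengths $\ell_g$ and $\ell_h$, I would build programs $P_g^\alpha,\, P_h^\beta,\, P_g^{\alpha^{-1}},\, P_h^{\beta^{-1}}$ via the closure properties and concatenate them in that order. If either $g(x)=0$ or $h(x)=0$ then one factor is $e$ and the remaining three telescope to $e$; if both are $1$ the product is $[\alpha,\beta]$. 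Thus $g \land h$ is $[\alpha,\beta]$-computed in length $2(\ell_g + \ell_h)$.

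To conclude, I would induct on the depth $d$ of the circuit, first rewriting it with only AND and NOT gates at a constant-factor overhead in depth. A literal is $\sigma$-computed by a one-instruction program; each NOT gate adds a constant; each AND gate at most quadruples the length, giving $\ell_d \le 4\,\ell_{d-1} + O(1) = \mathcal{O}(4^d)$. The main obstacle, and the reason the result is deep, is precisely the requirement that the commutator of two $5$-cycles be a $5$-cycle: this is what makes the inductive length blow-up a bounded constant and is exactly where the non-solvability of $A_5$ enters. The analogous construction provably fails for widths below $5$, where the underlying groups are solvable, so no smaller-width bound can be obtained by this argument.
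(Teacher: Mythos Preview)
The paper does not prove this statement at all: Barrington's theorem is quoted as an external result with a citation to \cite{barrington1989bounded}, and the paper only uses its consequence that log-depth circuits yield polynomial-size branching programs. So there is no ``paper's own proof'' to compare against.

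Your proposal is the standard proof from Barrington's original paper, and the outline is correct. One small slip: appending a constant instruction $\sigma^{-1}$ turns a $\sigma$-computer for $f$ into a $\sigma^{-1}$-computer for $\neg f$ (output $\sigma^{-1}$ when $f=0$, output $e$ when $f=1$), not directly a $\sigma$-computer; you then need one more conjugation step (which you already have available, since $\sigma^{-1}$ is also a $5$-cycle) to bring the output cycle back to $\sigma$. This does not affect the asymptotic length bound. Everything else---the commutator construction for AND, the use of non-solvability of $A_5$ to guarantee a $5$-cycle commutator of $5$-cycles, and the $\mathcal{O}(4^d)$ recursion---is exactly the classical argument.
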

\noindent The above theorem ensures that the log-depth (N)TCFs used in proof of quantumness protocols can be transformed into polynomial-size branching program. 
Given that branching programs output a single bit, this construction has to be performed for every output bit of a (N)TCF.

A size-$l$ mod-2 BP for a binary function $f$ can be represented by an adjacency matrix since BPs are directed acyclic graphs. Let $A(x)$ denote the $l \times l $ adjacency matrix of a BP with input $x$. We also denote as $L(x)$ the $(l-1)\times(l-1)$ submatrix of $A(x) - I$ obtained by deleting the first column and the last row. It turns out that the following fact holds:
\begin{lemma}[\cite{AIK06}]
$f(x)=\det(L(x)) \; mod \; 2.$
\end{lemma}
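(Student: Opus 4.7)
The plan is to identify $f(x)$ — the parity of the number of $s\to t$ paths in the input-induced subgraph $G_x$ — with a specific entry of a matrix inverse, and then use Cramer's rule to rewrite that entry as a minor determinant, which will match $\det(L(x)) \bmod 2$ after absorbing signs.

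First I would fix a topological ordering of the vertices $V = \{v_1, \dots, v_l\}$ of $G$, with $v_1 = s$ and $v_l = t$; this is possible because $G$ is a DAG and (without loss of generality) $s$ is a source and $t$ is a sink in every $G_x$. In this ordering, $A(x)$ is strictly upper triangular, and in particular nilpotent with $A(x)^l = 0$. Consequently the geometric-series identity
\[
(I - A(x))^{-1} \;=\; \sum_{k=0}^{l-1} A(x)^k
\]
holds over $\mathbb{Z}$, and the $(i,j)$ entry of $A(x)^k$ counts the number of directed walks from $v_i$ to $v_j$ of length $k$ in $G_x$. Summing over $k$, the $(1, l)$ entry of $(I - A(x))^{-1}$ equals the total number of $s\to t$ paths in $G_x$, so $f(x)$ is the parity of this entry.

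Next I would apply Cramer's rule. Since $A(x)$ is strictly upper triangular, $I - A(x)$ is upper unitriangular, so $\det(I - A(x)) = 1$. Therefore
\[
\bigl[(I - A(x))^{-1}\bigr]_{1,l} \;=\; \frac{\operatorname{cof}_{l,1}(I - A(x))}{\det(I - A(x))} \;=\; (-1)^{l+1}\, M_{l,1}(I - A(x)),
\]
where $M_{l,1}(I - A(x))$ is the minor obtained by deleting row $l$ and column $1$ from $I - A(x)$. Equivalently, this is the determinant of the $(l-1)\times(l-1)$ submatrix of $I - A(x)$ obtained by deleting its last row and first column.

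Finally, I would reduce modulo $2$. Over $\mathbb{F}_2$, $-1 \equiv 1$, so $I - A(x) \equiv A(x) - I$ entrywise and the sign $(-1)^{l+1}$ disappears. The submatrix of $I - A(x)$ obtained by deleting the last row and first column therefore coincides mod $2$ with the matrix $L(x)$ defined in the statement, yielding
\[
f(x) \;\equiv\; \bigl[(I - A(x))^{-1}\bigr]_{1,l} \;\equiv\; \det(L(x)) \pmod{2}.
\]
The only subtle point is the topological-ordering assumption and the fact that $s$ is a source and $t$ a sink; but these are built into the usual definition of a mod-2 branching program and the nilpotency of $A(x)$ is all that really gets used. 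No deep obstacle arises — the argument is essentially the standard DAG path-counting identity, with mod $2$ reduction trivializing the signs.
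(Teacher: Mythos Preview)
The paper does not prove this lemma; it merely cites it from \cite{AIK06}. Your argument is correct and is essentially the standard proof: path-counting in a DAG via the geometric series $(I-A)^{-1}=\sum_k A^k$, then Cramer's rule to identify the $(s,t)$ entry with the relevant minor, with mod-$2$ reduction killing the signs. One small point worth tightening is that the definition of $L(x)$ in the paper already presupposes a fixed vertex ordering with $s$ first and $t$ last, so your topological-ordering step is not an additional assumption but simply making explicit the convention under which $L(x)$ was defined.
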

This lemma is the basis for constructing a randomized encoding for $f$. The goal will be to ``garble'' $L(x)$ through products with certain random matrices. The garbling should be done in such a way that the determinant of the resulting matrix matches that of $L(x)$, thus preserving the correctness of the construction.

To that end, let $r^{(1)} \leftarrow_R \{0, 1\}^{{l-1 \choose 2}}$ and $r^{(2)} \leftarrow_R \{0, 1\}^{l-2}$. Use these to construct matrices $R^{(1)}$ and $R^{(2)}$ of dimensions $(l-1)\times(l-1)$. Both matrices have all diagonal elements equal to $1$. The right upper-diagonal elements of $R^{(1)}$ (that is, the entries $R^{(1)}_{i,j}$ with $j > i$) are filled with the entries of $r^{(1)}$. The last column of $R^{(2)}$, except for the last element, (that is, the entries $R^{(2)}_{i, l-1}$, $1 \leq i \leq l - 2$) is filled with the elements of $r^{(2)}$. All other entries of $R^{(1)}$ and $R^{(2)}$ are $0$. The following can be shown:
\begin{lemma}[\cite{AIK06}]
$\det(L(x))=\det(R^{(1)} L(x) R^{(2)})$
\end{lemma}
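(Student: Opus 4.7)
The plan is to exploit the fact that, by construction, both $R^{(1)}$ and $R^{(2)}$ are upper triangular matrices with ones on the diagonal. For $R^{(1)}$, the diagonal entries are explicitly set to $1$ and the only other nonzero entries are the "right upper-diagonal" positions $R^{(1)}_{i,j}$ with $j > i$ (filled from $r^{(1)}$), with every other entry zero. For $R^{(2)}$, the diagonal is again all ones and the only other nonzero entries lie in the last column above the final row, i.e.\ at positions $R^{(2)}_{i,l-1}$ with $i < l-1$. In both cases every entry below the diagonal is zero, so $R^{(1)}$ and $R^{(2)}$ are upper triangular with unit diagonal.

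First I would record the standard fact that the determinant of an upper triangular matrix equals the product of its diagonal entries, which immediately yields $\det(R^{(1)}) = \det(R^{(2)}) = 1$ (this holds over $\mathbb{F}_2$, or indeed over any commutative ring). Then I would invoke multiplicativity of the determinant to conclude
\[
\det\bigl(R^{(1)} L(x) R^{(2)}\bigr) \;=\; \det(R^{(1)}) \cdot \det(L(x)) \cdot \det(R^{(2)}) \;=\; \det(L(x)),
\]
which is exactly the claimed equality.

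There is essentially no obstacle here; the lemma is an immediate consequence of two elementary facts about determinants. The only mild thing to check is that the sparsity patterns prescribed in the paragraph defining $R^{(1)}$ and $R^{(2)}$ really do produce upper triangular matrices with unit diagonal, and this is a direct reading of the definitions. Consequently the garbling by left and right multiplication preserves the determinant (and hence the value of $f(x) \bmod 2$), which is the property needed for correctness of the randomized encoding built from $L(x)$.
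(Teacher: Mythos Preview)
Your proposal is correct and matches the paper's own justification essentially verbatim: the paper simply remarks that ``both $R^{(1)}$ and $R^{(2)}$ have determinant $1$,'' which is exactly your observation (upper triangular with unit diagonal) combined with multiplicativity of the determinant.
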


This is not too difficult to see, as both $R^{(1)}$ and $R^{(2)}$ have determinant $1$. 
One now defines the randomized encoding $\tilde{f}(x, r^{(1)}, r^{(2)}) = R^{(1)} L(x) R^{(2)}$. It follows that:
\begin{lemma}[\cite{AIK06}]
$\tilde{f}$ is a perfect randomized encoding of $f$.
\end{lemma}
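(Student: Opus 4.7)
The plan is to verify the three defining properties of a perfect randomized encoding in turn: efficient generation, perfect correctness, and perfect privacy. Efficient generation is immediate from the construction, since $L(x)$ is a fixed $\mathbb{F}_2$-affine function of $x$ (read off directly from the branching program's adjacency matrix), and $R^{(1)} L(x) R^{(2)}$ is then computable by an explicit circuit of size $\mathrm{poly}(l)$ over $\mathbb{F}_2$. For perfect correctness, define the decoder $\Dec$ to output $\det(\cdot) \bmod 2$ of its input matrix. By multiplicativity of the determinant and the previously stated lemma, $\det(R^{(1)} L(x) R^{(2)}) = \det(R^{(1)}) \det(L(x)) \det(R^{(2)})$, and both $R^{(1)}$ and $R^{(2)}$ are triangular with $1$s on the diagonal, so each factor of $\det(R^{(i)})$ equals $1$. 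Hence $\Dec(\tilde f(x, r^{(1)}, r^{(2)})) = \det(L(x)) = f(x)$ deterministically, giving $\delta = 0$.

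The heart of the proof is perfect privacy. My plan is to establish the following key claim: for every $(l-1) \times (l-1)$ matrix $M$ over $\mathbb{F}_2$, the distribution of $R^{(1)} M R^{(2)}$ under uniformly sampled $r^{(1)}, r^{(2)}$ depends only on $\det(M) \bmod 2$. Granting this, the simulator $S$ is straightforward: on input $b \in \{0,1\}$, pick a canonical representative $M_b$ with $\det(M_b) = b$ (for instance, $M_1 = I_{l-1}$ and $M_0$ the same matrix with its $(1,1)$ entry flipped to $0$), freshly sample $r^{(1)}, r^{(2)}$ and output $R^{(1)} M_b R^{(2)}$. Since $\det(L(x)) = f(x)$ is exactly the quantity on which the distribution depends, $S(f(x))$ is identically distributed to $\tilde f(x, r^{(1)}, r^{(2)})$, so $\epsilon = 0$.

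The main obstacle is proving the claim that $R^{(1)} M R^{(2)}$ has a distribution determined solely by $\det(M)$. My plan is to analyze the entries of $N := R^{(1)} M R^{(2)}$ directly, exploiting the very restricted shapes of $R^{(1)}$ (unit upper triangular with $\binom{l-1}{2}$ independent uniform bits above the diagonal) and $R^{(2)}$ (identity plus a uniform vector $r^{(2)}$ in its last column above the corner). The idea is to peel off entries one at a time: moving bottom-to-top and right-to-left, I expect to show that for each position outside a specific ``diagonal'' subset of indices, the corresponding entry of $N$ is a sum of an $M$-dependent term plus a single fresh bit of $r^{(1)}$ or $r^{(2)}$ that appears in no later entry, hence is marginally uniform and independent of everything computed so far. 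The remaining ``structural'' entries of $N$ can then be computed from $M$ via row/column operations whose combined determinant is $1$, which pins down the one surviving $M$-dependent scalar to $\det(M)$. A clean way to organize this is as a bijection between $(r^{(1)}, r^{(2)})$ and the pair (random free entries of $N$, residual determinant bit), showing uniformity on the orbit. Once the existence of this bijection is verified by counting ($|r^{(1)}| + |r^{(2)}| = \binom{l-1}{2} + (l-2)$ bits mapping onto the same number of ``free'' output bits), the privacy claim follows, and the lemma is complete.
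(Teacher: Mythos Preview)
Your overall architecture (correctness via determinant, privacy via showing the output distribution depends only on $\det L(x)$) matches the standard AIK06 argument, but the key privacy claim as you stated it is false, and this breaks both the claim and your proposed simulator.

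You assert that for \emph{every} $(l-1)\times(l-1)$ matrix $M$ over $\mathbb{F}_2$, the distribution of $R^{(1)} M R^{(2)}$ depends only on $\det M$. This cannot hold: the randomness has only $\binom{l-1}{2}+(l-2)$ bits, far fewer than $(l-1)^2$, so the map cannot collapse arbitrary $M$ onto a two-element orbit. Concretely, $M=0$ gives $R^{(1)}MR^{(2)}=0$ deterministically, while an $L(x)$-shaped $M$ with determinant $0$ does not. The correct statement, and the one AIK06 proves, is restricted to matrices having the specific structure of $L(x)$: $-1$'s on the first subdiagonal and zeros strictly below it. That structure is exactly what makes the argument work, because left multiplication by a unit upper-triangular $R^{(1)}$ lets you use each subdiagonal $-1$ as a pivot to clear the column above it, and then right multiplication by $R^{(2)}$ lets you use those same pivots to clear all but the top entry of the last column. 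The result is a canonical form with $-1$'s on the subdiagonal, zeros elsewhere, and the $(1,l-1)$ entry equal to $\det L(x)=f(x)$; this is the simulator's output, not $R^{(1)} I R^{(2)}$.

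Your counting sketch also needs a small correction: the number of ``free'' (non-structural) entries of $N$ is $\binom{l}{2}$, which exceeds the number of random bits by exactly one --- that extra bit is $f(x)$, not absorbed by the randomness. Once you restrict the claim to $L(x)$-shaped matrices and use the canonical form above as the simulator, your bijection-style argument goes through and is essentially the AIK06 proof.
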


By construction, every entry of $\tilde{f}$ is a degree-$3$ polynomial in its input variables. However, computing this function (i.e. computing every matrix entry of $R^{(1)} L(x) R^{(2)}$) cannot be done in constant-depth. The reason is that some of the input variables are involved in a linear number of monomials of the output. To compute the function in constant depth, it must be that each input variable appears in only a constant number of monomials. The authors of~\cite{AIK06} remedy this by considering a randomized encoding for $\tilde{f}$. Before doing so, note that

\begin{lemma}[\cite{AIK06}]
The composition of perfect randomized encodings is still a perfect randomized encoding of the original function.
\end{lemma}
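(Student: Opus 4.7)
The plan is to verify each of the three defining properties of a perfect randomized encoding (efficient generation, perfect correctness, perfect privacy) for the composed object. Fix notation: let $\hat{f}(x,r)$ be a perfect randomized encoding of $f(x)$ with decoder $\Dec_1$ and simulator $S_1$, and let $\hat{g}(z,r')$ be a perfect randomized encoding of $\hat{f}$, viewed now as a function whose input is the pair $z=(x,r)$, with decoder $\Dec_2$ and simulator $S_2$. The candidate composite encoding of $f$ is $\tilde{g}(x,(r,r')) := \hat{g}((x,r),r')$, where the randomness is the concatenation $(r,r')$ drawn uniformly.

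For efficient generation, I would simply chain the two generation algorithms: run the efficient generator for $\hat{f}$ on the circuit for $f$, then feed its output into the efficient generator for $\hat{g}$; both steps are polynomial time, so the composition is too. For perfect correctness, define $\Dec := \Dec_1 \circ \Dec_2$. By perfect correctness of the outer encoding, $\Dec_2(\tilde{g}(x,(r,r'))) = \hat{f}(x,r)$ with probability $1$ over $r'$, and by perfect correctness of the inner encoding, $\Dec_1(\hat{f}(x,r)) = f(x)$ with probability $1$ over $r$. Composing, $\Dec(\tilde{g}(x,(r,r'))) = f(x)$ with probability $1$, as required.

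The main (and still mild) subtlety is perfect privacy, since it is a statement about distributions rather than a pointwise identity. Define $S := S_2 \circ S_1$, where each simulator is understood to draw fresh internal randomness. The argument is a two-step distributional hybrid: perfect privacy of $\hat{f}$ gives $S_1(f(x))$ identically distributed to $\hat{f}(x,r)$ for uniform $r$, and therefore applying the randomized algorithm $S_2$ to either side yields equal output distributions; then perfect privacy of $\hat{g}$ gives $S_2(\hat{f}(x,r))$ identically distributed to $\hat{g}((x,r),r')$ when $r,r'$ are uniform and independent, which is exactly $\tilde{g}(x,(r,r'))$. Chaining these equalities of distributions yields $TVD(S(f(x)),\tilde{g}(x,(r,r'))) = 0$. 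I do not expect a genuine obstacle here; the only thing to be careful about is tracking which randomness is uniform versus conditioned on, and making sure that $S_2$'s internal randomness remains independent of $S_1$'s when we treat the composition as a single PPT algorithm.
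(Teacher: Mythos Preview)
Your proposal is correct. Note, however, that the paper does not actually give its own proof of this lemma: it is stated with a citation to \cite{AIK06} and used as a black box, so there is nothing in the paper to compare against. What you wrote is the standard argument (compose decoders for correctness, compose simulators for privacy, chain generators for efficiency) and matches the proof in \cite{AIK06}; the only delicate point, which you handled correctly, is that perfect privacy of the outer encoding holds pointwise for every fixed $(x,r)$, so one may average over uniform $r$ afterward without loss.
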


\noindent Thus, a randomized encoding for $\tilde{f}$ will also be a randomized encoding for $f$. Denote the $i, j$ entry of $\tilde{f}$ as $\tilde{f}_{i, j}$. We can see that
\begin{equation} \label{appeq:tildef}
\tilde{f}_{i,j}(x, r^{(1)},r^{(2)})=T_1(x, r^{(1)},r^{(2)}) \oplus T_2(x, r^{(1)},r^{(2)}) \oplus ... \oplus T_k(x, r^{(1)},r^{(2)})
\end{equation}
where each $T_m$ is a monomial in the input variables. Finally, define $\hat{f}$ as
\begin{equation} \label{appeq:hatf}
    \hat{f}_{i,j}(x, r^{(1)},r^{(2)},r,r^\prime) = (T_1 \oplus r_1,T_2 \oplus r_2,...,T_k \oplus r_k,r_1 \oplus r_1^\prime,r_1^\prime \oplus r_2 \oplus r_2^\prime,...,r_{k-1}^\prime  \oplus r_k)
\end{equation}
where $r$ and $r^\prime$ are newly introduced vectors of random bits. Note that adding all entries in~\ref{appeq:hatf} results in the summation from Equation~\ref{appeq:tildef}. Thus, $\hat{f}$ contains all of the information required to compute $\tilde{f}$ and moreover,
\begin{lemma}[\cite{AIK06}]
$\hat{f}$ is a perfect randomized encoding of $f$ with output locality $4$.
\end{lemma}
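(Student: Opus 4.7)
The plan is to invoke the composition lemma stated earlier in this appendix: since $\tilde{f}$ is already known to be a perfect randomized encoding of $f$, it suffices to show that $\hat{f}$ is a perfect randomized encoding of $\tilde{f}$, treating $(r, r^\prime)$ as the fresh encoding randomness and $(x, r^{(1)}, r^{(2)})$ as the ``input''. The output locality is then read off the formula by inspection.

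For perfect correctness, I will take the decoder to be the XOR of all $2k$ output bits of $\hat{f}$. A telescoping check shows that every $r_m$ and every $r_m^\prime$ occurs in exactly two of the output coordinates: each $r_m$ appears once in $T_m \oplus r_m$ and once in one of the connecting terms, while each $r_m^\prime$ appears in two consecutive connecting terms. These cancel pairwise in the XOR, leaving $T_1 \oplus \dots \oplus T_k = \tilde{f}_{i,j}$, which is precisely the value we want to recover.

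For perfect privacy, define the simulator $S$ that, on input $\tilde{f}$, samples $2k-1$ uniform bits and sets a final bit so that the XOR of all $2k$ bits equals $\tilde{f}$. Then $S(\tilde{f})$ is uniform on the affine subspace $V_{\tilde{f}} = \{v \in \{0,1\}^{2k} : \bigoplus_i v_i = \tilde{f}\}$, and I need to show that $\hat{f}(x, r^{(1)}, r^{(2)}, \cdot, \cdot)$ has the same distribution. The image is contained in $V_{\tilde{f}}$ by the correctness calculation, so the main step is to verify that the affine map $(r, r^\prime) \mapsto \hat{f}$ is a bijection onto $V_{\tilde{f}}$. This reduces to checking that its linear part is injective, which is immediate by direct inversion: the first $k$ output coordinates recover $r_1, \dots, r_k$, and then successively applying the connecting terms recovers $r_1^\prime, \dots, r_{k-1}^\prime$. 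Since the domain $\{0,1\}^{2k-1}$ and $V_{\tilde{f}}$ both have size $2^{2k-1}$, injectivity upgrades to bijectivity, so a uniform $(r, r^\prime)$ yields a uniform element of $V_{\tilde{f}}$.

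Finally, for the locality claim: each $T_m \oplus r_m$ depends on at most $3 + 1 = 4$ variables (since $T_m$ is a monomial of degree at most $3$ in the input variables of $\tilde{f}$), each internal connecting term $r_{m-1}^\prime \oplus r_m \oplus r_m^\prime$ has locality $3$, and the two boundary terms $r_1 \oplus r_1^\prime$ and $r_{k-1}^\prime \oplus r_k$ have locality $2$. The overall output locality is therefore $4$. The only genuine piece of work in the proof is the linear-algebraic bijectivity check that underlies perfect privacy; the rest is bookkeeping. Combining perfect correctness, perfect privacy, and the composition lemma then yields that $\hat{f}$ is a perfect randomized encoding of $f$ with output locality $4$.
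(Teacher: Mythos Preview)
Your proof is correct and follows the standard argument from \cite{AIK06}, which is precisely what the paper defers to (the paper does not supply its own proof of this lemma, only the statement). The one small point worth noting is that your argument treats a single coordinate $\tilde{f}_{i,j}$; to get the full $\hat{f}$ as an encoding of $f$ you implicitly also use the concatenation property (encoding each output bit with independent fresh randomness preserves perfectness), but this is routine and is handled the same way in \cite{AIK06}.
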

Here, output locality $4$ means that each output bit depends on at most $4$ input bits, which immediately implies that the function can be evaluated in constant depth.
The classical circuit computing an entry of $\hat{f}$ is shown in Figure \ref{fig:classical-circuit}. Detailed proofs of all these results can be found in \cite{AIK06}. 

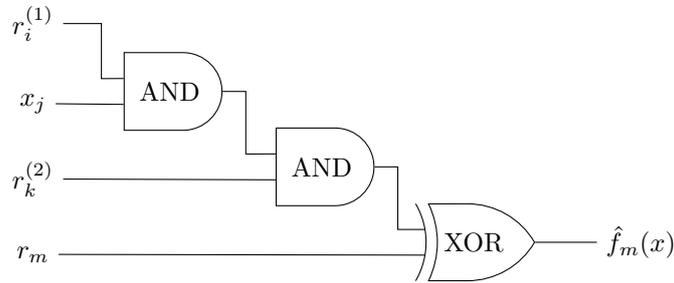
\begin{figure}[h!]
    \centering
\begin{tikzpicture}

    \node (r1) at (0, 3) {$r^{(1)}_i$};
    \node (xj) at (0, 1.94) {$x_j$};
    \node (r2) at (0, 0.94) {$r^{(2)}_k$};
    \node (rl) at (0, -0.06) {$r_m$};

    \node[and gate US, draw, rotate=0, logic gate inputs=nn] at (1.8,2.1) (and1) {AND};
    
    \draw (r1) -- (0.9,3)--(0.9,2.27)-- (and1.input 1);
    \draw (xj) -- (and1.input 2);
    
    \node[and gate US, draw, rotate=0, logic gate inputs=nn] at (3.8,1.1) (and2) {AND};
    
    \draw (and1.output) -- (2.8,2.1)--(2.8,1.27)-- (and2.input 1);
    \draw (r2) -- (and2.input 2);

    \node[xor gate US, draw, rotate=0, logic gate inputs=nn] at (5.8,0.1) (xor1) {XOR};
    
    \draw (and2.output) -- (4.8,1.1)--(4.8,0.27)-- (xor1.input 1);
    \draw (rl) -- (xor1.input 2);
    
    \node (fhat) at (8,0.1) {$\hat{f}_m(x)$};
    
    \draw (xor1.output) -- (fhat);
\end{tikzpicture}
    \caption{The circuit for evaluating each entry in the randomized encoding $\hat{f}$. The circuit shown here computes the $m$'th entry, with $m \leq k$, consisting of the monomial $r_i^{(1)} x_j r_k^{(2)}$ xored with $r_m$. For the entries with $m > k$, note that a single XOR gate is required.}
    \label{fig:classical-circuit}
\end{figure}

\section{Reconstruction of randomness}
\label{app:rrc}
In our first constant quantum-depth proof of quantumness, the prover is instructed to evaluate a randomized encoding of a TCF. The verifier must still be able to use the trapdoor in order to invert an output of the randomized encoding. As mentioned in Subsection~\ref{subsect:soundness}, this is true provided the encoding satisfies the randomness reconstruction property. Here we prove this fact for the construction of~\cite{AIK06}.

\begin{proof}[Proof of Lemma~\ref{lem:rrc}]
We would like to show that given an instance of $\hat{f}_{i,j}(x, r^{(1)},r^{(2)},r,r^\prime)$, as shown in Equation~\ref{appeq:hatf}, as well as $x$, it is possible to efficiently recover the randomness $r^{(1)},r^{(2)},r,r^\prime$. First note that if the terms $T_k$ were known as well as $r^{(1)},r^{(2)}$, it is straightforward to recover $r$ and $r'$. We will therefore focus on that case.
From Equation~\ref{appeq:hatf} it is possible to efficiently compute the result of Equation~\ref{appeq:tildef}, since $\hat{f}$ is a randomized encoding of $\bar{f}$: simply xor all the terms in Equation~\ref{appeq:hatf}. We will then focus on randomness reconstruction for $\bar{f}$ as that will then yield randomness reconstruction for $\hat{f}$.

Denote as $M = \tilde{f}(x, r^{(1)}, r^{(2)}) = R^{(1)} L(x) R^{(2)}$. Given $M$ and $x$ we wish to recover $r^{(1)}, r^{(2)}$. This boils down to solving a specific quadratic system of equations. To see why, take $l=4$ as an example,
\begin{equation*}
M = R^{(1)}L(x)R^{(2)}=
 \left[
 \begin{matrix}
   1 & r^{(1)}_1 & r^{(1)}_3 \\
   0 & 1 & r^{(1)}_2 \\
   0 & 0 & 1
  \end{matrix}
  \right]
   \left[
 \begin{matrix}
   x_1 & x_4 & x_6 \\
   -1 & x_2 & x_5 \\
   0 & -1 & x_3
  \end{matrix}
  \right]
   \left[
 \begin{matrix}
   1 & 0 & r^{(2)}_1 \\
   0 & 1 & r^{(2)}_2 \\
   0 & 0 & 1
  \end{matrix}
  \right]
  \end{equation*}
  
  \begin{equation*}
M =
\left[
 \begin{matrix}
 x_1 - r^{(1)}_1 & r^{(1)}_1 x_2 - r^{(1)}_3 + x_4 & r^{(2)}_1 (x_1 - r^{(1)}_1) + r^{(2)}_2 (r^{(1)}_1 x_2 - r^{(1)}_3 + x_4) + r^{(1)}_3 x_3 + r^{(1)}_1 x_5 + x_6\\
-1 & x_2 - r^{(1)}_2 & r^{(2)}_2 (x_2 - r^{(1)}_2) + r^{(1)}_2 x_3 - r^{(2)}_1 + x_5 \\
0 & -1 & x_3 - r^{(2)}_2
  \end{matrix}
  \right]
\end{equation*}
Note that the main diagonal of $M$ is just a linear system of $3$ equations with $3$ unknowns. It can therefore be solved, yielding $r_1^{(1)}$, $r_2^{(1)}$ and $r_2^{(2)}$. 
Plugging these values into the second diagonal (the one above the main diagonal), yields another system of linear equations with an equal number of unknowns. By repeating the process and solving all of these systems, all bits in $r^{(1)}$ and $r^{(2)}$ are recovered. 

We now show that this strategy works for arbitrary $l$.
Start by observing that:
\begin{equation*}
\left\{
             \begin{array}{lr}
             R^{(1)}_{i,j}=1, & i=j \\
             R^{(1)}_{i,j}=0, & i>j,
             \end{array}
\right.
\end{equation*}

\begin{equation*}
\left\{
             \begin{array}{lr}
             L_{i,j}=-1, & i=j+1 \\
             L_{i,j}=0, & i>j+1,
             \end{array}
\right.
\end{equation*}

\begin{equation*}
\left\{
             \begin{array}{lr}
             R^{(2)}_{i,j}=1, & i=j \\
             R^{(2)}_{i,j}=0, & (i>j) \vee (i<j<l-2).
             \end{array}
\right.
\end{equation*}

\noindent The entries of $M$ can then be expressed as:
\begin{equation*}
    M_{i,j}=\sum_{k_1,k_2}R^{(1)}_{i,k_1}L_{k_1,k_2}R^{(2)}_{k_2,j}.
\end{equation*}
Consider the entries on the main diagonal, excluding the last element:
\begin{equation*}
    M_{i,i}=\sum_{k_1}R^{(1)}_{i,k_1}L_{k_1,i}
    =R^{(1)}_{i,i}L_{i,i}+\sum_{k_1>i}R^{(1)}_{i,k_1}L_{k_1,i}
    =L_{i,i}-R^{(1)}_{i,i+1}
\end{equation*}
with $i < l - 2$ and where $R^{(1)}_{i,i+1}$ are the elements of the second diagonal of $R^{(1)}$ and the $L_{i,i}$'s are already known (as they only involve entries of $x$). This gives us a simple linear system which we can solve to recover the $R^{(1)}_{i,i+1}$ values.
Then, for $i=l-2$:

\begin{equation*}
    M_{l-2,l-2}=\sum_{k_2}L_{l-2,k_2}R^{(2)}_{k_2,l-2}
    =L_{l-2,l-3}R^{(2)}_{l-3,l-2}+L_{l-2,l-2}R^{(2)}_{l-2,l-2}
    =R^{(2)}_{l-3,l-2} + L_{l-2,l-2}.
\end{equation*}
From this we also recover $R^{(2)}_{l-3,l-2}$, i.e. the last entry in $r^{(2)}$. 
Note that the unknowns here consisted of the entries in the second diagonal of $R^{(1)}$ and the last element of $r^{(2)}$. This matches the number of equations and so all values could be recovered.

We now claim that the $k$'th diagonal of $M$ is a linear system which depends \emph{only} on the $k+1$ diagonal of $R^{(1)}$ and the $k$'th last element of $r^{(2)}$ given the solutions to the previous $k-1$ diagonals of $M$. 
Writing out the elements, we have:
\begin{equation*}
    M_{i,i+j} = \sum_{k_1,k_2}R^{(1)}_{i,k_1}L_{k_1,k_2}R^{(2)}_{k_2,i+j}.
\end{equation*}
with $j = k - 1$. For $i+j\neq l-2$:
\begin{equation*}
    M_{i,i+j} = \sum_{k_1}R^{(1)}_{i,k_1}L_{k_1,i+j}
    = -R^{(1)}_{i,i+j+1} + L_{i,i+j} + \sum_{i<k_1<i+j+1}R^{(1)}_{i,k_1}L_{k_1,i+j}
\end{equation*}
where the first term is from the $(k+1)$'th diagonal of $R^{(1)}$ and the remaining terms are known from solving the equations for the previous diagonals. Thus, we have a linear system, which we can solve, with unknowns comprising the elements of the $(k+1)$'th diagonal of $R^{(1)}$.

For $i+j = l-2$:
\begin{equation*}
    M_{l-2-j,l-2} = \sum_{k_2=l-2-j-1}^{l-2}L_{l-2-j,k_2}R^{(2)}_{k_2,l-2}
    +\sum_{k_1=l-2-j+1}^{l-2}R^{(1)}_{l-2-j,k_1} \sum_{k_2=k_1-1}^{l-2}L_{k_1,k_2}R^{(2)}_{k_2,l-2}.
\end{equation*}
The first term is a linear combination of the last $k+1$ entries of $R^{(2)}$, i.e. the last $k$ elements of $r^{(2)}$, and only the $k$'th element is unknown. The remaining terms are known from solving the systems corresponding to the previous diagonals.

We can therefore proceed in this fashion, starting from the first diagonal of $M$ and going upwards solving all systems of linear equations and thus recovering all values of $r^{(1)}$ and $r^{(2)}$. This procedure is clearly efficient and we have shown that it is also correct.
To conclude the proof, we also need to make sure that there is a unique solution to the system. This is guaranteed by the unique randomness property of the randomized encoding (Theorem~\ref{thm:uniquerand}).
\end{proof}

\end{document}